\definecolor{airforceblue}{rgb}{0.36, 0.54, 0.66}
\definecolor{antiquefuchsia}{rgb}{0.57, 0.36, 0.51}
\definecolor{ballblue}{rgb}{0.13, 0.67, 0.8}
\definecolor{brown(web)}{rgb}{0.65, 0.16, 0.16}
\definecolor{brown(traditional)}{rgb}{0.59, 0.29, 0.0}
\newcommand{\jl}[1]{\textcolor{black}{#1}}
\newcommand{\jlc}[1]{\textcolor{black}{#1}}
\newcommand{\ay}[1]{\textcolor{black}{#1}}
\newcommand{\seth}[1]{\textcolor{black}{#1}}			
\newcommand{\jld}[1]{\textcolor{black}{#1}}
\newcommand{\ayd}[1]{\textcolor{black}{#1}}
\newtheorem{theorem}{Theorem}
\newtheorem*{theorem1}{Silverman-Brown Limit Law \cite{lao}}
\newtheorem{proposition}{Proposition}
\title{On collisions times of `\jld{self-}sorting' interacting particles in one-dimension with random initial positions and velocities}
\author[1,2]{Joceline Lega}
\author[1]{Sunder Sethuraman}
\author[1,2]{Alexander L Young}
\affil[1]{Department of Mathematics, University of Arizona}
\affil[2]{Program in Applied Mathematics, University of Arizona}
\date{\today}
\begin{document}
\maketitle
\thispagestyle{empty}
\begin{abstract}
We investigate a one-dimensional system of $N$ particles, initially distributed with random positions and velocities, interacting through binary collisions.  The collision rule is such that there is a time after which the $N$ particles do not interact and \jld{become} sorted according to their velocities.
When the collisions are elastic, we derive asymptotic distributions for the final collision time \jld{of} a single particle and the final collision time of the system as the number of particles approaches infinity, under different assumptions \jld{for} the initial distributions of the particles' positions and velocities.  \ay{\jld{For comparison}, a numerical investigation is carried out to determine how \jld{a non-}elastic collision rule, which conserves neither momentum nor energy, affects 
the \jld{median} collision time of a particle and the \jld{median} final collision time of the system.}
\end{abstract}
\section{Introduction}

\seth{\ay{\jld{We c}\seth{onsider} a collection of $N$ `identical' point-particles, with equal mass, moving \seth{on $\mathbb{R}$} and interacting through \jld{a linear} binary collision \ayd{rule} \jld{given by}
\begin{equation}
\begin{split}
v_i' &= (1-\epsilon)v_j \\ 
v_j' &= (1-\epsilon)v_i . \\ 
\end{split}
\label{eq:coll_rule}
\end{equation}
Here $v_i,v_j$ ($v_i',v_j'$) are the pre (post) collision velocities of particles $i$ and $j$ \jld{and the} parameter  $\epsilon < 1$ controls conservation and dissipation of momentum and energy. Between collisions, particles undergo free flight.}} \jld{When $\epsilon=0$, \ayd{a collision is elastic, preserving momentum and energy, and corresponds to an exchange of velocity between the two colliding particles.}} \ayd{For $\epsilon \ne 0$, we refer to the collisions as non-elastic since Eq. (\ref{eq:coll_rule}) differs from the traditional construction of inelastic collisions which conserve momentum but not energy. In Section \ref{sec:inelastic_intro}, we discuss the motivations for choosing Eq. (\ref{eq:coll_rule}) and its connection to standard inelastic collisions through a generalized linear collision framework. }When
\seth{$\epsilon <0$, \jld{collisions} generate energy, and when $\epsilon \in (0,1)$ \jld{they} dissipate energy. The case $\epsilon =1$ is degenerate, in the sense that particles stop their motion and remain `\jld{frozen}' together. 
 For any $\epsilon < 1$, as we show in Sections \ref{sec:sorting} and \ref{sec:inelastic_intro}, each particle experiences a final collision, as eventually the velocities of \jld{all of} the particles \jld{become} sorted. In the above context, it is natural to ask (1) when a particle will experience its final collision, (2) when the final collision of the entire collection will occur, and (3) how these statistics depend on the initial position and velocity distributions, the number of particles in the system, and $\epsilon.$  \jld{The purpose of this article is to address these questions, analytically in the case of elastic collisions ($\epsilon = 0$), and numerically for non-elastic collisions ($\epsilon \ne 0$).}}

\jld{The motivations for this work are twofold. First, }\seth{in a recent \jld{article} by Bardos et al \cite{levermore}, the long time limit of solutions to the Boltzmann equation over $\mathbb{R}^d$ was investigated.  In the case of particles interacting through \ay{hard} collisions, it was found that in unbounded domains, as opposed to bounded domains where the phenomenon is different, the dispersive effects of particle free flight\jld{s} were sufficient to quench the dissipative effects of collisions\jld{, thereby} preventing the system from reaching \ay{a state of maximal entropy}.  \jld{Specifically, i}n terms of microscopic dynamics, one can envision a finite collection of particles in $\mathbb{R}^d$ with random initial positions and velocities.  As time grows, the particles will likely spread out and interact with each other until a time when no more collisions occur.  At this point, the system would be in \jld{a}  steady-state different from \seth{what} might arise if the particles were kept confined to a bounded domain in $\mathbb{R}^d$.  Therefore, from this view, understanding the final time of collision in the system is a natural question.}
\seth{\jlc{
\jld{Second,} microscopic descriptions in terms of non-overlapping particles have proven to be central to the modeling of diffusion in single-file systems (see e.g. \cite{Roedenbeck98} and references therein), single-lane traffic flow \cite{Helbing96}, and self-organization in one-dimensional systems of self-propelled particles \cite{Czirok99}. \jld{\seth{In this context,} the system \seth{we study} can be viewed as the microscopic equivalent of a one-dimensional gas of point particles, which interact through a generalized collision rule that is not necessarily subject to conservation of momentum or energy.} A related example of dynamics on one-dimensional lattices includes the Stirring exclusion process \cite{Ligget1985}. 
}}

\seth{Understanding the final collision time of a distinguished particle and the final collision time in \jld{a} system of $N$ particles are questions about maximal order statistics for certain functions of the initial positions and velocities.  Although we will assume the initial position and velocities of the particles are independent and identically distributed, when collisions are elastic, the collision times turn out to be an array of $\binom{N}{2}$ non-independent `exchangeable' random variables without a finite mean. In this setting, we are able to analyze the collision times rigorously.  When the collisions are \jld{non-}elastic, however, an analytic approach is more difficult. As such, we use molecular dynamics simulations to \jld{assess} the effects of \jld{non-}elastic collisions on the collision times.  In both \jld{non-}elastic and elastic collisions, to avoid degeneracies, we will assume that the distributions of the initial positions and velocities are continuous random variables.}

\seth{Informally, in \jld{the case of} elastic collisions one set of our main results is that the final collision time of a distinguished particle, in the system of $N$ particles, depends on the moment properties of the position random variable (Theorems \ref{FTC_particle1}, \ref{FTC_particle2}, \ref{FTC_particle3}).  That is, this final time scales with $N$ when the position r.v. has finite mean. It scales with $N^{1/\alpha}$ when the position r.v \jld{has} the form of a symmetric stable law with parameter $0<\alpha<1$, and it scales with $N\log(N)$ when the position r.v. is a Cauchy distribution. In \jld{both of} these cases, the limit distribution is a mixture of Fr\'echet distributions.}
\seth{Another set of our results concerns the final collision time \jld{$T^{(N)}$ for} the whole system of $N$ particles.  \jld{Here a}gain, this time depends on how many moments the position r.v. possesses.  In particular, when the \jld{latter} has at least a $3/2$-moment, \jld{$T^{(N)}$} scales with $\binom{N}{2}$. When the position r.v. \jld{has} the form of a stable symmetric law with parameter $0<\alpha<1$, \jld{$T^{(N)}$} scales \jld{with} $N^{2/\alpha}$, and when the position r.v. is a Cauchy distribution, \jld{$T^{(N)}$} scales with $N^2\log N$.  In each of these cases, the limit is of Fr\'echet type (Theorems \ref{FTC_system}, \ref{FTC_system2}, \ref{FTC_system3}).  However, we also show that when the position r.v. has a first moment, the sequence of final collision times for $N\geq 2$ is tight in the scale $\binom{N}{2}$ (Proposition \ref{prop:tight}). \jld{Moreover, s}ome numerical simulations are provided which indicate that the result of Theorem \ref{FTC_system} should extend to the case when the position r.v. has a mean without further restrictions.}

\seth{\jld{For non-elastic collisions, we present numerical results in the case where} the initial positions are standard normal distributions. Numerical simulations indicate that, unlike elastic collisions, the final collision \jld{time} of \jld{a distinguished} particle does not scale with $N$, \jld{and} the final collision time of the system \jld{does not} scale \jld{with} $N^2$. Instead, both times require an exponential correction depending on $N\epsilon$ ad the initial distribution of velocities.  An ansatz is discussed which offers a limited explanation to the observed changes.}

\seth{A\jld{n outcome} of this work \jld{is the formulation of} a `sorting model' through which the collision times under elastic interactions can be conveniently analyzed.  In this \ay{elastic interaction} framework, the $\binom{N}{2}$ collision times \jld{are understood} through methods for exchangeable arrays.  \ayd{Various types of these arrays have been considered in \cite{berman, Barbour_Eagleson,lao,silverman}.}  \ayd{From the perspective of colliding particles, this work is related to a} different collection of processes, where the initial positions are nonrandom, but the particles interact randomly, \ayd{which} \jld{was} considered in \cite{angel, Angel2012,Angel2007,Angel2009}.}

\jld{The rest of this article is organized as follows.} \seth{We now state precisely, in the next subsection, the setting and the main quantities of interest, recast the space-time dynamics of the $N$ particles in terms of a `sorting process', and formulate the questions studied \ayd{with respect to elastic collisions}.   In subsection \ref{sec:inelastic_intro}, we prove the existence of a final collision time under \jld{non-}elastic collisions ($\epsilon \ne 0$).  In Section \ref{sec:results}, we state the main theorems for elastic collisions.  In Section \ref{sec:sim}, we discuss numerical results for \jld{non-}elastic collisions.  Finally, the proofs of theorem 1, 2, 4, and 5 are given in Section \ref{sec:proof}.  As the proofs of theorems 3 and 6 are similar in structure to those of theorems 2 and 5, we have elected to include them in an appendix.}

\subsection{Elastic collisions on the line as a sorting process}
\label{sec:sorting}
\jl{Consider $N$ point particles \ay{with equal mass} and} initial positions $\{X_i\}_{i=1}^N$, \jl{which are assumed to} be independent, identically distributed random variables with density $f_x$. The initial velocities of the particles \jl{are denoted by} $\{V_i\}_{i=1}^N$ \jl{and are also assumed to} be independent, identically distributed random variables with continuous, bounded density $f_v$.  In particular, almost surely, no two particles have the same position or velocity.
%
Between collisions, particles undergo free flight.  

We start by proving that all collisions occur within a finite time $T^{(N)}$, as long as $N$ remains finite. 
\jld{Since an elastic collision corresponds to an exchange of velocities between the colliding particles, switching particle labels during collisions turns each labelled trajectory into a straight line.} \seth{ This point of view gives a \jl{simple} way of calculating all the collision times, past and present, in terms of the initial data only. \jl{Indeed, l}et \jld{$\ell_i(t) = X_i + tV_i$ denote the labelled trajectory of the $i$th particle, and} $\tau_{i,j}$ denote the \jl{time of} intercept of the paths $\ell_i(\cdot)$ and $\ell_j(\cdot)$ of particles $i$ and $j$.  Then, 
$$\tau_{i,j} = \frac{X_j-X_i}{V_i - V_j}$$
and the set of all the line intersection times, $\{ \tau_{i,j}: 1\le i \neq j \le N\}$ is in \jld{one-to-one} correspondence with the set of all collision times of the particles undergoing elastic collisions if we consider both positive and negative times.  We remark, since the distributions of the initial positions and velocity are continuous, these times are distinct almost surely.  Also, although these intersection times are not independent random variables, for example $\tau_{1,2} $ and $\tau_{1,3}$ both depend on $(X_1,V_1)$, the\jld{y} are `exchangeable' in \jld{the sense} that if $\pi$ is a permutation of $\{1,\ldots, N\}$, then $\{\tau_{i,j}: 1\leq i\neq j \leq N\} \mathop{=}\limits^d \{\tau_{\pi(i),\pi(j)}: 1\leq i \neq j\leq N\}$.}
\jld{The intersection times, $\tau_{i,j}$, are examples \jl{of} ratio distributions.
One can construct an integral representation of the distribution of such a ratio, although evaluating the integral in closed form is often difficult.  However, for example, if $\{X_i\}_{i=1}^N$ and $\{V_i\}_{i=1}^N$ are all iid $N(\mu,\sigma)$, then any $\tau_{i,j}$ follows a Cauchy distribution.  }

\seth{\jld{A}s the collection of collision times consists of $\binom{N}{2}$ terms, after the final (random) time $T\jld{^{(N)}}:=\max_{i\neq j}\tau_{i,j}$, there will be no more collisions between particles.  After this time, the labels of the lines have been sorted according to their velocities: \jld{i}n a vertical cross-section of the space time diagram, \jld{they are, from top to bottom,} in order from the largest to the smallest initial velocities.  \jld{Similarly,} after time $T^{(N)}$, all particles are arranged on the one-dimensional line in order of increasing speed, moving only by free flight since no further collisions can occur.
}

\seth{\jld{Let $t_i^{(N)}$ be the last time the $i$th particle interacts with any other particle.} Since the initial positions and velocities are independent and identically distributed, the random variables $\{t_i^{(N)}: 1\leq i\leq N\}$ are exchangeable, although not independent.  On the other hand, let $r_i^{(N)} := \max_{j: j\neq i} \tau_{i,j}$ denote the last time another line intersects $\ell_i(\cdot)$.  Then, as the last collision time of a particle must be the last intersection time of a line, we have $\{t_i^{(N)}: 1\leq i\leq N\} \mathop{=}\limits^d  \{r_i^{(N)}: 1\leq i\leq N\}$.  Moreover, as the initial positions and velocities are independent, $r^{(N)}_i$ equals any one of the $\{t_i^{(N)}: 1\leq i\leq N\}$ with equal probability.  In particular, 
$$P(r_i^{(N)} \leq x) = N^{-1}\sum_{j=1}^N P(t_j^{(N)}\leq x) = P(t_i^{(N)}\leq x),$$
 and so $t_i^{(N)}$ and $ r_i^{(N)}$ have the same distribution.  }
Also, 
$$T^{(N)} = \max_{1\leq i\leq N} r_i^{(N)}  = \max_{1\leq i\leq N}t_i^{(N)}.$$
\ay{Although not \jld{discussed} here, \jld{we remark that} it is also natural to consider, as an alternative, the minimum of the collision times. Given the symmetric distribution of $\tau_{i,j}$ it follows that $$\min_{1\le i < j \le N} \tau_{i,j} \mathop{=}\limits^d  -T^{(N)}.$$}

\subsection{\jld{Non-}elastic collisions}
\label{sec:inelastic_intro}
\ayd{Consider the more general, linear inelastic collision rule}
\begin{equation}
\begin{split}
v_i' &= (1-\epsilon)v_j \jld{+ \beta v_i,}\\
v_j' &= (1-\epsilon)v_i \jld{+ \beta v_j}.
\end{split}
\label{eq:coll_rule_general}
\end{equation}
Recall that $v_i,v_j$ ($v_i',v_j'$) are the pre (post) collision velocities of particles $i$ and $j$. \jld{The} parameters $\jld{\beta \ge\ } 0$ and $\epsilon < 1$ control conservation and dissipation of momentum and energy. \jld{In the case $\beta=\epsilon\ne0$, Eq. \eqref{eq:coll_rule_general} corresponds to inelastic collisions with coefficient of restitution $C_R = 1 - 2 \epsilon$}, which conserve momentum but not energy.
\jld{In what follows we choose $\beta = 0$ to reduce} the computational cost of \jld{the simulations, thereby making} the large numerical study from Sec. \ref{sec:sim} possible.  The simplicity of the collision rule \jld{obtained when $\beta = 0$ allows} for some analytic investigation \ayd{as well}. 
%
%
 \jld{Moreover, s}ince reversing the direction of time when a collision occurs amounts to changing $\epsilon$ into \ay{$1- \frac{1}{1-\epsilon} \simeq -\epsilon$} at leading order when $\epsilon$ is small, the distributions of collision times \jld{are, for small values of $\epsilon$, expected} to be nearly symmetric under the transformation $\cal T$ sending $t \to -t$ and $\epsilon \to -\epsilon$\jld{. This is confirmed by the simulations of Fig. \ref{fig:tau_diss}, which interestingly indicate that the symmetry under $\cal T$ is qualitatively observed, even for values of $\epsilon$ equal to 0.1.} \jld{We also note that, t}\ay{he transformation $\cal T$ can be used to understand properties of the minimum order statistics of the collision times through study of $t_i^{(N)}$ and $T^{(N)}$.} \jld{As already mentioned,} the pairwise collision times display a system size dependency which \jld{is} not present for elastic collisions. \jld{Together with the} broken symmetry, \jld{this effect} can be viewed as \jld{resulting from} the cooling (heating) of the system in the case of energy dissipating (generating) collisions\jld{: a} greater number of particles results in more collisions\jld{, which is compounded into increased} cooling (or heating).  
\begin{figure}[h!]
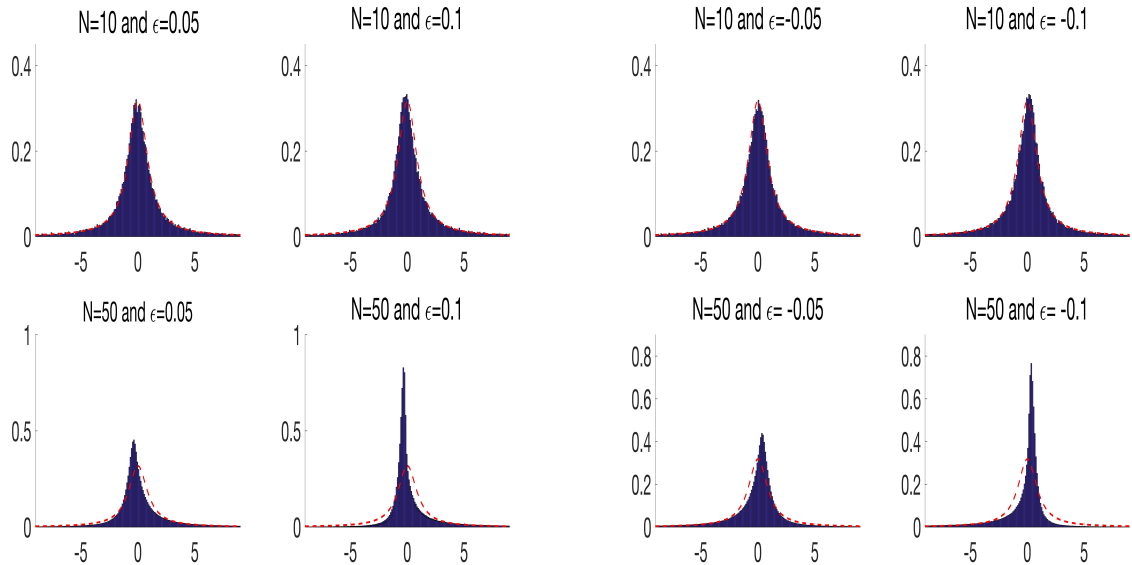

\begin{center}
\includegraphics[width = 3.2 in,height=3.2 in]{Pairwise_collisions_inelastic_diss.pdf}  \includegraphics[width = 3.2 in,height=3.2in]{Pairwise_collisions_inelastic_gen.pdf}
\caption{Histograms of \jl{collision} times \jl{$\tau_{i,j}$ for positive (left) and negative (right) values of $\epsilon$ in \jld{the non-}elastic collision rule (\ref{eq:coll_rule})} \jld{with $\beta = 0$}. The initial positions and velocities are sampled from a standard normal distribution.  \jl{For comparison, the densities} of collision times for elastic collisions \jl{are} shown in \ay{dashed} red. In this case, the elastic collision times follow a Cauchy distribution.}
\label{fig:tau_diss}
\end{center}
\end{figure}

\ay{For elastic collisions, the one-to-one correspondence of path intersection times and collision times made the argument for the existence of a final collision time straightforward.  For \jld{non-}elastic collisions, the existence of a final collision time is less \jld{obvious}.  Nonetheless, \jld{the proposition below indicates} there is almost surely a final collision time for a system of particles interacting with the collision rule from Eq. \eqref{eq:coll_rule} \jld{when} $\epsilon <1$.}

\begin{proposition}\label{prop:FTC_inelastic}
Suppose $N$ point particles with equal mass have initial positions $\{X_i\}_{i=1}^N$ and velocities $\{V_i\}_{i=1}^N$ \jld{that} are continuous random variables, so that $X_i \ne X_j$, $V_i\ne V_j$ for $i\ne j$ a.s.  If the particle\jld{s} interact \jld{through} the collision rule given in Eq. \eqref{eq:coll_rule} with $\epsilon \ne 0,\  \epsilon < 1 $, there is a final collision time almost surely.
\end{proposition}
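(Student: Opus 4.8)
The plan is to strip the problem down to a purely combinatorial statement about a single group of co-moving particles, via two reductions. First, I would note that a collision preserves the left--right order of the two colliding particles: if the left one has velocity $a$ and the right one $b$, a collision requires $a>b$, and the post-collision velocities are $(1-\epsilon)b$ on the left and $(1-\epsilon)a$ on the right, which still satisfy $(1-\epsilon)b<(1-\epsilon)a$ because $1-\epsilon>0$, so the particles separate and keep their order. Consequently the spatial order of all $N$ particles is frozen for all time, and we may label them $1,\dots,N$ from left to right permanently. In these labels collisions are nearest-neighbour events: $p$ and $p+1$ collide when $w_p>w_{p+1}$, replacing $(w_p,w_{p+1})$ by $\bigl((1-\epsilon)w_{p+1},(1-\epsilon)w_p\bigr)$. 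The claim becomes: this discrete-event dynamics performs only finitely many events almost surely.

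Second, I would use the sign structure. Since a collision only transports the two velocities (rescaled by $1-\epsilon>0$) between the colliding particles, every velocity ever present is of the form $(1-\epsilon)^g V_i$ for some $i$ and integer $g\ge 0$, so its sign is that of $V_i$ and never changes. Let $S$ count the ordered position-pairs $(p,q)$, $p<q$, carrying a positive velocity at $p$ and a negative one at $q$. A same-sign collision leaves $S$ fixed; a positive-left/negative-right collision decreases $S$ by exactly one; and a negative-left/positive-right pair is never approaching, so never collides. Thus $S$ is a non-increasing, nonnegative integer, at most $\binom N2$, so after finitely many collisions $S\equiv 0$: all negatively-moving particles lie permanently to the left of all positively-moving ones, the two blocks never meet again, and the system splits into two independent subsystems, each with fewer particles unless all $V_i$ already had the same sign.

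It then suffices to prove finiteness for a single-sign subsystem; taking all velocities positive is enough (the all-negative case follows from the reflection $x\mapsto-x,\ v\mapsto-v$), and I would first treat the dissipative range $\epsilon\in(0,1)$. Suppose there were infinitely many collisions. Using the order constraint $y_1\le\cdots\le y_N$ one can show the set of particles colliding infinitely often is a left-justified block (if $p$ collides infinitely often but $p-1$ only finitely often, then $w_{p-1}$ is eventually a positive constant while $w_p\to0$, forcing $y_p-y_{p-1}$ to decrease at a rate bounded away from zero forever --- impossible), so we may restrict to a subsystem where every particle collides infinitely often. There the total momentum $M=\sum w_p>0$ drops by $\epsilon(w_p+w_{p+1})>0$ at each collision, hence converges, so the colliding velocities --- and then all velocities --- tend to $0$; combining this with the order constraint (e.g. the leftmost particle's velocity strictly decreases at each of its infinitely many collisions, so decays geometrically, which cannot be reconciled with particles $1$ and $2$ staying ordered) yields a contradiction. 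A cleaner packaging is to pass to ``phase'' coordinates $\phi_i=\log_{1-\epsilon}V_i+(\text{number of collisions of token }i)$: a single-sign subsystem then becomes the abstract rewriting ``choose an adjacent ascending pair, swap it, and add $1$ to both entries,'' and one must show this is strongly normalizing.

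The crux --- and the expected main obstacle --- is exactly this last finiteness statement for a single-sign block: it must rule out both infinitely many collisions over infinite time and a Zeno-type accumulation at a finite time (the collision rule is not Galilean invariant, so one cannot pass to a co-moving frame and invoke conservation of momentum), and, unlike the elastic case, no inversion count decreases at every collision. The energy-generating range $\epsilon<0$ should reduce to the dissipative one: reversing time sends the parameter $\epsilon$ to $1-\tfrac{1}{1-\epsilon}$, which maps $(-\infty,0)$ into $(0,1)$; equivalently, in the phase picture the $\epsilon<0$ dynamics is the spatial mirror image of an $\epsilon\in(0,1)$ dynamics, so single-sign finiteness transfers verbatim. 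Finally, ``almost surely'' enters only through genericity --- distinct initial positions and velocities, no coincidental equalities among the values $(1-\epsilon)^gV_i$, and no simultaneous collisions --- all of which hold with probability one since $f_x$ and $f_v$ are continuous.
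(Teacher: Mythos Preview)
Your two reductions are correct and pleasant: order preservation follows exactly as you say (so collisions are nearest-neighbour in a fixed labelling), and the sign-inversion count $S$ is a genuine non-increasing integer Lyapunov function that splits the system into two single-sign blocks after finitely many mixed collisions. These steps are sound.

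The gap is precisely where you place it, and it is real. For an all-positive block with $\epsilon\in(0,1)$ you sketch two ideas, neither of which closes. The ``left-justified'' step assumes that $w_p\to 0$ while $w_{p-1}\equiv c>0$ forces $y_p-y_{p-1}\to-\infty$; this uses that the infinitely many collisions of $p$ are spread over infinite time, which is exactly the Zeno scenario you still need to exclude. The momentum argument shows $\sum_k \epsilon\bigl(w_{p_k}+w_{p_k+1}\bigr)<\infty$ over the collision sequence, hence the colliding pair's velocities tend to $0$ along that sequence, but this does not by itself contradict anything---the non-colliding velocities need not tend to $0$, and even if they did, a cluster of particles slowing to a halt at a finite limit point is not yet absurd. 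Your own phrasing (``one must show this is strongly normalizing'') concedes that the combinatorial core is unproved. Separately, the time-reversal reduction for $\epsilon<0$ is not legitimate as stated: reversing a putative forward trajectory with infinitely many collisions does not yield a well-posed initial-value problem for the parameter $1-\tfrac{1}{1-\epsilon}$, since there is no finite ``initial'' state to start from if collisions are unbounded in time, and no well-defined state at the accumulation time if they are not.

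The paper's argument avoids all of this by working in a different picture. It relabels the two particles at every collision, so that each labelled path $\ell_i$ is piecewise linear with slope multiplied by $(1-\epsilon)$ at every intersection; collisions are then exactly path crossings. If two paths $\ell_1,\ell_2$ crossed infinitely often, one picks two consecutive crossing times $s_1<s_2$ with $\ell_1>\ell_2$ on $(s_1,s_2)$. Because the entry and exit slope inequalities for $\ell_1,\ell_2$ are reversed, and every path that crosses both $\ell_1$ and $\ell_2$ in $(s_1,s_2)$ contributes the same multiplicative factor to each, some third path $\ell_3$ must cross $\ell_1$ at least twice inside $(s_1,s_2)$ without touching $\ell_2$. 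Iterating produces a strictly nested sequence of such paths, hence more than $N$ distinct paths---a contradiction. This argument uses only $(1-\epsilon)^k>0$ and the a.s.\ absence of coincidences $(1-\epsilon)^kV_i=(1-\epsilon)^mV_j$, so it treats $\epsilon\in(0,1)$ and $\epsilon<0$ simultaneously and is insensitive to whether the hypothetical infinite collision sequence accumulates at a finite time.
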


\begin{proof}[Proof 
]
Suppose at the time of a collision, in addition to the change in velocity, the labels of particles are also switched.  Unlike \ayd{the case of} elastic collisions, the path of a particle \jld{in space-time} is not a straight line.  Instead, individual particles follow piecewise linear trajectories where the slope of \jld{each segment} changes by a factor of $1-\epsilon$ \jld{after each \ayd{intersection with another path}}.   Again, let $\ell_i(t)$ denote the position of the $i$th particle and $\tau_{i,\star_1}<\tau_{i,\star_2}<\dots$ denote the times when the path of particle $i$ intercepts the path of another particle. Then 
\begin{equation*}
\ell_i(t) = 
\begin{cases} 
X_i +t V_i & 0<t \le \tau_{i,\star_1} \\
(X_i+\tau_{i,\star_1} V_1) + (t-\tau_{i,\star_1}) (1-\epsilon) V_i & \tau_{i,\star_1} < t \le \tau_{i,\star_2} \\
\vdots 
\end{cases}
\end{equation*}

As was the case for elastic collisions, the collection of intersection times of the trajectories of particles (now piecewise linear) is in one-to-one correspondence with the collection of collision times.
\jld{Therefore,} for a system to have infinitely many collision times, there must be two paths which intersect infinitely many times.  \ay{Since the initial velocities are continuous random variables, it follows that for all $1\le i < j \le N$ and $k,m \in \{0,1,2,\dots\}$, $(1-\epsilon)^k V_i \neq (1-\epsilon)^m V_j$ almost surely  which ensures that any two paths which intersect do so by crossing one another a.s. }

We now assume the paths $\ell_1$ and $\ell_2$ intersect infinitely often and derive a contradiction. Since the paths cross, we may choose two successive intersection times, $s_1^{(2)}<s_2^{(2)}$, such that $\ell_1(t) > \ell_2(t)$ for $t\in (s_1^{(2)},s_2^{(2)})=\mathbb{S}^{(2)}.$  It may be the case that a third path also intersects $\ell_1$ and $\ell_2$ at time $s_1^{(2)}$ or $s_2^{(2)}$.  However, we may relabel the paths so that a ternary intersection does not alter the velocities of paths $\ell_1$ and $\ell_2$ on the open interval $\mathbb{S}^{(2)}.$

For example, suppose $\ell_3$ also intersects with $\ell_1$ and $\ell_2$ at $s_1^{(2)}$ or $s_2^{(2)}$.  There are two possibilities.  Firstly, $\ell_3$ does not intersect $\ell_1$ or $\ell_2$ on the interval $(s_1^{(2)},s_2^{(2)}).$  Thus, $\ell_3$ will not alter the velocities of $\ell_1,\ell_2$ on this interval and so will not influence the time, $s_2^{(2)}$, when $\ell_1$ and $\ell_2$ intersect again.  

Secondly, $\ell_3$ could intersect either $\ell_1$ or $\ell_2$ at a second time $s\in (s_1^{(2)},s_2^{(2)}).$  For instance, suppose $\ell_3$ intersects with $\ell_1$ at times $s$ and $s_2^{(2)}$, and that $\ell_3$ does not intersect $\ell_2$ on the interval $(s,s_2^{(2)}).$ We can then relabel the paths and times so that, $\ell_1$ and $\ell_2$ (formerly $\ell_1$ and $\ell_3$), intersect at successive times $s_1^{(2)}$ and $s_2^{(2)}$ (formerly $s$ and $s_2^{(2)}$).  Then on the interval $(s_1^{(2)},s_2^{(2)})$, the path $\ell_3$ (formerly $\ell_2$) intersects with neither $\ell_1$ nor $\ell_3$.  Other possibilities are handled similarly. As a result of this choice, on the closed interval $\overline{\mathbb{S}^{(2)}}$, any path which crosses both $\ell_1$ and $\ell_2$ does so an equal number of times.

Let us return to case where $\ell_1$ and $\ell_2$ intersect at successive times $s_2^{(1)}$ and $s_2^{(2)}$ and $\ell_1(t)>\ell_2(t)$ for $t\in\mathbb{S}^{(2)}$.  Furthermore, assume these paths are chosen so that there is no other path which intersects both $\ell_1$ or $\ell_2$ at $s_1^{(2)}$ or $s_2^{(2)}$ and either $\ell_1$ or $\ell_2$ at another time in the interval $\mathbb{S}^{(2)}.$  Let
$$
u_i^+ = \lim_{h\to 0^+}\frac{\ell_i(s_1^{(2)}+h)-\ell_i(s_1^{(2)})}{h}, \hspace{2mm} i=1,2\\
$$
be the velocities of particles $1$ and $2$ immediately after their collision at $s_1^{(2)}.$  Let 
$$
U_i^- = \lim_{h\to 0^-}\frac{\ell_i(s_2^{(2)}+h)-\ell_i(s_2^{(2)})}{h}, \hspace{2mm} i=1,2\\
$$
be the velocities of particles $1$ and $2$ immediately before their collision at $s_2^{(2)}.$  Since $\ell_1(t) > \ell_2(t)$ for $t\in \mathbb{S}^{(2)}$ and the paths cross a.s., it follows that 
\begin{equation}
u_1^+>u_2^+ \hspace{2 mm} \text{ and } \hspace{2 mm} U_1^- < U_2^- \hspace{2mm} \text{ a.s}.
\end{equation}  We note that the velocities of both paths $1$ and $2$ must be of the same sign since $U_i^-$ is related to $u_i^+$ by the equation $U_i^-=(1-\epsilon)^{k_i}u_i^+$ where $(1-\epsilon)^{k_1}>0$ and $k_i$ is the number of additional path intersections of $i$ during $\mathbb{S}^{(2)}$.  

Consider the path intersections involving $\ell_1$ on the interval $\mathbb{S}^{(2)}$ (Fig. \ref{FIG:Path_Intersections_Diagram}).  There are two possibilities.  (1) A path, (shown as a dotted, red line in Fig. \ref{FIG:Path_Intersections_Diagram}), crosses both $\ell_1$ and $\ell_2$ (the dashed, blue lines in Fig. \ref{FIG:Path_Intersections_Diagram})  $k$ times, and alters the velocities of both particles by a multiplicative factor of $(1-\epsilon)^k.$ 
Alternatively, (2) \jld{a path} may only intersect $\ell_1$ leaving the velocity of particle 2 unchanged (the solid black line in Fig. \ref{FIG:Path_Intersections_Diagram}).

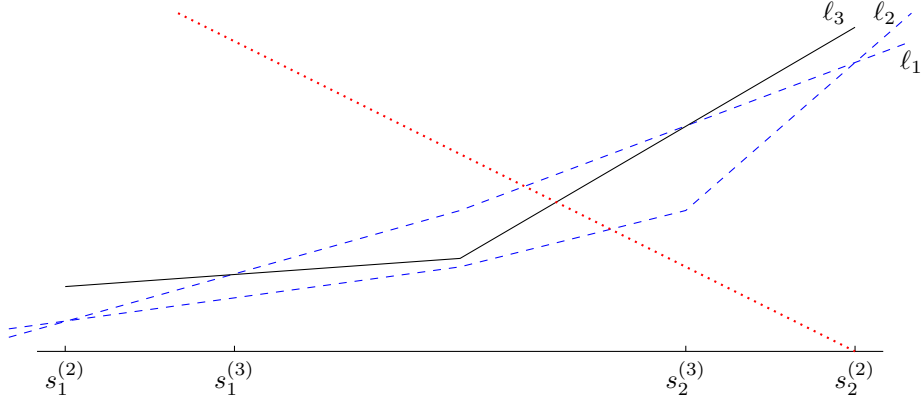
\begin{figure}[h!]
\begin{center}
\begin{tikzpicture}[scale=0.75]
\def\stwo{7};
\draw[dashed] (-\stwo,-3) -- (-\stwo,-2.9); 
\draw[dashed] (\stwo,-3) -- (\stwo,-2.9);
\draw (-\stwo,-3) node[anchor=north] {$s_1^{(2)}$};
\draw (\stwo,-3) node[anchor=north] {$s_2^{(2)}$};

\def\sthree{4};
\draw[dashed] (-\sthree,-3) -- (-\sthree,-2.9); 
\draw[dashed] (\sthree,-3) -- (\sthree,-2.9);
\draw (-\sthree,-3) node[anchor=north] {$s_1^{(3)}$};
\draw (\sthree,-3) node[anchor=north] {$s_2^{(3)}$};

\draw[solid,black] (-7.5,-3) --(7.5,-3);


\draw[solid, blue,dashed] (-\stwo-1,-2.75) -- (0,-0.5) -- (\stwo+1,2.5);
\draw (\stwo+1,2.5) node[anchor=north] {$\ell_1$};

\draw[solid, blue,dashed] (-\stwo-1,-2.6) -- (0,-1.5) --  (4,-0.5) -- (\stwo+1,3);
\draw (\stwo+0.9,3) node[anchor=east] {$\ell_2$};

\draw[solid, black] (-7,-1.85) -- (0,-1.35) -- (7,2.75);
\draw (7,3) node[anchor=east] {$\ell_3$};


\draw[red,dotted,thick] (-5,3) -- (7,-3);
\end{tikzpicture}
\caption{Paths $\ell_1$ and $\ell_2$ are shown as dashed, blue lines. The dotted red line serves as an example of a path which intersects both $\ell_1$ and $\ell_2$.  A path which intersects $\ell_1$ twice without hitting $\ell_k$ (shown as a solid black line) implies the existence of another `nested' path, $\ell_{k+1}$, also crossing $\ell_1$ twice but between the times, $s_1^{(k)}<s_2^{(k)}$, when $\ell_{k}$ hits $\ell_1$.  
}
\label{FIG:Path_Intersections_Diagram}
\end{center}
\end{figure}

If only $k_1$ intersections of the first type occur, then $$U_1^-=u_1^+(1-\epsilon )^{k_1} > u_2^+(1-\epsilon )^{k_1}=U_2^-$$ which contradicts the assumption that $U_1^- < U_2^-.$  \ayd{Indeed, $\ell_1$ and $\ell_2$ must experience a different number of intersections, $k_1$ and $k_2$ respectively, so that $$U_1^- = (1-\epsilon)^{k_1}u_1^+ < (1-\epsilon)^{k_2} u_2^+ = U_2^-.$$ The details of the ordering of $k_1$ and $k_2$ depend on both the sign of $u_1^+,u_2^+$ and on the sign of $\epsilon.$ However, we may assume without loss of generality that $k_1 > k_2 \ge 0.$}

Thus, there must be a path which intersects $\ell_1$ at least twice without intersecting $\ell_2$.  Call it $\ell_3$. Again, we may choose successive intersection times, $s_1^{(3)} < s_2^{(3)}$ in $\mathbb{S}^{(2)},$ such that $\ell_1(t)>\ell_3(t)$ and repeat the preceding argument (Fig. \ref{FIG:Path_Intersections_Diagram}).  

As such, the existence of a $k$th path with successive intersections of $\ell_1$ at times $s_1^{(k)}<s_2^{(k)}$ then implies the existence of a $(k+1)$th path which has successive intersections with $\ell_1$ on a subinterval $\mathbb{S}^{(k+1)}=(s_1^{(k+1)},s_2^{(k+1)})\subset (s_1^{(k)},s_2^{(k)})=\mathbb{S}^{(k)}$.  By construction, $\ell_2,\dots,\ell_k$ cannot collide with $\ell_1$ on $\mathbb{S}^{(k+1)}$.  Through induction we reach a contradiction as this requires an $(N+1)$th path in a system with only $N$ paths.

\end{proof}

 \jl{Like in the system with elastic collisions discussed above, collisions stop once particles become sorted in order of increasing velocity. However, the distribution of velocities of the particles \ay{\jld{is} no longer independent of collisions, but instead} depends on the number of collisions \jld{that} each particle experiences.  }

\section{Results for Elastic Collisions}\label{sec:results}
\subsection{Final Collision Time for a Single Particle}
The first theorem concerns the setting where $E|X_1|<\infty$.  In this case, the \ay{initial} \jl{position of the} center of mass of the system of particles converges to a constant value a.s. as $N$ tends towards infinity.
\begin{theorem}\label{FTC_particle1}
Let $t_i^{(N)}=\max_{i\ne j} \tau_{i,j}$ be the final path intersection time for a single particle in a system of $N$ particles with iid initial position $\{X_i\}_{i=1}^N$ \jl{with density $f_x$} and iid initial velocities $\{V_i\}\jld{_{i=1}^N}$ \jl{with density $f_v$}. Suppose $E|X_1| < \infty$ \jl{and that} $f_v$ is continuous and bounded. \jl{Then, f}or $\mu > 0$,
$$\lim_{N\to \infty} P\bigg(\frac{t_i^{(N)}}{N} < \mu\bigg) = \int_{\mathbb{R}^2}f_x(X)f_v(V)e^{-\frac{C(X,V)}{\mu}}dX dV  $$
where $$C(X,V) = f_v(V)\int_\mathbb{R} |X-y|f_x(y) dy.$$
\end{theorem}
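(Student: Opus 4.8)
The plan is to condition on the position and velocity $(X_1,V_1)$ of the distinguished particle (take $i=1$, which is harmless since the $t_i^{(N)}$ are identically distributed) and to use that, given $(X_1,V_1)=(X,V)$, the intersection times $\tau_{1,j}=(X_j-X)/(V-V_j)$, $j=2,\dots,N$, are conditionally i.i.d.\ (each depends only on $(X_j,V_j)$, which are independent of $(X_1,V_1)$ and of each other). Hence
$$P\big(t_1^{(N)}<N\mu \,\big|\, X_1=X,\,V_1=V\big)=\big[G_N(X,V)\big]^{N-1},\qquad G_N(X,V):=P\big(\tfrac{X_2-X}{V-V_2}<N\mu\big).$$
Everything then reduces to showing that, for a.e.\ $(X,V)$, $(N-1)\big(1-G_N(X,V)\big)\to C(X,V)/\mu$; granting this, $\big[G_N(X,V)\big]^{N-1}\to e^{-C(X,V)/\mu}$ by the elementary limit $(1-c_N/N)^N\to e^{-c}$, and averaging over $(X_1,V_1)$ finishes the proof by dominated convergence since $0\le\big[G_N\big]^{N-1}\le1$.

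The core estimate is the tail $1-G_N(X,V)=P\big((X_2-X)/(V-V_2)\ge N\mu\big)$, which we split by the sign of $V-V_2$. On $\{V_2<V\}$ the event is $\{X_2\ge X+N\mu(V-V_2)\}$, and on $\{V_2>V\}$ it is $\{X_2\le X-N\mu(V_2-V)\}$; rescaling $s=N|V_2-V|$ in each piece gives
$$N\big(1-G_N(X,V)\big)=\int_0^\infty P(X_2\ge X+\mu s)\,f_v(V-s/N)\,ds+\int_0^\infty P(X_2\le X-\mu s)\,f_v(V+s/N)\,ds.$$
Continuity of $f_v$ gives $f_v(V\pm s/N)\to f_v(V)$ pointwise; boundedness of $f_v$ together with $E|X_1|<\infty$ bounds both integrands by $\|f_v\|_\infty\big(P(X_2\ge X+\mu s)+P(X_2\le X-\mu s)\big)$, integrable in $s$ on $(0,\infty)$ with integral $\mu^{-1}E|X_2-X|$. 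Dominated convergence, the change of variables $y=X\pm\mu s$, and Fubini then yield
$$N\big(1-G_N(X,V)\big)\longrightarrow\frac{f_v(V)}{\mu}\left(\int_X^\infty(y-X)f_x(y)\,dy+\int_{-\infty}^X(X-y)f_x(y)\,dy\right)=\frac{f_v(V)}{\mu}\int_{\mathbb{R}}|X-y|f_x(y)\,dy=\frac{C(X,V)}{\mu},$$
and replacing $N$ by $N-1$ changes nothing. Finally, by the identical distribution of the $t_i^{(N)}$,
$$P\Big(\frac{t_i^{(N)}}{N}<\mu\Big)=\int_{\mathbb{R}^2}\big[G_N(X,V)\big]^{N-1}f_x(X)f_v(V)\,dX\,dV,$$
and since the integrand is bounded by $1$ and converges a.e.\ to $e^{-C(X,V)/\mu}$ by the previous step, dominated convergence gives the stated formula.

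I expect the tail estimate to be the only real obstacle: one must verify that the mass of $1-G_N$ genuinely concentrates at the scale $|V_2-V|\sim 1/N$ and that the rescaled integrands admit an $N$-independent integrable majorant. This is precisely where both hypotheses enter — $E|X_1|<\infty$ so that $\int_0^\infty P(X_2\ge X+\mu s)\,ds<\infty$ (equivalently $C(X,V)<\infty$ a.e.), and continuity plus boundedness of $f_v$ for the pointwise convergence and domination of the $f_v(V\pm s/N)$ factor. The conditional product structure, the exponential limit, and the outer averaging are all routine.
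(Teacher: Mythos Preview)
Your proof is correct and follows essentially the same approach as the paper: condition on the distinguished particle's data $(X,V)$, exploit conditional independence to get a product $[G_N(X,V)]^{N-1}$, show via the substitution $s=N|V_2-V|$ and dominated convergence that $N(1-G_N(X,V))\to C(X,V)/\mu$ (using $E|X_1|<\infty$ for integrability and continuity/boundedness of $f_v$ for the pointwise limit), and then pass the limit inside the outer integral by dominated convergence since the integrand is bounded by $1$. Your presentation is in fact slightly more streamlined than the paper's, which writes out the full integral representation of $G_N$ before taking logarithms, but the substance is identical.
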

\jl{In other words, the period of time during which each point particle undergoes elastic collisions scales linearly with the size $N$ of the system, as long as the position of the center of mass of the system remains finite. } \ay{For example, if the positions and velocities are iid $U[-1/2,1/2]$, then $$
\lim_{N\to \infty} P\bigg(\frac{t_i^{(N)}}{N} < \mu\bigg)  =\begin{cases} e^{-\frac{1}{4\mu}} \int_{-1/2}^{1/2} e^{-x^2/\mu} dx & \mu >0 \\
0 & \mu\le 0. \end{cases} $$
}

Theorems \ref{FTC_particle2} and \ref{FTC_particle3}
\seth{ consider when $X_1$ does not have mean, or}\ayd{,} when the \ay{initial }position of
the center of mass \jl{of the system of particles} does not converge \seth{a.s.} as $N$ tends towards infinity.  In this case, we require different rescalings of $t_i^{(N)}$ depending on the behavior of the tail of the distribution of $X_1.$ 

\begin{theorem}\label{FTC_particle2}
Let $t_i^{(N)}$ and $f_v$ be as in Thm. \ref{FTC_particle1}.  Suppose $\{X_i \}_{i=1}^N$ are iid with density $f_x(X) = \frac{C_\alpha}{1+|x|^{1+\alpha}}$ for some $\alpha \in (0,1)$. \jl{Then, f}or $\mu > 0,$
$$\lim_{N\to\infty} P\bigg( \frac{t_i^{(N)}}{N^{1/\alpha}} < \mu \bigg) = 
\int_{\mathbb{R}}  f_v(V) \exp\bigg( -\frac{C(V)}{\mu^\alpha}\bigg)  dV  $$
where 
$$C(V) = \frac{C_\alpha}{\alpha}\int_\mathbb{R} f_v(V+w)\frac{1}{|w|^\alpha} dw.$$
\end{theorem}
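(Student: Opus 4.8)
The plan is to reduce the statement to a maximal-order-statistic computation for the exchangeable array $\{\tau_{i,j}\}$, exactly as in the $E|X_1|<\infty$ case (Theorem \ref{FTC_particle1}), but with the scaling adjusted to reflect the heavy tail of $f_x$. First I would condition on the data of the distinguished particle, say $(X_1,V_1)=(X,V)$. Then $t_1^{(N)}=\max_{j\ge 2}\tau_{1,j}$ with $\tau_{1,j}=(X_j-X)/(V-V_j)$, and conditionally on $(X,V)$ the variables $\{\tau_{1,j}\}_{j\ge 2}$ are i.i.d. So $P\big(t_1^{(N)}/N^{1/\alpha}<\mu\mid X,V\big)=\big(1-p_N(X,V,\mu)\big)^{N-1}$ where $p_N(X,V,\mu)=P\big(\tau_{1,2}>\mu N^{1/\alpha}\mid X,V\big)=P\big(X_2-X>\mu N^{1/\alpha}(V-V_2)\big)$. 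The core estimate is to show $N\,p_N(X,V,\mu)\to C(V)/\mu^\alpha$ for (Lebesgue–)a.e.\ $(X,V)$, after which $(1-p_N)^{N-1}\to e^{-C(V)/\mu^\alpha}$ pointwise and dominated convergence (the integrand is bounded by $1$, and $f_x f_v$ is a probability density) gives the stated limit $\int f_v(V)\exp(-C(V)/\mu^\alpha)\,dV$; note $C$ does not depend on $X$ because the tail of $X_2$ dominates the fixed shift $X$.

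The heart of the matter is the asymptotics of $p_N$. Writing $p_N(X,V,\mu)=\int_{\mathbb R} f_v(v)\,P\big(X_2 > X+\mu N^{1/\alpha}(V-v)\big)\,dv$, I would split according to the sign of $V-v$ and use the tail behavior of $X_2$: since $f_x(x)=C_\alpha/(1+|x|^{1+\alpha})$, one has $P(X_2>u)\sim \frac{C_\alpha}{\alpha}u^{-\alpha}$ as $u\to+\infty$ and $P(X_2<-u)\sim\frac{C_\alpha}{\alpha}u^{-\alpha}$ as $u\to+\infty$, so that $P\big(X_2>X+\mu N^{1/\alpha}(V-v)\big)\sim \frac{C_\alpha}{\alpha}\,\mu^{-\alpha}N^{-1}|V-v|^{-\alpha}$ for each fixed $v\ne V$ (the additive constant $X$ and the "$1+$" in the denominator wash out under the $N^{1/\alpha}$ rescaling). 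Substituting $w=v-V$ and multiplying by $N$,
$$
N\,p_N(X,V,\mu)\ \longrightarrow\ \frac{C_\alpha}{\alpha\,\mu^\alpha}\int_{\mathbb R} f_v(V+w)\,\frac{dw}{|w|^\alpha}\ =\ \frac{C(V)}{\mu^\alpha},
$$
which is exactly the claimed constant. To make the interchange of limit and integral rigorous I would exploit that $\alpha\in(0,1)$ makes $|w|^{-\alpha}$ locally integrable at $w=0$, that $f_v$ is bounded (controlling the small-$w$ contribution) and a probability density (controlling large $|w|$, since $P(X_2>u)\le 1$ and for $|w|$ large the tail factor is $O(N^{-1}|w|^{-\alpha})$), so a uniform integrable majorant for $N\cdot P(X_2>X+\mu N^{1/\alpha}w)\cdot f_v(V+w)$ exists for a.e.\ fixed $(X,V)$; this is the standard Potter-type bound for regularly varying tails.

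The main obstacle, as I see it, is precisely this uniformity: controlling $N\,p_N(X,V,\mu)$ uniformly enough in $N$ to justify both the inner interchange (limit into the $dv$ integral) and, simultaneously, the outer dominated convergence in $(X,V)$. The delicate region is $v$ near $V$, where $\tau_{1,2}$ is large regardless of $N$; one must check that the mass contributed there, namely $\int_{|w|<\delta} f_v(V+w)P\big(X_2>X+\mu N^{1/\alpha}w\big)\,dw$, is $o(1/N)$ after the split is taken small, using boundedness of $f_v$ and $\alpha<1$. A secondary point is confirming that the exceptional null set of $(X,V)$ where pointwise convergence might fail is genuinely Lebesgue-null (it is, being essentially $\{f_v$ discontinuous$\}$ together with a measure-zero set coming from the singularity of $|w|^{-\alpha}$), which is harmless under the $\int f_x f_v$ integration. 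Once these are in hand the argument closes exactly as the analogous Theorem \ref{FTC_particle1} proof, and Theorem \ref{FTC_particle3} (the Cauchy case, $\alpha=1$) will follow by the same scheme with the extra $\log N$ arising from the non-integrability of $|w|^{-1}$ near infinity.
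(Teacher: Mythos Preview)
Your proposal is correct and follows essentially the same route as the paper: condition on $(X,V)$, reduce to showing $N\,p_N(X,V,\mu)\to C(V)/\mu^\alpha$ via the tail asymptotics $1-F_x(z)=F_x(-z)\sim \frac{C_\alpha}{\alpha}|z|^{-\alpha}$, and then apply dominated convergence in $(X,V)$. The only cosmetic difference is that where you invoke a Potter-type uniform bound to justify the interchange of limit and $dw$-integral, the paper carries this out by an explicit split of the $w$-integral at $|w|=N^{\alpha-2}$: the piece near $0$ is $O(N^{\alpha-2})=o(1/N)$ by boundedness of $f_v$, and on $|w|\ge N^{\alpha-2}$ the argument $|X-\mu N^{1/\alpha}w|\ge \mu N^{(1-\alpha)^2/\alpha}-|X|\to\infty$ so the tail expansion applies uniformly, the remainder $O(|z|^{-(1+2\alpha)})$ contributing $o(1/N)$ and the leading term giving exactly $\frac{1}{N}\frac{C_\alpha}{\alpha\mu^\alpha}\int_0^\infty \frac{f_v(V\pm w)}{w^\alpha}\,dw$.
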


\begin{theorem}\label{FTC_particle3}
Let $t_i^{(N)}$ and $f_v$ be as in Thm. \ref{FTC_particle1}.  Suppose $\{X_i\}_{i=1}^N$ are iid Cauchy random variables.  \jl{Then, f}or $\mu > 0,$
$$\lim_{N\to \infty} P\bigg( \frac{t_i^{(N)}}{N \log N} \le \mu \bigg) = \int_\mathbb{R}f_v(V) \exp \bigg( - \frac{2f_v(V)}{\pi \mu} \bigg)  dV.$$
\end{theorem}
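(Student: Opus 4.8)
The plan is to imitate the proof of Theorem~\ref{FTC_particle2}, the only genuinely new feature being the Cauchy tail and the extra logarithmic factor it produces. By the discussion in Section~\ref{sec:sorting}, $t_i^{(N)}\mathop{=}\limits^d r_i^{(N)}=\max_{j\neq i}\tau_{i,j}$, and conditioning on $(X_i,V_i)=(X,V)$ leaves the pairs $(X_j,V_j)_{j\neq i}$ i.i.d.\ with densities $f_x,f_v$, so that, writing $x_N:=\mu N\log N$,
\[
P\bigl(r_i^{(N)}\le x_N \,\big|\, X_i=X,\ V_i=V\bigr)=\bigl(1-q_N(X,V,x_N)\bigr)^{N-1},\qquad q_N(X,V,x):=P\!\left(\frac{X_1-X}{V-V_1}>x\right),
\]
with $X_1,V_1$ independent having densities $f_x,f_v$. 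Splitting on the sign of $V-V_1$ and substituting $w=|V-V_1|$, one gets $q_N(X,V,x)=A_N+B_N$ with
\[
A_N=\int_0^\infty f_v(V-w)\,\overline{F}(X+xw)\,dw,\qquad B_N=\int_0^\infty f_v(V+w)\,\overline{F}(xw-X)\,dw,
\]
where $\overline{F}(a)=P(X_1>a)$ and where in $B_N$ we used the symmetry $P(X_1\le a)=\overline{F}(-a)$ of the Cauchy law. The elementary facts to record first are the Cauchy tail estimates $\overline{F}(a)=\tfrac1\pi\arctan(1/a)\le\tfrac1{\pi a}$ for $a>0$ and $\overline{F}(a)=\tfrac1{\pi a}+O(a^{-3})$ as $a\to\infty$.

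The crux is to show that for every $(X,V)$ (recall $f_v$ is continuous and bounded),
\[
(N-1)\,q_N(X,V,x_N)\ \longrightarrow\ \frac{2f_v(V)}{\pi\mu}\qquad\text{as }N\to\infty .
\]
This I would obtain by proving $x A_N=\tfrac{f_v(V)}{\pi}\log x+o(\log x)$ as $x\to\infty$, and likewise for $B_N$; summing, using $x_N/N=\mu\log N$ and $\log x_N/\log N\to1$, gives $(N-1)q_N=\tfrac{N-1}{x_N}\,x_N q_N\to\tfrac{2f_v(V)}{\pi\mu}$, and since $q_N\to0$ this upgrades to $(1-q_N)^{N-1}\to\exp(-2f_v(V)/(\pi\mu))$.

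To estimate $A_N$ I would cut the $w$-integral at $K/x$ and at a fixed $\delta>0$. On $\{w>\delta\}$ one has $x\overline{F}(X+xw)\le\tfrac1{\pi w}\le\tfrac1{\pi\delta}$ once $x$ is large (so $X+xw>0$), whence that piece of $xA_N$ is $O_\delta(1)$; on $\{w<K/x\}$ one bounds $\overline{F}\le\tfrac12$ and uses $\|f_v\|_\infty<\infty$ to get a contribution $O_K(1)$; and on $\{K/x<w<\delta\}$ the expansion of $\arctan$ gives $x\overline{F}(X+xw)=\tfrac1{\pi w}\bigl(1+O_X(1/K)\bigr)$ uniformly, so that, using continuity of $f_v$ at $V$,
\[
\int_{K/x}^{\delta} f_v(V-w)\,x\,\overline{F}(X+xw)\,dw=\frac{f_v(V)}{\pi}\log x+o_\delta(1)\log x+O_X(1/K)\log x+O_{\delta,K,X}(1).
\]
Dividing by $\log x$ and letting $N\to\infty$, then $K\to\infty$, then $\delta\to0$ yields $xA_N/\log x\to f_v(V)/\pi$; the identical computation for $B_N$ (the shift $-X$ inside $\overline{F}$ and the argument $V+w$ of $f_v$ are harmless, by the same continuity and Cauchy-tail facts) gives $xB_N/\log x\to f_v(V)/\pi$. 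This is the one step requiring real care, and the main obstacle: pinning down the coefficient of $\log x$ exactly while (i) controlling the replacement of $f_v(V\mp w)$ by $f_v(V)$ near the logarithmic singularity at $w=0$ — which is what forces the order of limits $N\to\infty$, $K\to\infty$, $\delta\to0$ — and (ii) making $\overline{F}(X+xw)\le\tfrac1{\pi(X+xw)}$ effective when $X<0$.

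Finally, integrating out the conditioning, $P(t_i^{(N)}\le x_N)=P(r_i^{(N)}\le x_N)=E\bigl[(1-q_N(X_i,V_i,x_N))^{N-1}\bigr]$, and since this integrand lies in $[0,1]$ and converges pointwise, for every $(X,V)$, to $\exp(-2f_v(V)/(\pi\mu))$, which does not depend on $X$, bounded convergence together with $\int_{\mathbb{R}}f_x(X)\,dX=1$ gives
\[
\lim_{N\to\infty}P\!\left(\frac{t_i^{(N)}}{N\log N}\le\mu\right)=\int_{\mathbb{R}} f_v(V)\exp\!\left(-\frac{2f_v(V)}{\pi\mu}\right)dV ,
\]
which is the assertion. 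Everything except the $\log x$ asymptotics of $A_N$ and $B_N$ is routine and follows the pattern of the proof of Theorem~\ref{FTC_particle2}.
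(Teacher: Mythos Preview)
Your argument is correct and follows the same overall strategy as the paper's proof: condition on $(X_i,V_i)$, reduce to the pointwise asymptotics of the tail probability $q_N$, exploit the Cauchy tail $\overline{F}(a)\sim\tfrac{1}{\pi a}$ to extract a $\log x$ from the integral near $w=0$, and conclude by bounded convergence. The one noteworthy difference is in how the $w$--integral is partitioned. The paper uses explicit $N$--dependent cutoffs (at $1/(N\sqrt{\log N})$ and $1/\sqrt{\log N}$ in the original variable), so that every error term is shown to be $o(1/N)$ directly and a single limit $N\to\infty$ suffices. You instead fix auxiliary parameters $K,\delta$, cut at $K/x$ and $\delta$, and run an iterated $N\to\infty$, $K\to\infty$, $\delta\to0$ limsup/liminf argument. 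Both devices are standard and both work here; the paper's choice is slightly more economical in that it avoids the extra layer of limits, while your ``epsilon of room'' formulation makes the role of each scale (the $O_K(1)$ inner piece, the logarithmic middle piece, the $O_\delta(1)$ outer piece) more transparent and would adapt with less change to other slowly varying tails.
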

\jld{We remark, in passing, that} in the event $f_x(z)$ has different scalings as $z\to \pm\infty$, one can still construct asymptotic distributions of $t_i^{(N)}$.  The more slowly decaying tail will dominate the asymptotic behavior and the distribution will look similar to the result from Thm. \ref{FTC_particle2} or Thm. \ref{FTC_particle3} depending on the details of $f_x.$ 
\jl{The proofs of the above theorems are provided in Section \ref{sec:proof_tiN}.}

\subsection{Final Collision Time of the System}

We now turn to the scaling properties of $T^{(N)}$. Again, the first theorem presented concerns the situation when $X_1$ has a finite mean.  
\seth{We will in fact} require that the position distribution has at least a $3/2$-moment for technical reasons\jld{; details} of this proof are discussed in Sec. \ref{sec:proofs_TN}. 
\begin{theorem}\label{FTC_system}
Let $T^{(N)}=\max\limits_{1\le i < j \le N} \tau_{i,j}$ be the final  collision time of a collection of $N$ particles.  Suppose $E|X_1|^{3/2}<\infty$ and that $f_v$ is continuous and bounded. Then, for $t >0$,
$$\jld{\lim_{N\to \infty} } P\bigg(\frac{T^{(N)}}{\binom{N}{2}} \le t\bigg) \jld{=\ }  e^{-C/t },$$
where $$C = \int_{\mathbb{R}^2} |x-y|\, f_x(x)\, f_x(y)\, dx\, dy \cdot \int_\mathbb{R} f_v^2(v)\, dv.$$
\end{theorem}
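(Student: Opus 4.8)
The plan is to recognize $T^{(N)}=\max_{1\le i<j\le N}\tau_{i,j}$ as the maximum of a \emph{dissociated} exchangeable array: since $\tau_{i,j}=\tfrac{X_j-X_i}{V_i-V_j}=\tau_{j,i}$ is a symmetric function $\phi(Z_i,Z_j)$ of the i.i.d.\ pairs $Z_k=(X_k,V_k)$, the Silverman--Brown Limit Law applies. Setting $a_N:=\binom N2 t$, $p_N:=P(\tau_{1,2}>a_N)$ and $q_N:=P(\tau_{1,2}>a_N,\ \tau_{1,3}>a_N)$, that law reduces the theorem to two estimates, namely $\binom N2 p_N\to C/t$ and $N^3 q_N\to 0$: granted these, $\#\{(i,j):\tau_{i,j}>a_N\}$ converges to a Poisson random variable with mean $C/t$, so $P\big(T^{(N)}/\binom N2\le t\big)=P(\max_{i<j}\tau_{i,j}\le a_N)\to e^{-C/t}$.

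For the first estimate I would condition on $W:=V_1-V_2$, whose density $g(w)=\int_{\mathbb R} f_v(v)f_v(v-w)\,dv$ is bounded by $\|f_v\|_\infty$ and continuous at $0$ with $g(0)=\int_{\mathbb R} f_v^2$. Using $W\mathop{=}\limits^{d}-W$ and $X_2-X_1\mathop{=}\limits^{d}X_1-X_2$,
\[
p_N=2\int_0^\infty g(w)\,P(X_2-X_1>a_N w)\,dw=\frac{2}{a_N}\int_0^\infty g(u/a_N)\,P(X_2-X_1>u)\,du .
\]
Since $\int_0^\infty P(X_2-X_1>u)\,du=\tfrac12\,E|X_1-X_2|<\infty$ (only $E|X_1|<\infty$ is needed here), dominated convergence gives $a_N p_N\to g(0)\,E|X_1-X_2|=C$, hence $\binom N2 p_N\to C/t$.

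The main work, and the reason the hypothesis is $E|X_1|^{3/2}<\infty$ rather than merely $E|X_1|<\infty$, is the second estimate. Conditionally on $(X_1,V_1)=(x,v)$ the variables $\tau_{1,2},\tau_{1,3}$ are i.i.d.\ with common conditional tail $h(x,v):=P\big((X_2-x)/(v-V_2)>a_N\big)$, so $q_N=E\!\big[h(X_1,V_1)^2\big]$. Repeating the conditioning computation above but keeping only upper bounds yields, with $c:=\|f_v\|_\infty$ and $\psi(x):=\int_{\mathbb R}|x-y|f_x(y)\,dy$,
\[
h(x,v)\ \le\ \min\!\Big(1,\ \tfrac{c}{a_N}\,\psi(x)\Big),
\qquad\text{so}\qquad q_N\ \le\ \int_{\mathbb R} f_x(x)\,\min\!\big(1,\ c^2\psi(x)^2/a_N^2\big)\,dx .
\]
I would split this integral over $\{\psi\le A\}$, $\{A<\psi\le a_N/c\}$ and $\{\psi> a_N/c\}$: the first piece is $O(A^2/a_N^2)$; the second, using $\psi^2\le (a_N/c)^{1/2}\psi^{3/2}$, is at most $c^{3/2}a_N^{-3/2}\int_{\{\psi>A\}}f_x\psi^{3/2}$; the third equals $P(\psi(X_1)>a_N/c)$. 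Now $E|X_1|^{3/2}<\infty$ gives $\psi(X_1)\in L^{3/2}$, hence both $\int_{\{\psi>A\}}f_x\psi^{3/2}\to0$ as $A\to\infty$ and $s^{3/2}P(\psi(X_1)>s)\to0$ as $s\to\infty$. Multiplying by $a_N^{3/2}$, letting $N\to\infty$ and then $A\to\infty$, these three facts force $a_N^{3/2}q_N\to0$; since $a_N^{3/2}=\Theta(N^3)$ this is exactly $N^3 q_N\to0$, which finishes the proof.

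The hard part is this last estimate. The crude bound $h\le\min(1,c\psi/a_N)$ by itself only delivers $q_N=O(a_N^{-3/2})$, which is borderline against the required $o(a_N^{-3/2})\sim o(N^{-3})$; upgrading it is precisely where one must use the strict integrability $E|X_1|^{3/2}<\infty$ (not just a first moment) through the three-region split, the dangerous region being the one where $X_1$ is atypically large. The remaining points are routine: $g$ and $f_v$ bounded so the integrands in the dominated-convergence steps are controlled, and $\psi(x)\le |x|+E|X_1|$ so $\psi$ is finite everywhere and $\psi(X_1)\in L^{3/2}$ as claimed.
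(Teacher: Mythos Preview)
Your proof is correct and follows the same high-level strategy as the paper: apply the Silverman--Brown limit law and verify the two conditions $\binom{N}{2}p_N\to C/t$ and $N^3 q_N\to 0$. The execution differs in two places. For the first limit, you condition directly on $W=V_1-V_2$ and exploit symmetry, which is shorter than the paper's route through the explicit integral representation for $P(\tau_{1,2}\le z)$ in terms of $F_x$. For the second (and key) estimate, the paper keeps the inner $w$-integral and applies a Markov-type bound with exponent $3/4$ to produce a factor $w^{-3/4}$, then passes to the limit via dominated convergence (twice) using $\mathbf{1}_{|X'-X|>z_N(t)w}\to 0$; you instead integrate out $w$ first, collapsing $h(x,v)$ to the single function $\psi(x)=E|X_1-x|$, and then split over level sets of $\psi(X_1)$. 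Your route is somewhat more elementary and makes the role of the $3/2$-moment hypothesis (namely $\psi(X_1)\in L^{3/2}$, via $\psi(x)\le |x|+E|X_1|$) completely transparent; the paper's argument has the compensating advantage of reusing the integral machinery already built for the single-particle theorems.
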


This indicates that the final collision time of the system scales like the total number of particle pairs. 
The numerical simulation of Figure \ref{fig:TN_error_XN_VN} illustrates the convergence of the cumulative distribution of $T^{(N)}/\binom{N}{2}$ towards its asymptotic value on the interval 
$t \in (0,5)$ when both $\{X_i\}_{i=1}^N$ and $\{V_i\}_{i=1}^N$ are iid $N(0,1)$. 
The Silverman-Brown limit law used in the proof shown in Section \ref{sec:proofs_TN} indicates that the rate of convergence is $O(N^{-1})$, which is consistent with the numerical results. In order to numerically observe this rate of convergence, a total of $N^4$ trials had to be conducted to reconstruct the cumulative distribution of $T^{(N)}$.
\begin{figure}[h!]
\begin{center}
\includegraphics[width = 6 in]{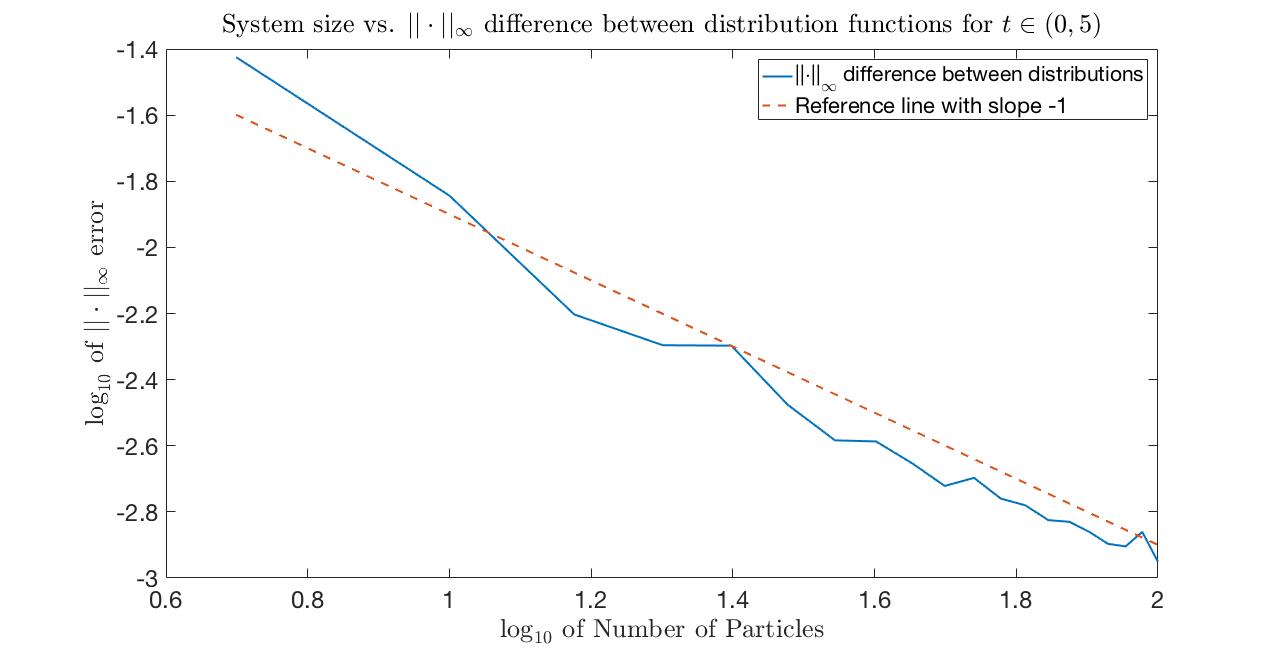}
\caption{The maximum difference between the empirical cumulative distribution function and theoretical asymptotic cumulative distribution function of $T^{(N)}/\binom{N}{2}$ over all values $t$ in the interval $(0,5)$ when both $\{X_i\}_{i=1}^N$ and $\{V_i\}_{i=1}^N$ are iid $N(0,1)$ with system sizes, $N$, between 5 and 100. For reference, the dashed line has slope negative one.}
\label{fig:TN_error_XN_VN}
\end{center}
\end{figure}

Two points are worth noting regarding the proof of Thm. \ref{FTC_system}. First, since
both $\{X_i\}_{i=1}^N$ and  $\{V_i\}_{i=1}^N$ are identically distributed,
the intersection times $\tau_{i,j}$ are also identically distributed. However, they are not independent. Nonetheless, if one assumes they are independent, then through the usual construction of the maximal order statistics of i.i.d. random variables, one recovers the same distribution as \jld{in} Thm. \ref{FTC_system} as $N\to\infty.$
Second, although the proof of the theorem requires that $E|X_1|^{3/2}<\infty$ for technical reasons, it is natural to wonder if $T^{(N)}/\binom{N}{2}$  is still converging weakly to some random variable if the requirements of the moment of $X_1$ are lessened.
 
\begin{proposition}\label{prop:tight}
Suppose $E|X_1| <\infty$ and $f_v$ is continuous and bounded, then the sequence of random variables $\{T^{(N)}/\binom{N}{2}\}_{N=2}^\infty$ is tight.
\end{proposition}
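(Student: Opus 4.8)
The plan is to prove tightness by a first–moment (union) bound, reducing everything to a single sharp tail estimate for one pairwise intersection time. Since $T^{(N)}=\max_{1\le i<j\le N}\tau_{i,j}$ is a maximum over $\binom{N}{2}$ exchangeable variables, for any $M>0$, writing $s_N:=M\binom{N}{2}$, we have
$$P\big(T^{(N)}>s_N\big)\ \le\ \sum_{1\le i<j\le N}P\big(\tau_{i,j}>s_N\big)\ =\ \binom{N}{2}\,P\big(\tau_{1,2}>s_N\big).$$
Because $T^{(N)}$ may be negative (e.g. when no pair ever collides at a positive time), the lower tail must also be controlled, but this is easy: $\{T^{(N)}<-s_N\}$ forces $\tau_{1,2}<-s_N$, so $P(T^{(N)}<-s_N)\le P(|\tau_{1,2}|>s_N)$ with no combinatorial factor. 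Thus tightness of $\{T^{(N)}/\binom{N}{2}\}_{N\ge2}$ will follow once we show that $s\,P(|\tau_{1,2}|>s)$ is bounded uniformly in $s>0$.

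The heart of the argument is therefore the bound $P(|\tau_{1,2}|>s)=O(1/s)$. A crude estimate — bounding $\{|\tau_{1,2}|>s\}$ by $\{|X_2-X_1|>\sqrt s\}\cup\{|V_1-V_2|<1/\sqrt s\}$ and applying Markov together with boundedness of the density of $V_1-V_2$ — only gives $O(1/\sqrt s)$, which is annihilated by the factor $\binom{N}{2}$; the linear rate is essential. To obtain it I would condition on the denominator. Set $D:=X_2-X_1$ and $W:=V_1-V_2$; these are independent, $E|D|\le 2E|X_1|<\infty$, and $W$ has density $f_W=f_v*\check f_v$ with $\|f_W\|_\infty\le\|f_v\|_\infty=:B<\infty$. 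Let $G(a):=P(|D|>a)$, which is non-increasing with $\int_0^\infty G(a)\,da=E|D|$. Then by independence and Fubini,
$$P\big(|\tau_{1,2}|>s\big)=P\big(|D|>s|W|\big)=\int_{\mathbb R}G(s|w|)\,f_W(w)\,dw\ \le\ B\int_{\mathbb R}G(s|w|)\,dw\ =\ \frac{2B\,E|D|}{s},$$
the last equality by the substitution $u=s|w|$. Hence $s\,P(|\tau_{1,2}|>s)\le 2B\,E|D|$ for every $s>0$, using only $E|X_1|<\infty$ and the boundedness of $f_v$.

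Assembling the two pieces, for every $M>0$ and every $N\ge2$,
$$P\!\left(\Big|\frac{T^{(N)}}{\binom{N}{2}}\Big|>M\right)\ \le\ \binom{N}{2}\,P\big(\tau_{1,2}>s_N\big)+P\big(|\tau_{1,2}|>s_N\big)\ \le\ \frac{2B\,E|D|}{M}+\frac{2B\,E|D|}{M\binom{N}{2}}\ \le\ \frac{4B\,E|D|}{M},$$
which tends to $0$ as $M\to\infty$ uniformly in $N$; this is precisely tightness. The only genuine step is the one-pair tail bound of the previous paragraph, and I expect the conditioning/substitution there to be the one place requiring care, since any coarser splitting of the ratio event loses the needed $O(1/s)$ rate. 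I would also remark, in passing, that keeping the integrand $f_W(u/s)$ and letting $s\to\infty$ (using continuity of $f_v$ so that $f_W(u/s)\to f_W(0)=\int f_v^2\,dv$, with dominated convergence) upgrades the inequality to $s\,P(\tau_{1,2}>s)\to C$ with the same constant $C$ as in Theorem~\ref{FTC_system}, although only the uniform bound is needed for Proposition~\ref{prop:tight}.
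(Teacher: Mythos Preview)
Your proof is correct and the overall structure matches the paper's: a union/first-moment bound for the upper tail $P(T^{(N)}>M\binom{N}{2})\le \binom{N}{2}P(\tau_{1,2}>M\binom{N}{2})$, together with a separate, easy treatment of the lower tail. The genuine difference is in how the one-pair tail is handled. The paper does not prove a uniform bound; instead it invokes the asymptotic computation carried out in the proof of Theorem~\ref{FTC_system} to conclude $\binom{N}{2}P(\tau_{1,2}>\binom{N}{2}A)\to C/A$ as $N\to\infty$, and then takes $\limsup_N$ followed by $A\to\infty$. Your conditioning-on-$W$ argument, $P(|D/W|>s)\le \|f_W\|_\infty\int_{\mathbb R}G(s|w|)\,dw = 2\|f_v\|_\infty E|D|/s$, is a cleaner, self-contained route: it yields a bound uniform in $N$ (not merely in the limit), uses only $E|X_1|<\infty$ and $\|f_v\|_\infty<\infty$, and does not rely on any of the machinery from Theorem~\ref{FTC_system}. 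For the lower tail the paper uses independence of disjoint pairs, $P(T^{(N)}<0)\le P(\tau_{1,2}<0)^{\lfloor N/2\rfloor}\to 0$, whereas your inclusion $\{T^{(N)}<-s_N\}\subset\{\tau_{1,2}<-s_N\}$ recycles the same tail bound; both are fine. Your closing remark that dominated convergence with $f_W(u/s)\to f_W(0)=\int f_v^2$ recovers the sharp constant $C$ is also correct and is essentially what the paper's longer calculation establishes.
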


A proof of this proposition is given in Sec. \ref{sec:proofs_TN}.  Since the sequence, $\{T^{(N)}/\binom{N}{2}\}_{N=2}^\infty$ is tight, there is a random variable $T$ and subsequence $N_j$ such that $$\frac{T^{(N_j)}}{\binom{N_j}{2} }\Rightarrow T.$$  In the case of Thm. \ref{FTC_system}, the asymptotic distribution of $T^{(N)}/\binom{N}{2}$ is known. However, the tightness of the sequence $\{T^{(N)}/\binom{N}{2}\}$ requires only that $E|X_1|  <\infty$, a condition which also guarantees the constant $C$ of Thm. \ref{FTC_system} is well-defined.

\jld{It is natural to ask whether} the limiting distribution given in Thm. \ref{FTC_system} still hold\jld{s} if the requirement that $E|X_1|^{3/2}\jld{< \infty}$ is removed. \jld{Even though we do not provide a definite answer to this question, t}he following example \jld{gives} some insight. We simulated systems \jld{of increasing size $N$}, with random initial positions $\{X_i\}_{i=1}^N$ generated, via inverse sampling, from the density 
\begin{equation}
\label{eq:alpha_dens}
f_x(x) = \frac{9\cos(\pi/18)}{8\pi(1+|x|^{9/4})},
\end{equation}
for which $E|X_1|<\infty$ but $E|X_1|^\alpha$ is infinite for $\alpha \ge 5/4$. Random \jld{initial} velocities were sampled from a $N(0,1)$ distribution. Figure \ref{fig:TN_error_XAlpha_VN} shows a comparison between the empirical and theoretical cumulative distribution functions, similar to the example of Fig. \ref{fig:TN_error_XN_VN}, but for a larger range of values of $N$. The distribution of $T^{(N)}/\binom{N}{2}$ appears to be converging to the same theoretical limit as Thm. \ref{FTC_system}, but at a much slower rate, of order $O(N^{-0.35})$. This suggests that an argument for the convergence of $T^{(N)}$ to the same limiting distribution may exist \jld{for this example}. \seth{However, we show \jld{in the appendix} that the method \jld{of proof} we use cannot apply in this case.}
\begin{figure}[h!]
\begin{center}
\includegraphics[width = 6 in]{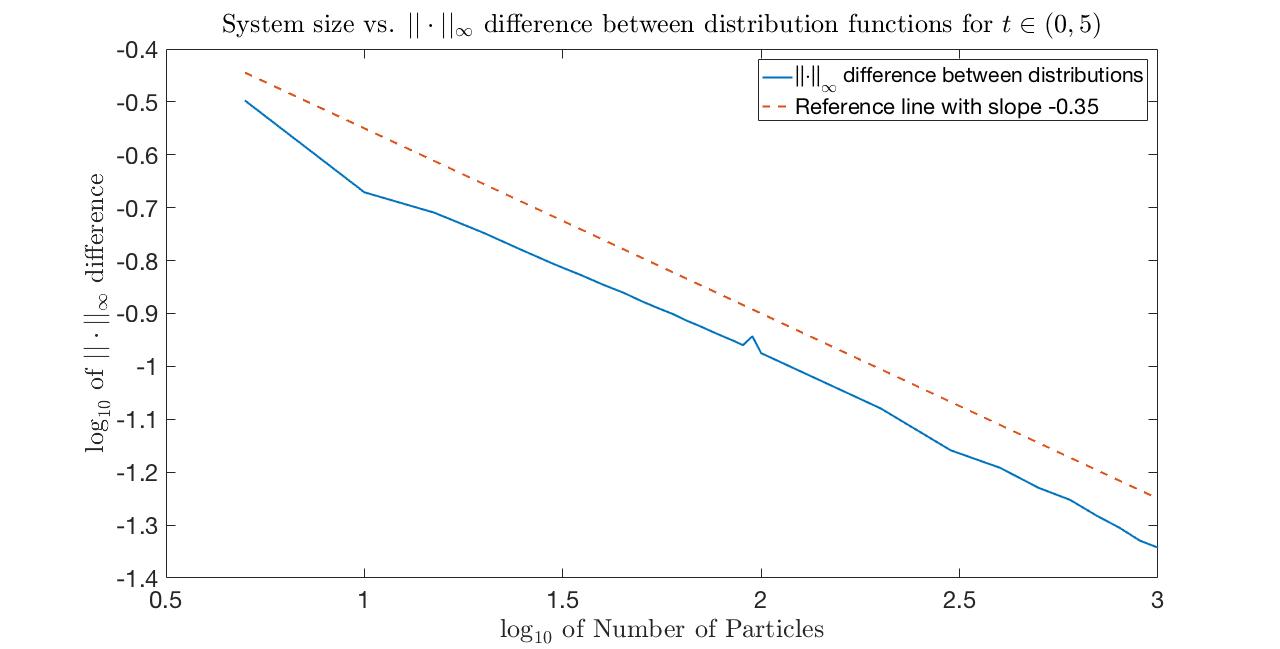}
\caption{
The maximum difference between the empirical cumulative distribution function and theoretical asymptotic cumulative distribution function of $T^{(N)}/\binom{N}{2}$ in the infinity norm over the interval $\mu\in(0,5)$\jld{. Here,} $\{X_i\}_{i=1}$ are \jld{iid} with density given in Eq. \eqref{eq:alpha_dens}, $\{V_i\}_{i=1}^N$ are iid $N(0,1)$\jld{, and the size of the} system, $N$, \jld{ranges} between 5 and 1000.  For reference, the dashed line has slope of -0.35.  For $N$ between 5 and 100, $N^4$ trials were conducted.  Over that interval the observed convergence rate was $N^{-0.35}$. As such, only $N^2$ trials were conducted \jld{for} systems sizes between 100 and 1000.}
\label{fig:TN_error_XAlpha_VN}
\end{center}
\end{figure}

\seth{We now discuss} asymptotic distributions for $T^{(N)}$ when the \jld{initial} position r.v. does not have a first moment.  Similar to Theorems \ref{FTC_particle2} and \ref{FTC_particle3} an alternate scaling is required.
\begin{theorem}
Let $T^{(N)}$ be the final collision time of a collection of $N$ particles.  Suppose the initial positions are iid with density $\displaystyle f_x(x) = \frac{C_\alpha}{1+\jld{|x|}^{1+\alpha}}$ for $\alpha \in (0,1)$ and that $f_v$ is continuous \jld{and bounded}.  Then for $t > 0$,
$$\jld{\lim_{N\to \infty} } P\bigg(\frac{T^{(N)}}{N^{2/\alpha}} \le t \bigg) \jld{=\ } e^{-C/t^\alpha}$$
where $$C = \frac{C_\alpha}{2\alpha} \int_{\mathbb{R}^2}f_v(V)\frac{f_v(V+w)}{|w|^\alpha} dw \, dV. $$
\label{FTC_system2}
\end{theorem}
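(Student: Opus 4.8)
The plan is to prove Theorem~\ref{FTC_system2} along the same lines as Theorem~\ref{FTC_system}, replacing the finite-mean tail estimate for a single intersection time by a regularly-varying one. Since $\tau_{i,j}=h\big((X_i,V_i),(X_j,V_j)\big)$ for the symmetric function $h\big((x,v),(x',v')\big)=(x'-x)/(v-v')$ of the i.i.d.\ pairs $(X_i,V_i)$, the array $\{\tau_{i,j}\}_{i\ne j}$ is dissociated and exchangeable, so the Silverman--Brown limit law (\cite{lao}) applies. Writing $b_N=tN^{2/\alpha}$, it then suffices to establish (i) $\binom{N}{2}P(\tau_{1,2}>b_N)\to C/t^{\alpha}$ with $C$ as in the statement, and (ii) the dependency estimate $N^{3}P(\tau_{1,2}>b_N,\ \tau_{1,3}>b_N)\to 0$; the limit law then yields $P(T^{(N)}/N^{2/\alpha}\le t)=P(T^{(N)}\le b_N)\to e^{-C/t^{\alpha}}$.

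For (i), I would write $\tau_{1,2}=Y/W$ with $Y:=X_2-X_1$ and $W:=V_1-V_2$ independent. Here $W$ has the bounded, symmetric density $f_W(w)=\int_{\mathbb R}f_v(V)f_v(V+w)\,dV$, and $Y$, a difference of i.i.d.\ variables with regularly-varying tails of index $-\alpha$, satisfies $\overline{F}_Y(y)\sim 2\,\overline{F}_x(y)\sim\tfrac{2C_\alpha}{\alpha}\,y^{-\alpha}$. Conditioning on $W$ and using the regular variation of $\overline{F}_Y$ together with Potter-type bounds and dominated convergence — the factor $|w|^{-\alpha}$ is integrable against $f_W$ near $w=0$ because $\alpha<1$ and $f_W$ is bounded, and $\int_{|w|\ge1}|w|^{-\alpha}f_W(w)\,dw<\infty$ since $f_W$ is a density — one obtains $s^{\alpha}P(\tau_{1,2}>s)\to\kappa\int_{\mathbb R}|w|^{-\alpha}f_W(w)\,dw$ as $s\to\infty$ for an explicit $\kappa$. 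Multiplying by $\binom{N}{2}\sim N^{2}/2$ at $s=b_N$ and rewriting $\int_{\mathbb R}|w|^{-\alpha}f_W(w)\,dw=\int_{\mathbb R^{2}}f_v(V)\,|w|^{-\alpha}f_v(V+w)\,dw\,dV$ produces the stated limit $C/t^{\alpha}$.

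For (ii), I would condition on $(X_1,V_1)$: given $(X_1,V_1)$ the times $\tau_{1,2}$ and $\tau_{1,3}$ are independent and identically distributed, so $P(\tau_{1,2}>b_N,\tau_{1,3}>b_N)=E\!\left[g(X_1,V_1)^{2}\right]$ with $g(X,V):=P(\tau_{1,2}>b_N\mid X_1=X,V_1=V)$. Now only the numerator $X_2-X$ of $\tau_{1,2}=(X_2-X)/(V-V_2)$ is random, and $X_2-X$ is a translate of a regularly-varying variable, so the same computation as in (i) — identical to the one behind Theorem~\ref{FTC_particle2} — yields $g(X,V)\le C^{*}b_N^{-\alpha}\big(1+o(1)\big)$ \emph{uniformly} in $(X,V)$, where $C^{*}=\sup_V C(V)<\infty$ since $\int_{\mathbb R}|w|^{-\alpha}f_v(V+w)\,dw\le\|f_v\|_\infty\int_{|w|\le1}|w|^{-\alpha}\,dw+\int_{|w|>1}f_v(V+w)\,dw$ is bounded in $V$. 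Hence $N^{3}E[g^{2}]\le(C^{*})^{2}N^{3}b_N^{-2\alpha}\big(1+o(1)\big)=(C^{*})^{2}t^{-2\alpha}N^{-1}\big(1+o(1)\big)\to0$.

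The main obstacle is making the regular-variation asymptotics uniform. For (i) one must control $P(Y>sw)$ simultaneously over all scales $w$ — near the singularity $w\to 0$ and out in the tail of $f_W$ — so as to pass to the limit inside the integral and obtain genuine convergence of $\binom{N}{2}P(\tau_{1,2}>b_N)$, not merely a pointwise statement; for (ii) the bound $g(X,V)=O(b_N^{-\alpha})$ must hold uniformly in the conditioning values, which rests on the fact that translating a regularly-varying variable changes neither its tail index nor the leading tail constant. Both are routine given two-sided (Potter) bounds on regularly-varying functions. Apart from these points the argument is a close adaptation of the proof of Theorem~\ref{FTC_system}; the hypothesis $E|X_1|^{3/2}<\infty$ is no longer needed, because the much larger normalization $N^{2/\alpha}$ forces the dependency term in (ii) to vanish on its own.
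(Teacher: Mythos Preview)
Your overall plan---apply the Silverman--Brown limit law at scale $b_N=tN^{2/\alpha}$ and verify its two hypotheses---is exactly the paper's approach, and your treatment of (i) via the ratio $Y/W=(X_2-X_1)/(V_1-V_2)$ together with the regular variation of $\overline F_Y$ is a clean alternative to the paper's recycling of the pointwise estimates from Theorem~\ref{FTC_particle2}. The gap is in (ii). The uniform bound $g(X,V)\le C^{*}b_N^{-\alpha}(1+o(1))$ you assert is false: translating a regularly-varying variable by a \emph{fixed} amount indeed leaves the tail constant unchanged, but here the translate $X$ ranges over all of $\mathbb R$, and when $|X|$ is comparable to $b_N$ the tail asymptotic is simply irrelevant. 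Concretely, for $X=ab_N$ with $a>0$ fixed one checks directly that
\[
g(ab_N,V)=P\!\left(\frac{ab_N-X_1}{V_1-V}>b_N\right)\longrightarrow \int_V^{V+a}f_v(u)\,du\qquad(N\to\infty),
\]
so $g$ is bounded away from zero on $\{X_2\ge b_N\}$. Since $P(X_2\ge b_N)\sim (C_\alpha/\alpha)\,b_N^{-\alpha}=(C_\alpha/\alpha)\,t^{-\alpha}N^{-2}$, this region alone contributes order $N^{-2}$ to $E[g(X_2,V_2)^2]$, whence
\[
N^{3}P(\tau_{1,2}>b_N,\ \tau_{2,3}>b_N)=N^{3}E[g^{2}]\ \ge\ cN\ \longrightarrow\ \infty.
\]
Thus the second Silverman--Brown hypothesis \emph{fails}, and the limit law cannot be invoked as written.

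This is precisely the phenomenon that the $3/2$-moment hypothesis in Theorem~\ref{FTC_system} is there to exclude: with $z_N\sim N^{2}$ and $E|X|^{3/2}<\infty$ one has $P(|X_2|\gtrsim z_N)=o(N^{-3})$, so the outlier region is harmless. Here no moment is available and the outlier region contributes at the critical order. The paper's own argument glosses over the same point---it substitutes the pointwise-in-$(X,V)$ asymptotic from the proof of Theorem~\ref{FTC_particle2} inside the $(X,V)$-integral without checking uniformity---so matching the paper does not rescue your step; the same non-uniformity, incidentally, makes your $Y/W$ computation in (i) yield $2C/t^{\alpha}$ rather than the stated $C/t^{\alpha}$, since $\overline F_Y\sim 2\,\overline F_x$. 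A corrected proof would have to excise the event $\{\max_i|X_i|\ge\epsilon b_N\}$, which has probability $O(N\,b_N^{-\alpha})=O(N^{-1})\to 0$, before applying a Poisson approximation on the remainder, or else use a limit law that tolerates the clustering of exceedances caused by a single outlying position.
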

\begin{theorem}
Let $T^{(N)}$ be the final collision time of a collection of $N$ particles.  Suppose the initial positions are iid Cauchy and that $f_v$ is continuous \jld{and bounded}.  Then for $t > 0$,
$$\jld{\lim_{N\to \infty} } P\bigg(\frac{T^{(N)}}{N^2\log N} \le t \bigg) \jld{=\ } e^{-C/t}$$
where $$C = \frac{\jld{2}}{\pi } \int_\mathbb{R}f_v^2(V) dV.$$
\label{FTC_system3}
\end{theorem}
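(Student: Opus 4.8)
The plan is to mimic the argument for Theorem~\ref{FTC_system2}, with the regularly varying tail estimate replaced by the Cauchy one, which is the boundary case $\alpha=1$ and produces a logarithmic correction. Set $b_N:=tN^2\log N$ and $W_N:=\#\{(i,j):1\le i<j\le N,\ \tau_{ij}>b_N\}$, so that $\{T^{(N)}\le b_N\}=\{W_N=0\}$. I will show that $W_N$ converges in distribution to a Poisson random variable with mean $C/t$ by applying the Silverman-Brown limit law to the exchangeable array $\tau_{ij}=\psi(Y_i,Y_j)$, $Y_i:=(X_i,V_i)$, $\psi\big((x,v),(x',v')\big):=(x'-x)/(v-v')$. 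The two conditions to verify are: (a) the normalized pair-tail $\binom N2 P(\tau_{12}>b_N)$ has a limit, and (b) the overlapping pairs $\{(1,2),(1,3)\}$ contribute negligibly, i.e.\ $N^3 P(\tau_{12}>b_N,\tau_{13}>b_N)\to 0$ (the disjoint-pair condition then follows from (a)).

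For (a) I would compute the tail of $\tau_{12}$. Since $X_1,X_2$ are i.i.d.\ standard Cauchy, $X_2-X_1$ is Cauchy of scale $2$, so $P(X_2-X_1>s)=\tfrac12-\tfrac1\pi\arctan(s/2)\sim\tfrac2{\pi s}$; writing $g$ for the density of $V_1-V_2$, which is even, bounded and continuous because $f_v$ is, with $g(0)=\int f_v^2$, conditioning on $V_1-V_2$ gives $P(\tau_{12}>u)=2\int_0^\infty g(w)\big[\tfrac12-\tfrac1\pi\arctan(uw/2)\big]\,dw$. After the substitution $s=uw$, the nonintegrable singularity of $\int_0 g(w)/w\,dw$ — present because $g(0)>0$ — produces, once cut off at scale $1/u$, the asymptotics $P(\tau_{12}>u)\sim\tfrac{4g(0)}{\pi}\,\tfrac{\log u}{u}$; this logarithm is exactly why $T^{(N)}$ must be rescaled by $N^2\log N$. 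What I will actually use is the truncated version: if $X_1,X_2$ are also confined to $[-c,c]$ then $X_2-X_1\in[-2c,2c]$, so $\{\tau_{12}>u\}$ forces $|V_1-V_2|<2c/u$, whence $P\big(\tau_{12}>u,\ |X_1|\vee|X_2|\le c\big)\sim\tfrac{g(0)}{u}\,E\big[|X_2-X_1|\,;\,|X_1|\vee|X_2|\le c\big]\sim\tfrac{4g(0)}{\pi}\,\tfrac{\log c}{u}$ as $u\to\infty$ with $c=o(u)$, using $E[|X_1|\,;\,|X_1|\le c]=\tfrac1\pi\log(1+c^2)$.

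The main obstacle is that the Silverman-Brown hypotheses fail for $W_N$ itself, because the positions are Cauchy and carry no moments: conditioning on $(X_1,V_1)$ shows $P(\tau_{12}>b_N\mid X_1,V_1)$ is of order $f_v(V_1)\,|X_1|/b_N$ once $|X_1|\gtrsim\log b_N$, so a single far-out particle forces many $\tau_{1j}$ past $b_N$ simultaneously (the counted pairs cluster), and $N^3 P(\tau_{12}>b_N,\tau_{13}>b_N)$ then does not tend to $0$. I would get around this by truncation. Choose $c_N:=N^{1+\varepsilon_N}$ with $\varepsilon_N\downarrow 0$ slowly enough that $N^{\varepsilon_N}=o((\log N)^2)$ (e.g.\ $\varepsilon_N=\log\log N/\log N$); then $P(\exists i:|X_i|>c_N)\le N\,P(|X_1|>c_N)=O(1/\log N)\to 0$. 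Put $\widehat\tau_{ij}:=\tau_{ij}\,\mathbf 1\{|X_i|\vee|X_j|\le c_N\}$ and $\widehat W_N:=\#\{(i,j):\widehat\tau_{ij}>b_N\}$; since $b_N>0$, $\{\widehat\tau_{ij}>b_N\}=\{\tau_{ij}>b_N,\ |X_i|\vee|X_j|\le c_N\}$, so $\{W_N=0\}$ and $\{\widehat W_N=0\}$ differ only on $\{\exists i:|X_i|>c_N\}$ and $P(W_N=0)=P(\widehat W_N=0)+o(1)$.

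It remains to apply the Silverman-Brown law to the symmetric $\{0,1\}$-valued triangular array $h_N(Y_i,Y_j)=\mathbf 1\{\widehat\tau_{ij}>b_N\}$ defining $\widehat W_N$. For (a), the displayed truncated tail with $c=c_N,\ u=b_N$ and $\log c_N=(1+\varepsilon_N)\log N\sim\tfrac12\log b_N$ gives $\widehat\lambda:=\lim\binom N2 P\big(\tau_{12}>b_N,\ |X_1|\vee|X_2|\le c_N\big)=\tfrac{2g(0)}{\pi t}=\tfrac Ct$ with $C=\tfrac2\pi\int f_v^2$ — exactly half the untruncated intensity $\tfrac{4g(0)}{\pi t}$, the truncation having replaced the $\log b_N$ of the full tail by $\log c_N$. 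For (b), conditioning on $(X_1,V_1)$ and using $P(\tau_{12}>b_N\mid X_1,V_1)\lesssim\|f_v\|_\infty(|X_1|+\log b_N)/b_N$ on $\{|X_1|\le c_N\}$ yields $N^3 P(\widehat\tau_{12}>b_N,\widehat\tau_{13}>b_N)\lesssim N^3\big(E[X_1^2\,;\,|X_1|\le c_N]+(\log b_N)^2\big)/b_N^2=O\big(N^{\varepsilon_N}/(\log N)^2\big)+O(1/N)\to 0$, using $E[X_1^2\,;\,|X_1|\le c_N]\asymp c_N$. Hence $\widehat W_N\Rightarrow\mathrm{Poisson}(C/t)$, and $P\big(T^{(N)}/(N^2\log N)\le t\big)=P(W_N=0)=P(\widehat W_N=0)+o(1)\to e^{-C/t}$ for $t>0$ (for $t\le 0$ the limit is $0$, since $P(\tau_{12}>0)=\tfrac12$). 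I expect the genuinely delicate steps to be making the logarithmic tail asymptotics rigorous — this needs only continuity and boundedness of $g$, but care is required since the $|X_1|$-dependent part of the conditional tail contributes at the same order as the bulk over the whole range $|X_1|\le b_N$ — and checking cleanly that the truncation discards exactly half of the pair-intensity, so that the constant is $\tfrac2\pi\int f_v^2$ and not $\tfrac4\pi\int f_v^2$.
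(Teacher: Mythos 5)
Your proposal is correct, but it takes a genuinely different---and more careful---route than the paper. The paper applies the Silverman--Brown law directly to the untruncated intersection times: it inserts the fixed-$(X,V)$ Cauchy asymptotics from the proof of Theorem \ref{FTC_particle3} into the $(X,V)$-integrals to get $\binom{N}{2}P(\tau_{1,2}>tN^2\log N)\to\frac{2}{\pi t}\int f_v^2$ and $N^3P(\tau_{1,2}>z_N,\tau_{2,3}>z_N)=O(1/N)$. You instead truncate the positions at $c_N=N^{1+\epsilon_N}$, apply the Poisson approximation to the truncated pair count $\widehat W_N$, and transfer the void probability back through $P(\max_i|X_i|>c_N)\to0$. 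Your extra machinery is not redundant: as you observe, the conditional tail given $(X_1,V_1)$ carries a term of order $f_v(V_1)|X_1|/b_N$ for $|X_1|$ up to $b_N$, and integrating this against the Cauchy density contributes at the same order as the fixed-$X$ asymptotics; consequently the untruncated intensity is $\frac{4}{\pi t}\int f_v^2$ (twice the constant in the theorem, consistent with your marginal tail $P(\tau_{1,2}>u)\sim\frac{4\log u}{\pi u}\int f_v^2$), and the untruncated overlap quantity $N^3P(\tau_{1,2}>b_N,\tau_{1,3}>b_N)$ grows like $N/\log N$ rather than vanishing. So the two hypotheses cannot be verified for the raw times by a pointwise-limit-under-the-integral argument, and your truncation---which replaces $\log b_N\sim 2\log N$ by $\log c_N\sim\log N$ and thereby halves the intensity to exactly $C/t$ with $C=\frac{2}{\pi}\int f_v^2$, while making the overlap term $O(N^{\epsilon_N}/(\log N)^2+1/N)\to0$---is precisely the device that produces the stated constant rigorously; the comparison $P(W_N=0)=P(\widehat W_N=0)+o(1)$ is airtight since the two counts coincide when no particle exceeds $c_N$. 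Two points to make explicit when writing this up: the kernel $\mathbf{1}\{\tau_{ij}>b_N,\ |X_i|\vee|X_j|\le c_N\}$ depends on $N$, so you should invoke the quantitative Poisson bound of Barbour--Eagleson (valid for each fixed $N$) rather than the fixed-$h$ limit law as quoted; and the truncated-tail asymptotics should be derived from the uniform estimate $P(0<V_1-V_2<r)=r\big(\int f_v^2\big)(1+o(1))$ as $r\to0$ together with $E\big[|X_2-X_1|;|X_1|\vee|X_2|\le c_N\big]\sim\frac{4}{\pi}\log c_N$, both of which need only the assumed continuity and boundedness of $f_v$.
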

Like Theorem \ref{FTC_system}, the limiting distributions in Theorems \ref{FTC_system2} and \ref{FTC_system3} are of Fr\'echet type.  Additionally, as was the case for Theorem \ref{FTC_system}, one recovers the same asymptotic distribution 
 if the pairwise collision times are assumed to be independent.  
The proofs of these theorems follow from the same asymptotic arguments employed in the proofs of Theorems \ref{FTC_particle2} and \ref{FTC_particle3} which are discussed in detail in Section \ref{sec:proof_tiN} \jld{and in the Appendix}.  

In summary,  the asymptotic properties of $T^{(N)}$ depend on the moment properties of the initial position r.v. In the simple case where the initial velocity r.v. has a continuous density and the initial position r.v. has a density of the form $f_x(x) = \frac{C_\alpha}{1+|x|^{1+\alpha}}$, \jld{we have the following results:} for $\alpha \in (0,1)$, Theorem \ref{FTC_system2} holds and $T^{(N)}$ scales like $N^{2/\alpha}$\jld{;} for $\alpha = 1$, Theorem \ref{FTC_system3} holds and $T^{(N)}$ scales like $N^2\log N$\jld{;} and for $\alpha >3/2$ Theorem \ref{FTC_system} holds  and $T^{(N)}$ scales like $N^2$.  \seth{As mentioned earlier,} the case for $\alpha \in (1,3/2]$ is open, but \jld{the} numerical stud\jld{y of Fig. \ref{fig:TN_error_XAlpha_VN} suggests} the result of Theorem \ref{FTC_system} still holds and $T^{(N)}$ scales like $N^2$. 

\section{Results for \jld{Non-}elastic Collisions}\label{sec:sim}


\jld{The presence of non-elastic collisions ($\epsilon \ne 0$) significantly \jld{affects} the asymptotic behavior of the system.} First, due in part to the heating/cooling effects observed in Sec. \ref{sec:inelastic_intro}, the asymptotic properties of $t_i^{(N)}$ and $T^{(N)}$ are \jld{modified.}
\jld{Importantly,} the \jld{initial} velocity distribution play\jld{s} a role, \jld{which was not the case for} elastic collision\jld{s, whose asymptotic behavior was solely determined by} the distribution of initial positions. 
\jld{Second}, the analytic framework \jld{we} used to determine the correct asymptotic behavior of $t_i^{(N)}$ and $T^{(N)}$ is \jld{no longer} viable. \jld{As a consequence, we do not have the means} to rigorously derive the asymptotic distributions \jld{of these quantities}
at this time.  Instead, we \jld{describe how their medians scale}
through a numerical study using molecular dynamics (MD) simulations. 

\jld{Since} under elastic collision\ayd{s}, neither the limiting distribution of $t_i^{(N)}/N$ nor that of $T^{(N)}/N^2$ \jld{has a} mean\jld{, we} assume \jld{this is also the case under non-elastic collisions. We thus investigate how the medians of $t_i^{(N)}$ and $T^{(N)}$ scale for large $N$.} 
We restrict our focus to $N(0,1)$ \jld{initial} position \jld{distributions}, and study separate cases when the initial velocities are iid $N(0,1)$, $N(1,1)$, $U(-1,1)$, and $U(0,2).$ Due to numerical constraints, we limit our \jld{investigation} to systems no larger than $N=500$ and \jld{to} values of $\epsilon \in (-5\times 10^{-3},5\times 10^{-3}).$ 
\jld{We generate $10^4$ simulations for each value of $N$ and $\epsilon$ studied, and use bootstrapping to estimate confidence intervals for $M_t$, the median of the final collision time of a particle, and for $M_T$, the median of the final collision time of the system.}

We begin with a review of the properties of the sampling distribution of the sample median, highlighting consistency with known properties \jld{of} the sample median under elastic collisions. We then discuss the effects of \jld{non-}elastic collisions on $t_i^{(N)}$ and $T^{(N)}$. 
Consider initial velocities and positions which are standard normal random variables. Under elastic collisions, \jld{we know that} the final collision time for a single particle scales linearly with the number of particles in the system (Thm. \ref{FTC_particle1}) while the final collision time for the system scales linearly with $N^2$ (Thm. \ref{FTC_system}).  Thus, the median of the final collision time of a particle is of the form $M_t = C_t N$ at leading order as $N\to \infty$ where $C_t$ is the value of $\mu$ satisfying the equation $$\frac{1}{2} = \frac{1}{2\pi} \int_{\mathbb{R}^2} \exp\bigg( -\frac{X^2+V^2}{2} - \frac{1}{2\pi \mu}e^{-v^2/2} \int_\mathbb{R} |X-y| e^{-y^2/2}dy \bigg)dX dV.$$
A numerical calculation indicates $C_t\approx 0.412.$  \jld{Similarly, t}he median of the final collision time of the system behaves like $M_T = C_T N^2$ at leading order as $N\to \infty$, where $C_T=\jld{1 / \big(}{\pi\log 4}\jld{\big)}$, which one can calculate directly from the asymptotic distribution \jld{provided} in Thm. \ref{FTC_system}.  

\begin{figure}[h!]
\begin{center}
\includegraphics[width = 5 in]{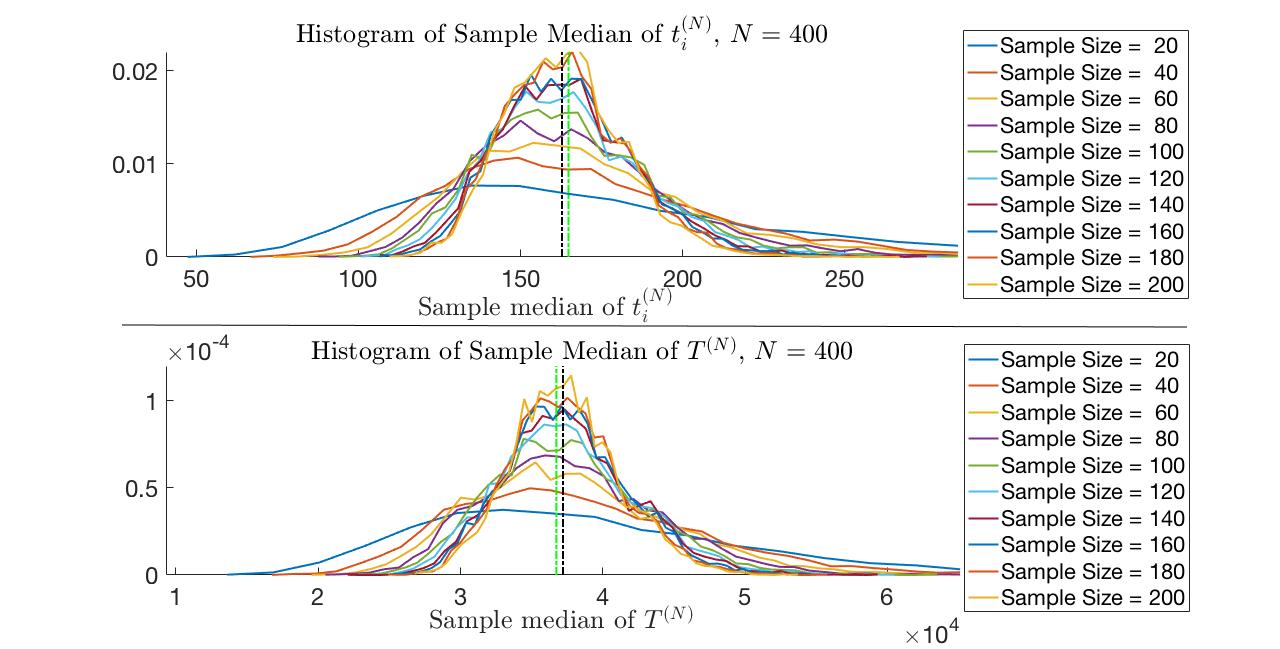}
\caption{Statistical properies of sample medians when initial position\jld{s} and velocit\jld{ies} are iid $N(0,1).$ (a) Numerical reconstruction of the sampling distribution of the median final collision time of a particle for increasing sample size in a system of 400 particle under elastic collisions.  The vertical dashed line \jld{in} black corresponds to a sample median generated from a sample of size $10^4.$ The line in green corresponds to the leading order behavior $M_t\approx C_tN.$
(b) Numerical reconstruction of the sampling distribution of the median final collision time of the system for increasing sample size in a system of 400 particle under elastic collisions.  The vertical dashed line \jld{in} black corresponds to a sample median generated from a sample of size $10^4.$ The line in green corresponds to the leading order behavior $M_T\approx C_TN^2.$}
\label{FIG:Density_sample_median}
\end{center}
\end{figure}

\begin{figure}[h!]
\begin{center}
\includegraphics[width = 5 in]{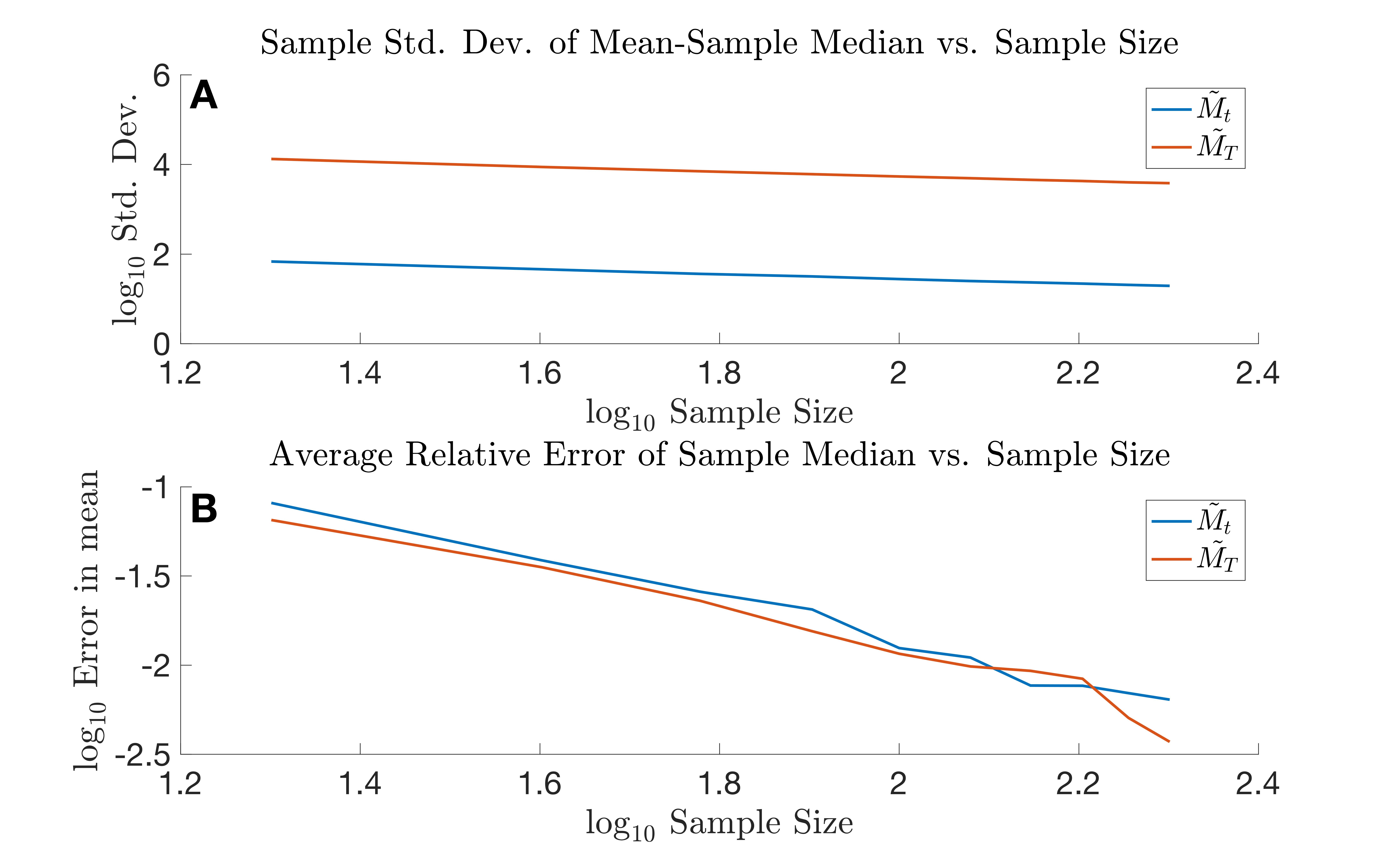}
\caption{(a) The sample standard deviation \jld{of} final collision time medians is $O(1/\sqrt{k})$ consistent with the results of \cite{Chu}. (b) The average relative error of the sample medians when compared to the true median is $O(1/k)$ and less than 1\% for both $\tilde{M_t}$ and $\tilde{M_T}$ when the sample size $k\ge 200.$  }
\label{FIG:sample_median_error}
\end{center}
\end{figure}

For a general random variable, $Y$, with density $f_Y$ and median $M_Y$, the sampling distribution of the sample median of $k$ iid samples from $Y$ converges to a normal distribution as the sample size approaches infinity.  The mean of the normal distribution is $M_Y$ and \jld{its} variance $\jld{1 / \big(}4k\, [f_Y(M_Y)]^2\jld{\big )}$ \cite{Chu}.  Histograms generated from sample medians under increasing sample sizes suggest this result extends to the sample median of both $t_i^{(N)}$ and $T^{(N)}$ (Fig. \ref{FIG:Density_sample_median}).
\jld{Indeed, the} $O(1/k)$ decay in the variance \jld{is} observed across all particle numbers simulated. Additionally, if one uses the sample median of $10^4$ simulations as an approximation for the true medians, $M_t$ and $M_T$, then the mean of the sample medians converges to the true median
at a rate which is $O(1/k)$ (Fig. \ref{FIG:sample_median_error}).  \jld{Specifically, f}or $k\ge 200$, the relative error between the mean sample median and \jld{the} true median approximated from $10^4$ simulations was less than 1\% for both $M_t$ and $M_T$ across all values of particle number simulated. 


\subsection{Median of $t_i^{(N)}$ under \jld{non-}elastic collisions}
	We \jld{first present results on} the behavior of the final collision time of a particle, $t_i^{(N)}.$ The median $M_t$ was approximated by selecting at random the final collision \jld{time} of one particle from each of $10^4$ simulations and taking the median of these values.   This process was repeated 100 times and the results were used to construct the 99\% confidence intervals for the true median. 
This construction of the confidence intervals assumes the approximately normal distribution of the sample medians observed for elastic collisions persists under \jld{non-}elastic collisions.  We use $M_t^{elas}$ to denote the median final collision time under elastic collisions and $M_t^{inel}$ to denote the median final collision time under \jld{non-}elastic collisions.
	
Before discussing the results, we propose an ansatz which relates path intersection times under elastic collisions to those under \jld{non-}elastic collisions.  Let $\epsilon =0$ and suppose that the path of particle 1 first intersects the path of particle 2\jld{, and} then proceeds to intersect the path of particle 3. Let $\tau_{1,j} = \frac{X_1-X_j}{V_j-V_1}$, $j=2,3$ be the path intersection times between particle 1 and particle\jld{s} $j=2,3$\jld{, and l}et $\tau_{1,3}'$ be \jld{the} time at which the paths of particles 1 and 3 intersect under \jld{non-}elastic collisions. If we assume particle 3 experiences a path intersection at approximately the same time that the paths \jld{of} particles 1 and 2 coincide, then
$$\tau_{1,3}' \approx \tau_{1,2} +\frac{(X_1+\tau_{1,2} V_1) - (X_3 + \tau_{1,2}V_3)}{(1-\epsilon)(V_3-V_1)} = \frac{1}{1-\epsilon}\tau_{1,3}.$$
\begin{figure}[h!]
\begin{center}
\includegraphics[width = 6 in]{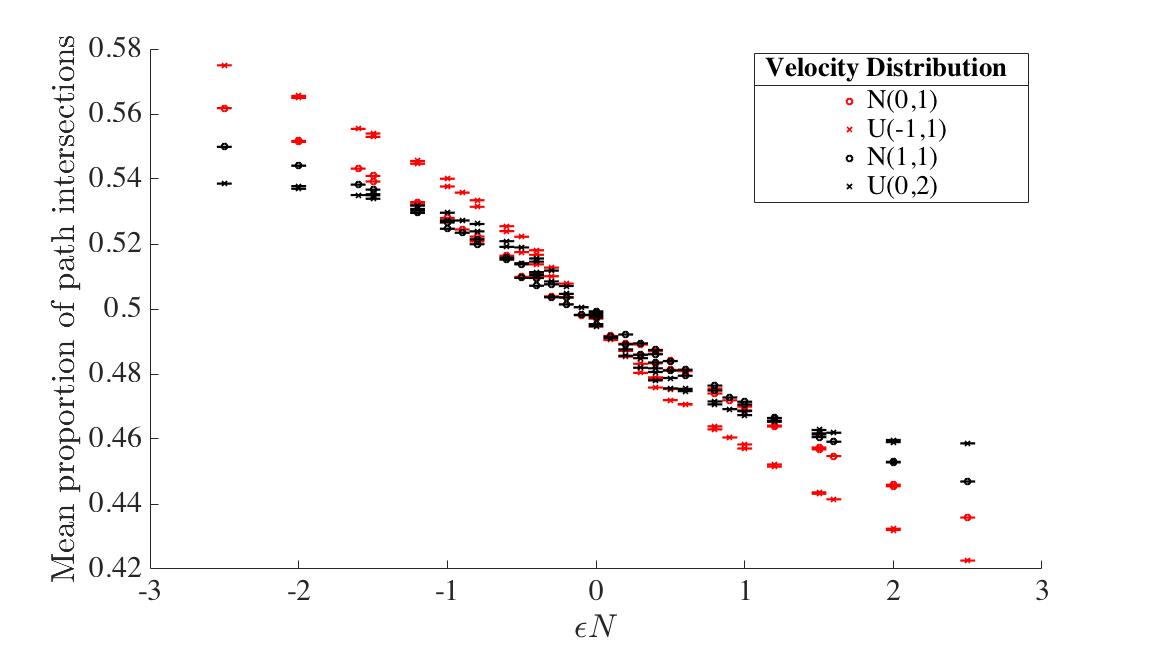}
\caption{The sample mean of the proportion of path intersections, $\overline{\alpha(\epsilon,N)},$ exhibits a clear dependence \jld{in} $\epsilon N$, which is influenced by the initial velocity distribution.}
\label{FIG:Path_Intersection_Proportions}
\end{center}
\end{figure}
\begin{figure}[h!]
\begin{center}
\includegraphics[width= 6 in]{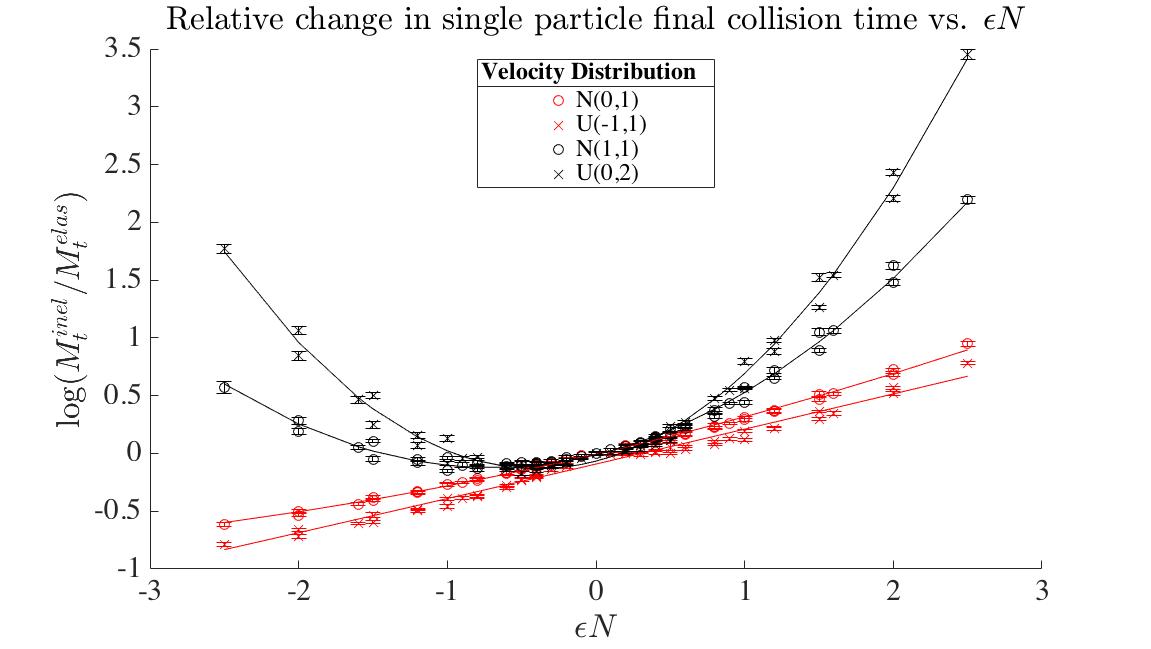}
\caption{Confidence intervals (95\%) of the relative change in the median single particle final collision time\jld{. This quantity} exhibit\jld{s} a \jld{clear} dependence on $\epsilon N$\jld{, which} is influenced by the velocity distribution.  The coefficients of the regression \ayd{$D_1(\epsilon N) +D_2(\epsilon N)^2$} applied to each of the four velocity distributions are \jld{displayed} in Table \ref{TABLE:Regression_Coefficient_t}.}
\label{FIG:Medians_inelastic_t}
\end{center}
\end{figure}
Thus, under \jld{non-}elastic collisions, the second path intersection time differs from the elastic path intersection time by a factor of $\frac{1}{1-\epsilon}.$ Generalizing this argument to the the $j^{th}$ interaction of particle 1, $$\tau_{1,j}' \approx \frac{1}{(1-\epsilon)^{j-1}} \tau_{1,j}.$$ This extension implicitly assumes that particle 1 has experienced an (approximately) equal number of path intersection as \jld{every} other particle with which it interacts.  Furthermore, the interactions should have occurred at approximately the same time.  As such, it is not reasonable to expect the ansatz to apply to the longest intersection times, i.e. those which correspond to the final collision time of the system.  
However, \jld{assuming that} one can extend this ansatz to the final collision \jld{time} of a given particle, then \jld{non}-elasticity alters $t_i^{(N)}$ by \jld{a} multiplicative factor of the form $$\frac{1}{(1-\epsilon)^{N\alpha(\epsilon,N)}} = e^{-\log(1-\epsilon) \jld{N} \alpha(\epsilon,N)}$$
where $\alpha(\epsilon,N)$ represents the proportion of path intersections \jld{experienced by particle} $i$. 
\jld{Since f}or $|\epsilon | \ll 1$, the multiplicative correction is approximately $\exp(\epsilon N \alpha(\epsilon,N))$\jld{, we expect} $\log(M_t^{inel} \jld{/} M_t^{elas})$ \jld{to} depend on $\epsilon N$ \jld{at} leading order.  
Furthermore, \jld{numerical simulations indicate that} the mean proportion of path intersections, $\overline{\alpha(\epsilon,N)}$, is a function of $\epsilon N$ (Fig. \ref{FIG:Path_Intersection_Proportions})\jld{, with leading order contribution equal to $1/2$. This}
is consistent with the sorting process interpretation of elastic collisions\jld{: e}ach particle experience\jld{s} $N-1$ path intersections\jld{, including both} positive and negative \jld{times; g}iven the symmetric distribution of $\tau_{i,j}$ about zero, half of these path intersections \jld{are expected to} occur for $t>0$ on average. \jld{Therefore, one would expect} $\log(M_t^{inel}/M_t^{elas})\sim \frac{\epsilon N}{2}$ \jld{at} leading order.
\jld{W}hile the general structure of this relationship is correct, including the dependence on $\epsilon N$, \jld{a} numerical reconstruction of $\log(M_t^{inel}/M_t^{elas})$ indicates this assumption is \jld{quantitatively inaccurate} (Fig. \ref{FIG:Medians_inelastic_t}). \jld{Indeed, a}pplying a regression of the form \ayd{$D_1 (\epsilon N) + D_2 (\epsilon N)^2$} to  the median final collision results indicates that $\ayd{D_1}\approx 1/3$ rather than the value of $1/2$ predicted by the ansatz (Table \ref{TABLE:Regression_Coefficient_t}).
The difference is likely due to the assumption that particle\jld{s} experience equal path intersections\jld{, which is only reasonable for small times.}
A \ayd{more rigorous} study of the \jld{number of path intersections experienced by colliding particles}
could shed some light on this issue.

\begin{table}[ht]
\centering
\begin{tabular}{|c|| c | c |}
\hline
Distribution &  $D_1$ & $D_2$ \\
\hline
$N(0,1)$   & 0.299364165261241  & 0.023741992291113 \\
\hline
$U(-1,1)$& 0.299564298338677  & -0.0284407509187195 \\
\hline
$N(1,1)$   & 0.315410496892726 & 0.221319632974651 \\
\hline
$U(0,2)$  & 0.333999962446071  & 0.403837320001947 \\
\hline
\end{tabular}
\caption{
\jld{Coefficients of} the regression \ayd{$D_1 \epsilon N + D_2(\epsilon N)^2$} applied to $\log(M_t^{inel}/M_t^{elas})$.  The linear \jld{term has a slope \ayd{$D_1$}} approximately equal \jld{to $1/3$} across all velocity distributions studied.  \jld{T}he quadratic coefficient \ayd{\jld{$D_2$}} is much larger for the distributions centered about \jld{$1$}.}
\label{TABLE:Regression_Coefficient_t}
\end{table}
\jld{T}his study \jld{highlights an important} difference \jld{between} elastic \jld{and non-elastic} collisions. \jld{While} the velocity distribution play\jld{s} no role in determining the scaling of $t_i^{(N)}$ \jld{when $\epsilon = 0$, it clearly affects the value of $C_2$ when $\epsilon \ne 0$, although the effect is}
weaker for those velocity distributions \jld{that} are symmetric about zero.

\subsection{Median of $T^{(N)}$ under \jld{non-}elastic collisions}

\jld{The median} $M_t$ \jld{considered in the previous section} captures the final collisions time of a randomly selected particle within the system, \jld{and therefore} reflects the \jld{behavior} of `typical' particle\jld{s}.  \jld{On the other hand, t}he final collision time of the system,
$T^{(N)}$, arises from \jld{those particles that} exhibit the largest final collision times,
whose behavior is \jld{thus} atypical.  As such, the ansatz proposed in the previous section cannot be expected to adequately describe the behavior of $T^{(N)}$.  Nonetheless, \jld{the numerical results of Fig. \ref{FIG:Medians_T_inelastic} confirm that} $\log(M_T^{inel}/M_T^{elas})$ still depends on $\epsilon N$ and \jld{on} the velocity distribution. \jld{Table \ref{TABLE:Regression_Coefficient_T} shows the coefficients obtained by applying  a regression of the form \ayd{$D_1(\epsilon N) + D_2(\epsilon N)^2$} to this quantity. As before, the effect on the quadratic term is stronger for distributions centered about $1$.}
\begin{table}[ht]
\centering
\begin{tabular}{| c ||  c | c|}
\hline
Distribution & $D_1$ & $D_2$ \\
\hline
$N(0,1)$ &  0.316098755619495 &	 0.353122072713743 \\
					\hline
$U(-1,1)$ & 0.294877488559631&  0.440060818253797		 \\
\hline
$N(1,1)$ & 0.270061962096432 & 1.65459638500048		 \\ 
\hline
$U(0,2)$ & 0.184011367253361 &  2.53476516441542 \\
\hline
\end{tabular}
\caption{\jld{Coefficients of} the regression \ayd{$D_1 \epsilon N + D_2(\epsilon N)^2$} applied to $\log(M_T^{inel}/M_T^{elas})$.  As in the study of $t_i^{(N)}$, the quadratic \jld{term} is much larger for the distributions centered about \jld{$1$}. 
}
\label{TABLE:Regression_Coefficient_T}
\end{table}

\begin{figure}[h!]
\begin{center}
\includegraphics[width=6 in]{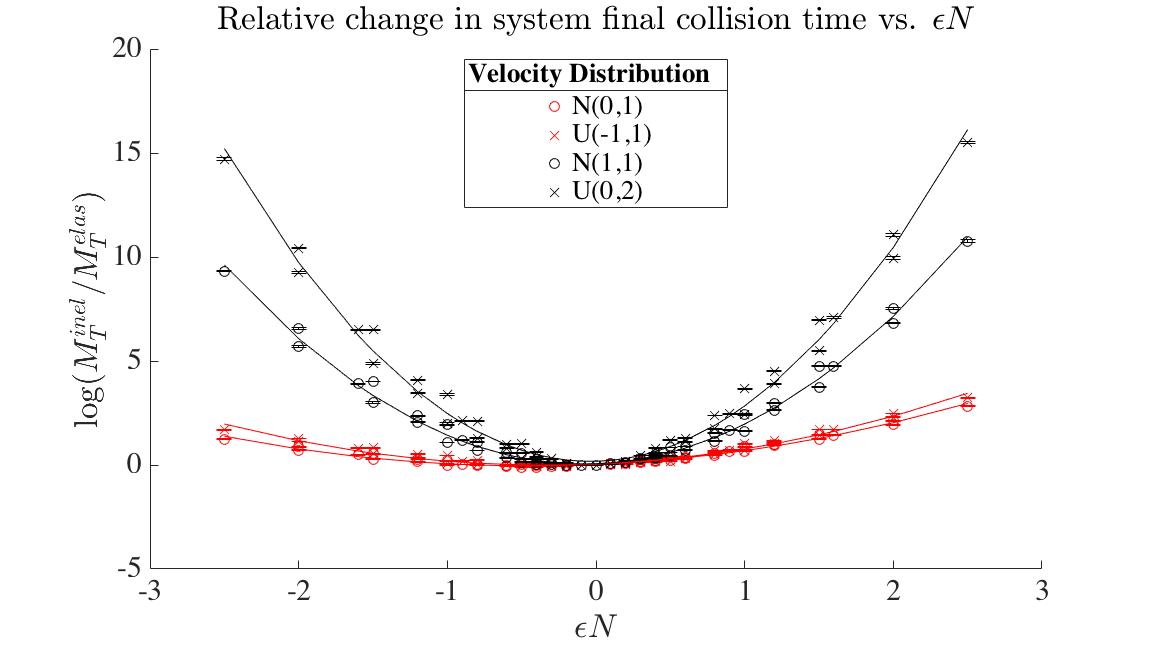}
\caption{Confidence intervals for $\log(M_T^{inel}/M_T^{elas})$.  \jld{The coefficients of the regression \ayd{$D_1(\epsilon N) +D_2(\epsilon N)^2$} applied to each of the four velocity distributions are \jld{displayed} in Table \ref{TABLE:Regression_Coefficient_T}.}}
\label{FIG:Medians_T_inelastic}
\end{center}
\end{figure}


\begin{figure}[h!]
\begin{center}
\includegraphics[width=6 in]{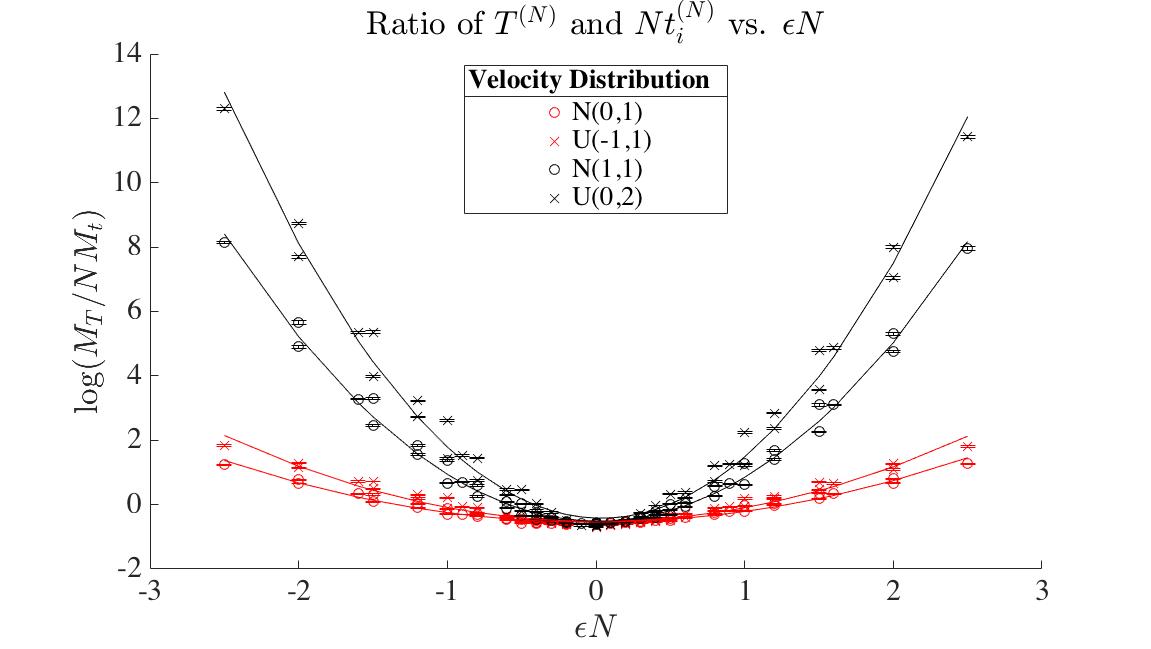}
\caption{Confidence intervals for $\log(M_T^{inel}/NM_t^{elas})$ are symmetric about $\epsilon N = 0.$  \jld{The coefficients of the regression} \ayd{$D_0+D_1(\epsilon N) +D_2(\epsilon N)^2$} \jld{applied to each of the four velocity distributions are displayed} in Table  \ref{TABLE:Regression_Coefficient_TNt}. }
\label{FIG:Medians_TN_inelastic}
\end{center}
\end{figure}

\jld{T}he asymmetry of $\log(M_T^{inel}/M_T^{elas})$ about $\epsilon N = 0$ \jld{observed in Fig. \ref{FIG:Medians_T_inelastic}} is due to the \jld{linear dependence of $t^{(N)}$ on $\epsilon N$. We checked that} the ratio $M_T/N M_t$ behaves like $e^{O(1)+D_2(\epsilon N)^2}$(\jld{Figure \ref{FIG:Medians_TN_inelastic} and} Table \ref{TABLE:Regression_Coefficient_TNt}).  The $O(1)$ term arises from differences in the coefficients \jld{$C_t$ and} $C_T$ \jld{that appear} in the scaling of the medians, $M_t \sim \jld{C}_t N$ and $M_T \sim C_T N^2$, under elastic collisions.

\begin{table}[h!]
\centering
\begin{tabular}{| c || c | c | c|}
\hline
Distribution & $D_0$ & $D_1$ & $D_2$ \\
\hline
$N(0,1)$ &-0.542655411094044 &
0.0171180888904638 &
0.312943997391057 \\
\hline
$U(-1,1)$ & -0.517690994670863 &
-0.00466592391659517& 
0.424631130596750\\
\hline
$N(1,1)$ & -0.512993538107585 &
-0.0466014260783037 &
1.40886890163564	\\
\hline
$U(0,2)$ & -0.419236799598809 & 
-0.153106277504774 & 
2.05627305470200 \\
\hline
\end{tabular}
\caption{\jld{Coefficients of} the regression $D_0 + D_1 \epsilon N + D_2(\epsilon N)^2$ applied to $\log(M_T^{inel}/NM_t^{elas})$.  The values of $C_1$ are much smaller than in the previous tables. 
}
\label{TABLE:Regression_Coefficient_TNt}
\end{table}

\subsection{Discussion of \jld{Non-}elastic Collisions}

The rescaling of velocities by \jld{the} factor $(1-\epsilon)>0$ \jld{each time} a collision \jld{occurs} results in an increase (decrease) in kinetic energy when $\epsilon <0$ ($\epsilon >0)$.  While the overall increase (decrease) in the speed of particles is likely to increase (decrease) the rate at which collisions occur, surprisingly, the final collision times do not respond accordingly.  For $|\epsilon N| \ll 1$, both $t_i^{(N)}$ and $T^{(N)}$ are decreasing functions of $\epsilon N$, consistent with the heating/cooling effects of the \jld{non-}elastic collisions.  However, as $|\epsilon N|$ moves away from zero, $T^{(N)}$ increases until it reaches values larger than that of elastic collisions. 
 \ayd{In the present simulations, this behavior is independent from the value, $0$ or $1$, of the average initial particle velocity. However, under non-elastic collisions, $t_i^{(N)}$ increases to values larger than that of elastic collisions only when the average initial velocity is $1$.}

\jld{The numerical investigations discussed in this section lay out a possible path towards a} rigorous construction of the asymptotic \jld{behavior} of the final collision times \jld{by identifying three central questions: (i) d}oes the proposed ansatz fail to accurately capture the effects of \jld{non-}elastic collisions due to differences in path intersection \jld{counts} and if so, to what extent? \jld{(ii)} How do properties of the \jld{initial} velocity distribution affect \jld{the} scaling of the (median) collision times? And \jld{(iii)}, what aspects of the collision dynamics give rise to the observed increase in the median collision time? This \jld{last question} is perhaps most difficult to answer \jld{for} $T^{(N)}$, which is influenced largely by rare events (large collision times). We believe connections with the sorting processes outlined in Sections \ref{sec:sorting} \jld{and} \ref{sec:inelastic_intro} may be able to offer some insight, \jld{although a preliminary} numerical investigation \jld{(which is beyond the scope of this article)} has proven inconclusive to this point.  


\section{Proofs of theorems}\label{sec:proof}
\subsection{Proofs of Theorems Regarding $t_i^{(N)}$}\label{sec:proof_tiN}
We begin the proofs concerning Thms. \ref{FTC_particle1} - \ref{FTC_particle3} by \seth{deriving a formula for} the exact distribution of $t_i^{(N)}=\max_{j\ne i} \tau_{i,j}$. Using independence of $\{(X_i,V_i)\}_{i=1}^N$ and by conditioning on the initial velocity and position of particle $i$, we can express the conditional probability of the event $\{t_i^{(N)}< \mu_N | (x_i,v_i)=(X,V)\}$ as follows.
\begin{equation}
\begin{split}
P(t_i^{(N)} < \mu_N | (x_i,v_i) = (X,V)) &= P(\tau_{i,1} < \mu_N, \dots \jld{,} \tau_{i,i-1} < \mu_N, \tau_{i,i+1}<\mu_N  \jld{,} \dots \jld{,} \tau_{i,N}< \mu_N |(x_i,v_i)=(X,V)) \\
&=[P(\tau_{i,1} < \mu_N|(x_i,v_i)=(X,V))]^{N-1} \\
&=\bigg[P\bigg(\frac{x_1-X}{V-v_1} < \mu_N \bigg)\bigg]^{N-1} .
\end{split}
\label{eq:cond_prob}
\end{equation}
Integrating Eq. \eqref{eq:cond_prob} \jld{with respect to} the distribution of $(X,V)$ yields the unconditional probability of $\{t_i^{(N)} < \mu_N\}$,
\begin{equation}
\begin{split}
P(t_i^{(N)} < \mu_N) &= \int\int f_x(X)f_v(V)P(\tau_{i,j} < \mu_N | (x_i,v_i) = (X,V)) dX dV \\
&=\int\int f_x(X)f_v(V)\bigg[P\bigg(\frac{x_1-X}{V-v_1} < \mu_N \bigg)\bigg]^{N-1} dX dV.
\end{split}
\label{eq:part_time_max_dist}
\end{equation}
Note that the integrand in Eq. \eqref{eq:part_time_max_dist} is dominated by $f_x(X)f_v(V)$. Thus, by dominated convergence, one can determine the asymptotic behavior $P(t_i^{(N)} < \mu_N)$ by studying the large $N$ limit of Eq. \eqref{eq:cond_prob}.  We search for a scaling of $\mu_N$ depending on $N$ such that $$\bigg[P(t_i^{(N)} < \mu_N|(x_i,v_i)=(X,V))\bigg]^{N-1}=\bigg[P\bigg(\frac{x_1-X}{V-v_1} < \mu_N \bigg)\bigg]^{N-1}$$
has a non-trivial limit. We begin by constructing an integral representation of $\displaystyle P\bigg(\frac{x_1-X}{V-v_1} < \mu_N \bigg)$
by conditioning on the value of $v_1$.
\begin{equation}
\begin{split}
P\bigg(\frac{x_1-X}{V-v_1} < \mu _N\bigg) &=
 P\bigg(x_1>X+(V-v_1)\mu_N, V < v_i\bigg) + P\bigg(x_i<X+(V-v_1)\mu_N ,V\ge v_i\bigg)  \\
&=\int_V^\infty \int_{X+\mu_N(V-v_1)}^\infty f_x(x_1)f_v(v_1) dx_1 dv_1 
 + \int_{-\infty}^V \int_{-\infty}^{X+\mu_N(V-v_1)}f_v(v_1)f_x(x_1) dx_1 dv_1 \\
&=\int_V^\infty f_v(v_1)[1-F_x(X+\mu_N(V-v_1)]dv_1  +\int_{-\infty}^V f_v(v_1)F_X(X+\mu_N(V-v_1) )dv_1
\end{split}
\label{eq:cond_prob_int1}
\end{equation}
Applying the change of variables $w=v_1-V$ gives a more useful representation of Eq. \eqref{eq:cond_prob_int1}.
\begin{equation}
\begin{split}
P\bigg(\frac{x_1-X}{V-v_1} < \mu _N\bigg)
& =\int_0^\infty f_v(w+V) [1-F_x(X-\mu_N w)]dw +\int_{-\infty}^0 f_v(w+V) F_x(X-\mu_N w)dw \\
&=1 - \int_0^\infty f_v(w+V) F_x(X-\mu_N w ) dw  - \int_{-\infty}^0 f_v(w+V)[1-F_x(X-\mu_N w )]dw 
\end{split}
\label{eq:cond_prob_int2}
\end{equation}
Inputting the result from Eq. \eqref{eq:cond_prob_int2} in Eq. \eqref{eq:part_time_max_dist} gives an integral representation of the distribution of $t_i^{(N)}.$
\begin{equation}
\begin{split}
P(t_i^{(N)} < \mu_N) &= \int_{\mathbb{R}^2} f_x(X)f_v(V)\bigg[1-\int_0^\infty f_v(w+V) F_x(X-\mu_N w ) dw \\
& \hspace{4 cm}-\int_{-\infty}^0 f_v(w+V)[1-F_x(X-\mu_N w )]dw\bigg]^{N-1}dXdV 
\end{split}
\label{eq:tiN_dist_exact}
\end{equation}
The results of Thms. \ref{FTC_particle1} - \ref{FTC_particle3} depend on  the pointwise limit of 
\begin{equation}
g(X,V,N) = \bigg[1-\int_0^\infty f_v(w+V) F_x(X-\mu_N w ) dw -\int_{-\infty}^0 f_v(w+V)[1-F_x(X-\mu_N w )]dw\bigg]^{N-1} 
\label{eq:g(X,V,N)}
\end{equation}
as $N\to \infty$ under a suitable choice of $\mu_N.$ 
In particular, as $g(X,V,N)$ is a probability, hence bounded by one, we have the following proposition.
\begin{proposition}
Suppose that $\lim_{N\to\infty} g(X,V,N)$ exists. Then,
\begin{equation}
\begin{split}
\lim_{N\to \infty} P\bigg(t_i^{(N)}< \mu_N\bigg) &= \lim_{N\to \infty}\int_{\mathbb{R}^2} f_x(X)f_v(V) g(X,V,N) dX dV= \int_{\mathbb{R}^2} f_x(X) f_v(V) \bigg(\lim_{N\to \infty} g(X,V,N) \bigg) dX dV
\end{split}
\label{eq:FCT_particle_limit}
\end{equation}
\end{proposition}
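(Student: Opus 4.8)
The plan is to recognize this proposition as a routine application of the dominated convergence theorem to the exact integral representation of $P(t_i^{(N)} < \mu_N)$ already derived in Eq. \eqref{eq:tiN_dist_exact}. That representation reads
\[
P(t_i^{(N)} < \mu_N) = \int_{\mathbb{R}^2} f_x(X)\, f_v(V)\, g(X,V,N)\, dX\, dV,
\]
with $g(X,V,N)$ as in Eq. \eqref{eq:g(X,V,N)}, so the first equality of the claim holds essentially by definition and all the work is in justifying the interchange of limit and integral in the second equality.

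First I would record the uniform bound $0 \le g(X,V,N) \le 1$ for all $N \ge 2$ and all $(X,V) \in \mathbb{R}^2$. This is most easily seen not from the integral expression in Eq. \eqref{eq:g(X,V,N)} but from the identification, via Eqs. \eqref{eq:cond_prob} and \eqref{eq:cond_prob_int2}, of $g(X,V,N)$ with $\big[P(\tau_{i,1} < \mu_N \mid (x_i,v_i) = (X,V))\big]^{N-1}$, i.e., the $(N-1)$st power of a number in $[0,1]$. It follows that the integrand obeys the pointwise bound $|f_x(X) f_v(V) g(X,V,N)| \le f_x(X) f_v(V)$, and the dominating function $f_x(X) f_v(V)$ is nonnegative, independent of $N$, and integrable on $\mathbb{R}^2$ since $\int_{\mathbb{R}^2} f_x(X) f_v(V)\, dX\, dV = 1$ as a product of two probability densities.

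Next, the hypothesis provides the pointwise limit $g_\infty(X,V) := \lim_{N\to\infty} g(X,V,N)$, so the sequence $(X,V) \mapsto f_x(X) f_v(V) g(X,V,N)$ converges pointwise (a.e.) to $(X,V)\mapsto f_x(X) f_v(V) g_\infty(X,V)$ while remaining dominated by the fixed integrable majorant above. Invoking dominated convergence then yields
\[
\lim_{N\to\infty} \int_{\mathbb{R}^2} f_x(X) f_v(V) g(X,V,N)\, dX\, dV = \int_{\mathbb{R}^2} f_x(X) f_v(V)\, g_\infty(X,V)\, dX\, dV,
\]
which is precisely the asserted identity. Measurability of each $g(\cdot,\cdot,N)$ in $(X,V)$, needed to apply the theorem, is immediate since $g$ is assembled from $F_x$ and $f_v$ by composition, integration, and taking powers.

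There is no real obstacle here: the only subtlety worth flagging is the bound $g \le 1$, which one should argue from the probabilistic meaning of $g$ rather than from its integral formula, where boundedness by one is less transparent. The substantive analytic content of Theorems \ref{FTC_particle1}--\ref{FTC_particle3} lies entirely in the subsequent computation of $\lim_{N\to\infty} g(X,V,N)$ under the appropriate choice of $\mu_N$, which this proposition takes as a hypothesis.
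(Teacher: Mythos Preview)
Your proposal is correct and matches the paper's approach exactly: the paper simply observes, in the sentence preceding the proposition, that ``as $g(X,V,N)$ is a probability, hence bounded by one,'' and leaves the dominated convergence step implicit. Your write-up is in fact more detailed than the paper's, which offers no formal proof beyond that one remark.
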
  

\noindent The details of the proofs of Thms.\ref{FTC_particle1} - \ref{FTC_particle3} consider the pointwise limit of $g(X,V,N)$ as $N\to \infty.$ \jld{Since the proof of Theorem 3 is similar to \jld{that of} Theorem 2, we have placed it in the appendix.}

\begin{proof}[Proof of Theorem \ref{FTC_particle1}] 
Suppose that $E|X_1| < \infty$ and $f_v$ is continuous and bounded.  Let $\mu_N = \mu N$ with $\mu>0.$  Consider the pointwise limit of $g(X,V,N)$ as $N\to \infty.$ 
\begin{equation}
\begin{split}
g(X,V,N) &= \bigg[1-\int_0^\infty f_v(w+V) F_x(X-\mu N w ) dw -\int_{-\infty}^0 f_v(w+V)[1-F_x(X-\mu N w )]dw\bigg]^{N-1} \\
&=\bigg[1-\frac{1}{ N}\int_0^\infty f_v(w/N+V) F_x(X-\mu w ) dw - \frac{1}{N} \int_{-\infty}^0 f_v(w/N+V)[1-F_x(X-\mu w )]dw \bigg]^{N-1}\\
&=\exp\bigg\{(N-1) \ln \bigg(1-\frac{1}{ N}\int_0^\infty f_v(w/N+V) F_x(X-\mu w ) dw \\
& \hspace{5 cm}- \frac{1}{N} \int_{-\infty}^0 f_v(w/N+V)[1-F_x(X-\mu w )]dw  \bigg)    \bigg\}
\end{split}
\end{equation}
Recall that $E|X|<\infty$, so both $\displaystyle \int_0^\infty F_x(X-\mu w)  dw$ and $\displaystyle \int_{-\infty}^0 [1-F_x(X-\mu w )] dw$
are finite. Additionally, 
\begin{equation*}
\begin{split}
0 &<  f_v(w/N +V) F_x(X-\mu w) \le M_v F_x(X-\mu w)  \\
0&< f_v(w/N +V)[1-F_x(X-\mu w )]  < M_v [1-F_x(X-\mu w )] \\
\end{split} 
\end{equation*}
where $M_v = \sup |f_v| $. By dominated convergence, it follows that 
\begin{equation*}
\begin{split}
\lim_{N\to \infty} \int_0^\infty f_v(w/N+V) F_x(X-\mu w ) dw 
&= \int_\mathbb{R} f_v(V) F_x(X-\mu w) dw \\
\lim_{N\to \infty} \int_{-\infty}^0 f_v(w/N +V)[1-F_x(X-\mu w )] dw
&= \int_{-\infty}^0 f_v(V)[1-F_x(X-\mu w )] dw
\end{split}
\end{equation*}
Hence, by the relation $\ln (1-x)= -x +o(x)$ for small $x$, then as $N$ tends to \jld{$\infty$},
\begin{equation}
\begin{split}
g(X,V,N) &= \exp \bigg\{ -\frac{N-1}{N}\bigg( \int_0^\infty f_v(V)F_x(X-\mu w ) dw+ \int_{-\infty}^0 f_v(V)[1-F_x(X-\mu w )]  \bigg) +o(1) \bigg\} \\
& \to \exp\bigg\{ -f_v(V) \bigg(\int_0^\infty F_x(X-\mu w ) dw+ \int_{-\infty}^0 [1-F_x(X-\mu w )]  dw  \bigg)  \bigg\}.\\
\end{split}
\end{equation}
Furthermore,
\begin{equation}
\begin{split}
&\int_0^\infty F_x(X-\mu w ) dw+ \int_{-\infty}^0 [1-F_x(X-\mu w )]  dw =\frac{1}{\mu }\int_0^\infty F_x(X-y)dy + \frac{1}{\mu} \int_{-\infty}^0 [1-F_x(X-y)] dy \\
&=\frac{1}{\mu} \bigg(\int_0^\infty y f_x(X-y)dy  - \int_{-\infty}^0 y f_x(X-y)dy    \bigg)=\frac{1}{\mu} \bigg( \int_{-\infty}^X (X-y) f_x(y) dy -\int_X^\infty (X-y) f_x(y) dy\bigg) \\
&=\frac{1}{\mu} \int_\mathbb{R} |X-y| f_x(y) dy
\end{split}
\label{eq:E|X-y|}
\end{equation}
So that for $\mu >0$, $\mu_N = \mu \cdot N$, and with the representation of $P(\frac{t_i^{(N)}}{N}\le \mu)$ from Eq. \eqref{eq:FCT_particle_limit}, we have the desired result\jld{:}
$$\lim_{N\to \infty} g(X,V,N) = \exp\bigg\{ -\frac{1}{\mu} f_v(V) \int_\mathbb{R} |X-y| f_x(y) dy\bigg\}. $$
Thus, by Proposition 3,
$$\lim_{N\to\infty}P\bigg(\frac{t_i^{(N)}}{N} < \mu \bigg) = \int_{\mathbb{R}^2} f_x(X)f_v(V) \exp\bigg\{ -\frac{1}{\mu} f_v(V) \int_\mathbb{R} |X-y| f_x(y) dy\bigg\} dX dV. $$
\end{proof}
\begin{proof}[Proof of Theorem \ref{FTC_particle2}] 

As in the proof of Thm. \ref{FTC_particle1}, we proceed by investigating the large $N$ behavior of $g(X,V,N)$.  However, unlike Thm. \ref{FTC_particle1}, we assume that $\{X_i\}_{i=1}^N$ have density $f_x(x) = \frac{C_\alpha}{1+|x|^{1+\alpha}}$ for $0<\alpha< 1$, so that $E|X_i|= \infty$ and $$\int_0^\infty F_x(X- w)  dw =\int_{-\infty}^0 [1-F_x(X- w )] dw=\infty.$$
\seth{Now for $\mu>0$, we take} $\mu_N = \mu\cdot N^{1/\alpha}$ and begin with the form of $g(X,V,N)$ from Eq. \eqref{eq:g(X,V,N)}.
\begin{equation}
\begin{split}
g(X,V,N) &= 
 \bigg[ 1 - \int_0^{N^{\alpha-2}} f_v(w+V)F_x(X-\mu N^{1/\alpha}  w) dw -\int_{-{N^{\alpha-2}} }^0 f_v(w+V)[1-F_x(X-\mu N^{1/\alpha}  w)] dw   \\
& \hspace{0.6 cm} - \int_{{N^{\alpha-2}} }^\infty f_v(w+V)F_x(X-\mu N^{1/\alpha}  w) dw- \int_{-\infty}^{-{N^{\alpha-2}} } f_v(w+V)[1-F_x(X-\mu N^{1/\alpha}  w)]dw \bigg]^{N-1}
\end{split}
\label{eq:g(X,V,N)_split}
\end{equation}
\jld{We now} investigate the large $N$ behavior of the different integrals within Eq. \eqref{eq:g(X,V,N)_split}.  The following statements follow from the \jld{boundedness} of $f_v$. For fixed $(X,V)$,
\begin{equation*}
\begin{split}
\int_0^{{N^{\alpha-2}} } f_v(w+V)F_x(X-\mu N^{1/\alpha} w) dw &= O(1)N^{\alpha-2} = o\bigg(\frac{1}{N}\bigg)\\
\int_{-{N^{\alpha-2}} }^0 f_v(w+V)[1-F_x(X-\mu N^{1/\alpha} w)] dw &= O(1) {N^{\alpha-2}}  = o\bigg(\frac{1}{N}\bigg)
\end{split}
\end{equation*}
For the remaining integrals in $g(X,V,N)$, we need some bounds on $F_x(-z)$ and $1-F_x(z)$ for large $z$.  Note that,
\begin{equation}
F_x(-z)= \int_{-\infty}^{-z} \frac{C_\alpha}{1+|t|^{1+\alpha}} dt=C_\alpha \int_{-\infty}^{-z} \frac{1}{|t|^{1+\alpha}} \frac{1}{1+\frac{1}{|t|^{1+\alpha}}} dt \\
\end{equation}
Thus, \seth{by the symmetry of $f_x$} we have the following bounds of $F_x(-z)=1-F_x(z)$
\begin{equation}
\frac{1}{1+\frac{1}{|z|^{1+\alpha}}}\frac{C_\alpha}{\alpha} |z|^{-\alpha}=\frac{C_\alpha}{1+\frac{1}{|z|^{1+\alpha}}} \int_{-\infty}^{-z}\frac{1}{|t|^{1+\alpha}} dt \le F_x(-z)=1-F_x(z) \le C_\alpha \int_{-\infty}^{-z} \frac{1}{|t|^{1+\alpha}}dt=\frac{C_\alpha}{\alpha} |z|^{-\alpha}.
\label{eq:FX_bounds}
\end{equation}
The upper and lower bounds converge as $z\to\infty$ so we have the following scaling laws for large $z$,
\begin{equation}
F_x(-z) =1-F_x(z) = \frac{C_\alpha}{\alpha} |z|^{-\alpha} + O\bigg(\frac{1}{|z|^{1+2\alpha}}\bigg) \\
\label{eq:scalings}
\end{equation}
Note that $N^{\alpha -2}\cdot N^{1/\alpha} = N^{(1-\alpha)^2/\alpha}.$  Thus, we can take $N$ large, \seth{depending on $\mu$ and $X$}, so that $\mu N^{1/\alpha} w - X\ge \mu N^{(1-\alpha)^2/\alpha} - X >> 1$ for all $w\ge N^{\alpha-2}$ allowing us to make use of the bounds from Eq. \eqref{eq:FX_bounds} in the remaining integrals in $g(X,V,N).$ Then, for fixed $(X,V)$,
\begin{equation}
\int_{N^{\alpha-2}}^\infty f_v(w+V) F_x(X-\mu N^{1/\alpha} w) dw 
=\int_{N^{\alpha-2}}^\infty f_v(w+V) \bigg(\frac{C_\alpha}{\alpha} \frac{1}{|X-\mu N^{1/\alpha} w |^\alpha} + O\bigg(\frac{1}{|X-\mu N^{1/\alpha}w|^{1+2\alpha}} \bigg) \bigg)dw \\
\end{equation}
Recall that $f_v$ is bounded. Let $M_v=\sup |f_v|$. We observe that the correction in the preceding integral is $o(1/N)$ as follows.
\begin{equation}
\begin{split}
\int_{N^{\alpha-2}}^\infty\frac{f_v(w+V) }{|X-\mu N^{1/\alpha} w|^{1+2\alpha}} dw &\le \int_{N^{\alpha-2}}^\infty\frac{M_v}{|X-\mu N^{1/\alpha} w|^{1+2\alpha}} dw \\
& = \frac{M_v}{2\alpha \mu N^{1/\alpha}} \cdot \frac{1}{|\mu N^{(1-\alpha)^2/\alpha} - X|^{2\alpha} } = O\bigg( \frac{1}{N^{1/\alpha+2(1-\alpha)^2}}\bigg)
\end{split}
\end{equation}
The quantity $2(1-\alpha)^2\ge 0$. Therefore, $\alpha \in (0,1)$ implies $1/\alpha+ 2(1-\alpha)^2 > 1$. Hence,
\begin{equation}
\begin{split}
\int_{N^{\alpha-2}}^\infty f_v(w+V) F_x(X-\mu N^{1/\alpha} w) dw&= \frac{1}{N} \frac{C_\alpha}{\alpha \mu^\alpha } \int_{N^{\alpha-2}}^\infty f_v(w+V) \frac{1}{w^\alpha} \frac{1}{|X/(\mu w N^{1/\alpha}) -1|^\alpha} dw + o\bigg(\frac{1}{N}\bigg) \\
& = \frac{1}{N} \frac{C_\alpha}{\alpha \mu^\alpha} \int_0^\infty \frac{f_v(w+V)}{w^\alpha} dw + o\bigg(\frac{1}{N}\bigg)
\end{split}
\end{equation}

Finally, for the final integral in Eq. \eqref{eq:g(X,V,N)_split}.  We make the following change of variables.
$$ \int_{-\infty}^{-{N^{\alpha-2}} } f_v(w+V)[1-F_x(X-\mu N^{1/\alpha}  w)]dw = \int_{N^{\alpha-2}}^\infty f_v(V-w)[1-F_x(X+\mu N^{1/\alpha} w)] dw.$$
To which we apply the same arguments as above to show
$$\int_{-\infty}^{-{N^{\alpha-2}} } f_v(w+V)[1-F_x(X-\mu N^{1/\alpha}  w)]dw = \frac{C_\alpha}{\alpha \mu^\alpha N} \int_0^\infty\jld{ \frac{f_v(V-w)}{w^{\alpha}}} dw + o\bigg(\frac{1}{N}\bigg)$$
Thus,
\begin{equation}
\begin{split}
&\lim_{N\to\infty} g(X,V,N)  = \jld{\lim_{N\to\infty}} \bigg(1- \frac{C_\alpha}{\alpha \mu^{\alpha} N} \int_0^\infty \frac{f_v(V+w)+f_v(V-w)}{w^\alpha}dw +o(1/N)  \bigg)^{N-1} \\
& = \exp \bigg\{ -\frac{C_\alpha}{\alpha \mu^\alpha} \int_0^\infty  \frac{f_v(V+w)+f_v(V-w)}{w^\alpha}dw  \bigg\}= \exp\bigg\{- \frac{C_\alpha}{\alpha \mu^\alpha} \int_{-\infty}^\infty \frac{f_v(V+w)}{|w|^\alpha} dw\bigg\}
\end{split}
\end{equation}
which in turns gives the desired result.
$$\lim_{N\to\infty} P\bigg( \frac{t_i^{(N)}}{N^{1/\alpha}} < \mu \bigg) = 
\int_{\mathbb{R}^2} f_x(X) f_v(V) \exp\bigg( -\frac{C(V)}{\mu^\alpha}\bigg) dX dV  $$
where 
$$C(V) = \frac{C_\alpha}{\alpha}\int_\mathbb{R} f_v(V+w)\frac{1}{|w|^\alpha} dw. $$
\end{proof}

\subsection{Proofs of Theorems Regarding $T^{(N)}$}\label{sec:proofs_TN}
The key observation in the proofs of Theorems \ref{FTC_system}, \ref{FTC_system2}, and \ref{FTC_system3} and Propositions \ref{prop:tight} is that for any time $z_N(t)\in\mathbb{R}$, the final collision time $T^{(N)}$ does not exceed $z_N(t)$ if and only if the random variable $A_{N,z_N(t)} = \sum_{1\le i<j\le N} \mathbf{1}_{\tau_{i,j} > z_N(t)}$ does not exceed one.  The distribution of $A_{N,z_N(t)}$ can be approximated by a Poisson distribution \cite{lao, silverman,Barbour_Eagleson}. \jld{W}e \seth{state} Corollary 2.1 from \cite{lao}.  
\begin{theorem1}
Let $\xi_1,\dots, \xi_N$ be $\mathbb{S}$-valued random variables and $h:\mathbb{S}^k \to \mathbb{R}$ be a symmetric Borel function. Let
\begin{equation*}
\begin{split}
p_{N,z} &= P(h(\xi_1,\dots,\xi_k)>z), \\
\lambda_{N,z}& = \binom{N}{k} p_{N,z}
\end{split}
\end{equation*}
If for some sequence of transformations, $z_N:\mathbb{T} \to \mathbb{R}$ with $\mathbb{T} \subset \mathbb{R}$, the conditions
\begin{equation*}
\lim_{N\to \infty} \lambda_{N,z_N(t)} = \lambda_t > 0
\end{equation*}
and
\begin{equation*}
\lim_{N\to\infty} N^{2k-1} P\bigg(h(\xi_1,\dots,\jld{\xi_k}) > z_N(t), h(\xi_{1+k-r},\dots, \xi_{2k-r}) > z_N(t)    \bigg) = \jld{0}
\end{equation*}
hold for all $t \in \mathbb{T} $ and $r=1,2,\dots,k-1$, then
\begin{equation*}
\lim_{N\to\infty} P(H_N \le z_N(t)) = \exp(-\lambda_t)
\end{equation*}
for all $t \in \mathbb{T} $ where $H_N = \max\limits_{1 \le j_1 \le \dots \le j_k\le N} h(\xi_{j_1},\dots, \xi_{j_k}).$
\end{theorem1}
\seth{Returning to our problem, let} $\xi_i = (X_i,V_i)$ be the initial position and velocity of particle $i$ and $h:\mathbb{S}^2 \to \mathbb{R}$ be the function which gives the path intersection time of two particles
$$h(\xi_i,\xi_j) = h\bigg((X_i,V_i),(X_j,V_j)\bigg) = \frac{X_i-X_j}{V_j-V_i}=\tau_{i,j}.$$
Then $H_N$ from the Silverman-Brown Limit Law stated above is the maximum collision time of the system, $T^{(N)}$.  To prove Theorems \ref{FTC_system}, \ref{FTC_system2}, and \ref{FTC_system3}, we need to find a time scale $z_N(t)$ such that 

\begin{align}
\binom{N}{2} P(\tau_{1,2} > z_N(t)) &\to \lambda_t >0 \label{eq:gen_req_a} \\
N^3P(\tau_{1,2} >z_N(t), \tau_{2,3} > z_N(t) ) &\to 0\label{eq:gen_req_b}.
\end{align}
\jld{The r}andom variable $A_{N,z_N(t)}$ \jld{previously introduced} is a summation of the $\binom{N}{2}$ dependent, Bernoulli random variables, $\mathbf{1}_{\tau_{i,j} > z_N(t)}$.  \seth{Recounting the idea of the proof of the Silverman-Brown law,} when Eq. \eqref{eq:gen_req_b} is satisfied, these random variable become sufficiently uncorrelated as $N\to\infty$ so that the distribution of $A_{N,z_N(t)}$ is approximately $Bin(\binom{N}{2},p_{N,z_N(t)})$.  The choice of $z_N(t)$ which gives a non-trivial limit in Eq. \eqref{eq:gen_req_a} allows one to approximate the distribution of $A_{N,z_N(t)}$ as $Poisson(\lambda_t)$. \seth{We mention} \jld{that} detailed error bounds for the Poisson approximation can be found in \cite{Barbour_Eagleson}.

\begin{proof}[Proof of Theorem \ref{FTC_system}]
Let $z_N(t) = \binom{N}{2} t$ and consider the behavior of $\lambda_{N,z_N(t)}$ as $N\to \infty.$
\begin{equation}
\begin{split}
\lambda_{N,z_N(t)} & = \binom{N}{2} P(\tau_{1,2} > z_N(t) ) =\binom{N}{2} \bigg(  1 - P(\tau_{1,2} \le z_N(t) ) \bigg)\\
&=\binom{N}{2} \bigg(\int_{\mathbb{R}^2}\int_0^\infty  f_x(X)f_v(V) f_v(w+V)F_x(X-z_N(t) w) dwdXdV \\
& \hspace{2.5 cm} +\int_{\mathbb{R}^2}\int_{-\infty}^0f_x(X)f_v(V) f_v(w+V) (1-F_x(X-z_N(t) w))dw dX dV  \bigg)
\label{Eq:ptau>z}
\end{split}
\end{equation}
where the last equality follows from Eq. \eqref{eq:tiN_dist_exact} taking $N=2$ and noting that $t_1^{(2)}=\tau_{1,2}$ \jld{(} recall that $\jld{t_1^{(2)}}$ is the (first and) last collision time in a system of two particles\jld{)}.  Making use of the change of variable $y=\binom{N}{2} w$ we have the following representation for $\lambda_{N,z_N(t)}$
\begin{equation}
\begin{split}
\lambda_{N,z_N(t)}  &= \int_{\mathbb{R}^2}\int_0^\infty f_x(X)f_v(V) f_v\bigg(V+\frac{y}{\binom{N}{2}}\bigg) F_x(X-t y)dy dX dV\\
& \hspace{2 cm}+ \int_{\mathbb{R}^2}\int_{-\infty}^0 f_x(X)f_v(V) f_v\bigg(V+\frac{y}{\binom{N}{2}}\bigg) (1-F_x(X-t y)) dy dX dV
\end{split}
\end{equation}
Under the assumption that $f_v$ is bounded and $E|X_1|<\infty$, we can evaluate $\lim_{N\to \infty} \lambda_{N,z_N(t)}$ by use of the dominated convergence theorem.
\begin{equation}
\begin{split}
\lim_{N\to \infty} \lambda_{N,z_N(t)} &= \int_{\mathbb{R}^2}\int_0^\infty f_x(X)(f_v(V))^2 F_x(X-ty) dy dX dV \\
& \hspace{2 cm} + \int_{\mathbb{R}^2}\int_{-\infty}^0 f_x(X) (f_v(V))^2 (1-F_x(X-ty)) dy dX dV \\
&=\bigg(\int_{-\infty}^\infty f_v^2(V) dV \bigg) \int_{-\infty}^\infty f_x(X) \bigg(\int_0^\infty F_x(X-ty) dy + \int_{-\infty}^0 (1-F_x(X-ty))dy   \bigg)dX \\
&=\frac{1}{t} \int_{-\infty}^\infty f_v^2(V) dV \cdot \int_{\mathbb{R}^2} |X-y|f_x(X)f_x(y) dy dX
\end{split} 
\end{equation}
where the last equality follows from the same argument as in Eq. \eqref{eq:E|X-y|}.  We have now determined the appropriate scaling to use in the Silverman-Brown limit law.  All \jld{that} remains to be shown is that $$N^3 P(\tau_{1,2} >z_N(t), \tau_{2,3}>z_N(t)) \to 0$$ as $N\to \infty.$ To prove this limit we first note the event $\{\tau_{1,2} >z_N(t), \tau_{2,3}>z_N(t)\}$ is equivalent to the event $\{t_2^{(3)} > z_N(t)\}$.  We can construct an integral representation for $P(t_2^{(3)} > z_N(t))$ following a similar argument used in the construction of Eq. \eqref{eq:tiN_dist_exact}.

\begin{equation}
\begin{split}
& N^3P(\tau_{1,2} >z_N(t), \tau_{2,3}>z_N(t)) \\
 =\ & N^3 \int_{\mathbb{R}^2}f_x(X)f_v(V)P\bigg(\frac{x_1-X}{V-v_1} > z_N(t), \frac{x_3-X}{V-v_3} \jld{> z_N(t)} \bigg| (x_2,v_2)=(X,V)\bigg) dXdV \\
 =\ & N^3 \int_{\mathbb{R}^2}f_x(X)f_v(V)\bigg[P\bigg(\frac{x_1-X}{V-v_1} > z_N(t) \bigg) \bigg]^2dXdV 
 \end{split}
 \label{eq:joint_int}
 \end{equation}
 From Eq. \eqref{eq:cond_prob_int2}, it follows that
 \begin{align*}
 P\bigg(\frac{x_1-X}{V-v_1} > z_N(t) \bigg) &= 1- P\bigg(\frac{x_1-X}{V-v_1} \le z_N(t) \bigg) \\
 &=\int_0^\infty f_v(w+V)F_x(X-z_N(t) w) dw+\int_{-\infty}^0 f_v(w+V) (1-F_x(X-z_N(t) w))dw 
 \end{align*}
 Substituting the above equality into Eq. \eqref{eq:joint_int}, we continue the calculation.
 \begin{equation}
 \begin{split}
 &N^3P(\tau_{1,2} >z_N(t), \tau_{2,3}>z_N(t)) \\
 &=N^3\int_{\mathbb{R}^2} f_x(X)f_v(V) \bigg( \int_0^\infty f_v(w+V)F_x(X-z_N(t) w) dw +\int_{-\infty}^0 f_v(w+V) (1-F_x(X-z_N(t) w))dw  \bigg)^2 dX dV \\
&=N^3\int_{\mathbb{R}^2} f_x(X)f_v(V) \bigg( \int_0^\infty f_v(w+V) P(X'\le X-z_N(t)w) dw + \int_0^\infty f_v(V-w)P(X'> X + z_N(t)w) dw \bigg)^2dX dV
\end{split}
\end{equation}
where $X'$ has the same distribution as $X_1$.  We now make use of the assumption that $E|X_1|^{3/2} <\infty$ . Also, recall $z_N(t) = \binom{N}{2} t.$ 
\begin{equation}
\begin{split}
&N^3P(\tau_{1,2} >z_N(t), \tau_{2,3}>z_N(t)) \\
&\le N^3\int_{\mathbb{R}^2} f_x(X)f_v(V) \bigg( \int_0^\infty f_v(w+V) P( |X-X'|\ge z_N(t) w) dw \\
&\hspace{4 cm}+ \int_0^\infty f_v(V-w)P(|X'- X| >  z_N(t)w) dw \bigg)^2dX dV \\
&=N^3\int_{\mathbb{R}^2} f_x(X)f_v(V) \bigg( \int_0^\infty [f_v(V-w)+f_v(V+w)] P(|X'-X| \ge z_N(t) w ) dw \bigg)^2 dXdV \\
&\le N^3 \int_{\mathbb{R}^2} f_x(X)f_v(V) \bigg(\int_0^\infty [f_v(V-w)+f_v(V+w)] \frac{ E_{X'} (|X'-X|^{3/4}\mathbbm{1}_{|X'-X|>z_N(t) w})}{(z_N(t) w)^{3/4}} dw \bigg)^2 dX dV \\
&= C_N(t)   \int_{\mathbb{R}^2} f_x(X)f_v(V) \bigg(\int_0^\infty \frac{f_v(V-w)+f_v(V+w)}{w^{3/4}} E_{X'} (|X'-X|^{3/4}\mathbbm{1}_{|X'-X|>z_N(t) w}) \jld{dw}\bigg)^2 dX  dV
\end{split}
\label{eq:Pnz_ineq}
\end{equation}
where \seth{$E_{X'}[\cdot]$ is expectation with respect to $X'$ and} $C_N(t) = \frac{N^3}{(N^2-N)^{3/2}} \frac{2^{3/2}}{t^{3/2}} \to \big(\frac{2}{t}\big)^{3/2}$ as $N\to\infty.$ Thus, to show $N^3P(\tau_{1,2}>z_N(t),\tau_{2,3}>z_N(t))\to 0$ as $N\to\infty$, we need to show the final integral in Eq. \eqref{eq:Pnz_ineq} approaches zero as $N\to\infty.$ The integrand can be dominated as follows.
\begin{equation}
\begin{split}
&f_x(X)f_v(V) \bigg(\int_0^\infty \frac{f_v(V-w)+f_v(V+w)}{w^{3/4}} E_{X'} (|X'-X|^{3/4}\mathbbm{1}_{|X'-X|>z_N(t) w}) \jld{dw} \bigg)^2  \\
&\le f_x(X)f_v(V) \bigg( \int_0^\infty \frac{f_v(V-w)+f_v(V+w)}{w^{3/4}}dw\bigg)^2 \bigg(E_{X'} (|X'-X|^{3/4}) \bigg)^2 \\
&\le f_x(X)f_v(V) \bigg( \int_1^\infty (f_v(V-w)+f_v(V+w)) \jld{dw} + \int_0^1 \frac{2M_v}{w^{3/4}} dw\bigg)^2 E_{X'} |X'-X|^{3/2} \\
&\le f_x(X)f_v(V)(1+8M_v)^2 E_{X'}(|X'|+|X|)^{3/2}
\end{split}
\label{eq:Pnz_dom}
\end{equation} 
where $M_v = \max\limits_{y} f_v(y).$   The final equation in Eq. \eqref{eq:Pnz_dom} is integrable since $X_1$ has a 3/2 moment.  Using dominated convergence twice,
\begin{equation}
\begin{split}
&\lim_{N\to \infty} \int_{\mathbb{R}^2} f_x(X)f_v(V) \bigg(\int_0^\infty \frac{f_v(V-w)+f_v(V+w)}{w^{3/4}} E_{X'} (|X'-X|^{3/4}\mathbbm{1}_{|X'-X|>z_N(t) w}) \jld{dw} \bigg)^2 dX  dV \\
&= \int_{\mathbb{R}^2} f_x(X)f_v(V) \bigg(\lim_{N\to\infty} \int_0^\infty \frac{f_v(V-w)+f_v(V+w)}{w^{3/4}} E_{X'} (|X'-X|^{3/4}\mathbbm{1}_{|X'-X|>z_N(t) w}) \jld{dw} \bigg)^2 dX  dV \\
&=\int_{\mathbb{R}^2} f_x(X)f_v(V) \bigg(\int_0^\infty \frac{f_v(V-w)+f_v(V+w)}{w^{3/4}} \lim_{N\to\infty} E_{X'} (|X'-X|^{3/4}\mathbbm{1}_{|X'-X|>z_N(t) w}) \jld{dw} \bigg)^2 dX  dV
\end{split}
\end{equation}
Since $E|X'|^{3/2}<\infty$, it follows that $$\lim_{N\to\infty} E\jld{\bigg(}|X'-X|^{3/4}\mathbbm{1}_{|X'-X|>z_N(t)w}\jld{\bigg) =} \ 0$$ for all $w> 0.$  Thus, we can conclude that $$N^3P\big(\tau_{1,2}>z_N(t), \tau_{2,3}>z_N(t) \big)\to 0$$ as $N\to\infty$. The final requirement of the Silverman-Brown limit law is satisfied, therefore

$$P\bigg(\frac{T^{(N)}}{\binom{N}{2}} \le \mu\bigg) \to  e^{-C/\mu }  $$
where $$C = \int_{\mathbb{R}^2} |x-y|f_x(x) f_x(y) dx dy \cdot \int_\mathbb{R} f_v^2(v) dv. $$

\end{proof}

\begin{proof}[Proof of Theorem \ref{FTC_system2}]
Applying the Silverman-Brown limit law used in the proof of Theorem \ref{FTC_system} again with $z_N(t) = N^{2/\alpha} t$, we need only verify the following two requirements:
\begin{align*}
&\binom{N}{2} P(\tau_{1,2} > N^{2/\alpha} t) \to \frac{C_\alpha}{2at^{\alpha}}\int_{\mathbb{R}^2} f_v(V) \frac{f_v(V+w)}{|w|^\alpha} dw dV \\
&N^3 P(\tau_{1,2} > N^{2/\alpha}t, \tau_{2,3} > N^{2/\alpha} t ) \to 0.
\end{align*}
We begin by making use of Eq. \eqref{Eq:ptau>z}
\begin{equation}
\begin{split}
\binom{N}{2} P(\tau_{1,2} > N^{2/\alpha} t ) &=\binom{N}{2} \bigg(\int_{\mathbb{R}^2}f_x(X)f_v(V) \int_0^\infty  f_v(w+V)F_x(X-N^{2/\alpha}t w) dwdXdV \\
& \hspace{2.5 cm} +\int_{\mathbb{R}^2}f_x(X)f_v(V) \int_{-\infty}^0 f_v(w+V) (1-F_x(X-N^{2/\alpha}t  w))dw dX dV  \bigg) 
\end{split}
\end{equation}
In the proof of Thm. \ref{FTC_particle2}, it was shown that for large $N$, 
\begin{equation}
\begin{split}
\int_0^\infty  f_v(w+V)F_x(X-N^{1/\alpha}t w) dw &= \frac{C_\alpha}{\alpha t^\alpha N } \int_0^\infty \frac{f_v(V+w)}{|w|^\alpha} dw  + o(N^{-1}) \\
 \int_{-\infty}^0 f_v(w+V) (1-F_x(X-N^{\jld{1}/\alpha}t  w))dw &= \frac{C_\alpha}{\alpha t^\alpha N } \int_0^\infty \frac{f_v(V-w)}{|w|^\alpha} dw  + o(N^{-1})
 \end{split}
 \label{eq:stablealpha_asym}
\end{equation}
Replacing $N$ with $N^2$, we have the following relationship for large $N$.
\begin{equation}
\begin{split}
\binom{N}{2} P(\tau_{1,2} > N^{2/\alpha} t ) & = \binom{N}{2} \int_{\mathbb{R}^2} f_x(X)f_v(V) \bigg(\frac{C_\alpha}{\alpha t^\alpha N^2} \int_\mathbb{R} \frac{f_v(V+w)}{|w|^\alpha}dw  +o(N^{-2}) \bigg) dXdV \\
&\to \frac{C_\alpha}{2\alpha t^\alpha} \int_{\mathbb{R}^2} f_v(V)\frac{f_v(V+w)}{|w|^\alpha} dw dV \jld{\qquad \text{as } N \to \infty.}
\end{split}
\end{equation}
For the second requirement, we make use of the same asymptotic \jld{expansion} as \jld{in} Eq. \eqref{eq:stablealpha_asym}.
\begin{equation}
\begin{split}
&N^3P(\tau_{1,2} >N^{2/\alpha} t , \tau_{2,3} > N^{2/\alpha} ) \\
&=N^3 \int_{\mathbb{R}^2}f_x(X)f_v(V) \bigg(\int_0^\infty  f_v(w+V)F_x(X-N^{2/\alpha}t w) dw +  \int_{-\infty}^0 f_v(w+V) (1-F_x(X-N^{2/\alpha}t  w))dw\bigg)^2 dX dV  \\
&= N^3 \int_{\mathbb{R}^2}f_x(X)f_v(V)  \bigg(\frac{C_\alpha}{\alpha t^\alpha N^2} \int_\mathbb{R} \frac{f_v(V+w)}{|w|^\alpha} dw +o(N^{-2}) \bigg)^2 dXdV \\
&= \frac{C_\alpha^2}{\alpha^2\jld{t^{2 \alpha}}N} \int_\mathbb{R}f_v(V) \bigg(\int_\mathbb{R} \frac{f_v(V+w)}{|w|^\alpha} dw\bigg)^2 dV +o(N^{-1}) \to 0 \jld{\qquad \text{as } N \to \infty.}
\end{split}
\end{equation}
Having satisfied the requirements of the limit law, the desired result follows and the proof is complete.
\end{proof}

\begin{proof}[Proof of Proposition \ref{prop:tight}] It is sufficient to show $$\lim_{A\to\infty} \limsup_{N\to\infty} P\bigg(\bigg|\frac{T^{(N)}}{\binom{N}{2}} \bigg| > A\bigg)=0.$$
Fix $A>0$, then
\begin{equation}
\begin{split}
P\bigg(\frac{T^{(N)}}{\binom{N}{2}} > A \bigg) &= P\bigg(\max_{i,j} \tau_{i,j}> \binom{N}{2} A \bigg) = P\bigg(\sum_{i,j} \mathbbm{1}_{\tau_{i,j} > \binom{N}{2} A} \ge 1 \bigg) \\
&\le E\bigg(\sum_{i,j} \mathbbm{1}_{\tau_{i,j} > \binom{N}{2} A}\bigg) =\binom{N}{2} P\bigg(\tau_{1,2} > \binom{N}{2} A\bigg) \\
&\xrightarrow{N\to\infty} \frac{1}{A} \int_{\mathbb{R}} f_v^2(v)dv \cdot \int_{\mathbb{R}^2} |x-y|f_x(x)f_x(y) dx dy 
\end{split}
\end{equation}
The convergence in the final line follows from arguments made in the proof of Thm. \ref{FTC_system}. Thus, $$\lim_{A\to \infty } \lim_{N\to\infty} P\bigg(\frac{T^{(N)}}{\binom{N}{2}} > A \bigg) =0.$$
Also, for any $A>0$,
\begin{equation}
\begin{split}
P\bigg(\frac{T^{(N)}}{\binom{N}{2}} < -A \bigg) &\le  P(T^{(N)} < 0) \\
&\le P(\tau_{1,2} < 0, \tau_{3,4} < 0, \dots, \tau_{2\left\lfloor N/2\right\rfloor -1, 2\left\lfloor N/2\right\rfloor}< 0) \\
&=\bigg[P(\tau_{1,2} < 0 )\bigg]^{\left\lfloor N/2\right\rfloor} \xrightarrow{N\to\infty} 0.
\end{split}
\end{equation}
Thus, $\displaystyle \lim_{A\to\infty} \limsup_{N\to\infty} P\bigg(\bigg|\frac{T^{(N)}}{\binom{N}{2}} \bigg| > A\bigg)=0.$
\end{proof}

\section*{Acknowledgments}
\ayd{A large number of simulations were required in the numerical reconstruction of the distribution of $T^{(N)}/\binom{N}{2}$ under elastic collisions and the investigation of $M_t$ and $M_T$ under non-elastic collisions. An allocation of computer time from the UA Research Computing High Performance Computing (HPC) at the University of Arizona is gratefully acknowledged.}
This material is based upon work supported ARO grant W911NF-14-1-0179. 

\appendix
\section{Proof of Theorem 3}


In this case, let $\mu_N =\mu  N \log (N)$.  We now proceed by splitting the integrals  $g(X,V,N)$, again  with the aim of extracting the $O(1/N)$ components of the integrals within $g(X,V,N)$ as $N\to \infty$. 
\begin{equation}
\begin{split}
g(X,V,N) & = \bigg[1 - \int_0^\infty f_v(V+w)F_x(X-\mu N \log N w)dw - \int_{-\infty}^0f_v(V+w)[1-F_x(X-\mu N \log N w)]dw \bigg]^{N-1}\\
&=\bigg[1-\int_0^\infty \bigg(f_v(V+w)F_x(X-\mu N \log N w) + f_v(V-w)[1-F_x(X+\mu N \log N w)]\bigg) dw \bigg]^{N-1}\\
&=\bigg[1- \int_0^{\frac{1}{N\sqrt{\log N}}} \bigg(f_v(V+w)F_x(X-\mu N \log N w) + f_v(V-w)[1-F_x(X+\mu N \log N w)]\bigg) dw  \\
&\hspace{.9 cm} \jld{-} \int_{\frac{1}{N\sqrt{\log N}}}^\infty \bigg(f_v(V+w)F_x(X-\mu N \log N w) + f_v(V-w)[1-F_x(X+\mu N \log N w)]\bigg) dw \bigg]^{N-1}
\end{split}
\label{eq:g(X,V,N)_split2}
\end{equation}

The first integral in Eq. \eqref{eq:g(X,V,N)_split2} over the interval $(0,\frac{1}{N\sqrt{\log N}})$ is $O(\frac{1}{N\sqrt{\log N}})$. In the remaining integral, we first make the change of variables $w\mapsto N\log N w.$ 
\begin{equation}
\begin{split}
&\int_{\frac{1}{N\sqrt{\log N}}}^\infty \bigg(f_v(V+w)F_x(X-\mu N \log N w) + f_v(V-w)[1-F_x(X+\mu N \log N w)]\bigg) dw  \\
&=\frac{1}{N\log N}\int_{\sqrt{\log N}}^\infty \bigg[  f_v\big(V+\frac{w}{N\log N}\big) F_x(X- \mu w)  + f_v\big(V-\frac{w}{N \log N}\big)[1-F_x(X+\mu w)]  \bigg] dw
\end{split}
\end{equation}
Since $w > \sqrt{\log N}$, we can make use of the scalings from Eq. \eqref{eq:scalings} for $F_x(X-\mu w)$ and $1-F_x(X+\mu w)$ \seth{valid for} $\alpha =1$ and continue the preceding calculation. \jld{[I removed factors of $\pi$ in the $O(1)$ terms below]} 
\begin{equation}
\begin{split}
&\frac{1}{N\log N}\int_{\sqrt{\log N}}^\infty \bigg[  f_v\big(V+\frac{w}{N\log N}\big) F_x(X- \mu w)  + f_v\big(V-\frac{w}{N \log N}\big)[1-F_x(X+\mu w)]  \bigg] dw\\
&=\frac{1}{N\log N}\int_{\sqrt{\log N}}^\infty \bigg[ \frac{ f_v\big(V+\frac{w}{N\log N}\big)}{\pi(\mu w -X)}\bigg(1+\frac{O(1)}{(\mu w -X)^2} \bigg)+ \frac{f_v\big(V-\frac{w}{N \log N}\big)}{\pi(\mu w +X)}\bigg(1+\frac{O(1)}{(\mu w+X)^2}\bigg)  \bigg] dw \\
&= \frac{1}{N\log N}\int_{\sqrt{\log N}}^{N\sqrt{\log N}} \bigg[ \frac{ f_v\big(V+\frac{w}{N\log N}\big)}{\pi(\mu w -X)}\bigg(1+\frac{O(1)}{(\mu w -X)^2} \bigg)+ \frac{f_v\big(V-\frac{w}{N \log N}\big)}{\pi(\mu w +X)}\bigg(1+\frac{O(1)}{(\mu w+X)^2}\bigg)  \bigg]dw \\
&\hspace{5 mm}+\frac{1}{N\log N}\int_{N\sqrt{\log N}}^\infty \bigg[ \frac{ f_v\big(V+\frac{w}{N\log N}\big)}{\pi(\mu w -X)}\bigg(1+\frac{O(1)}{(\mu w -X)^2} \bigg)+ \frac{f_v\big(V-\frac{w}{N \log N}\big)}{\pi(\mu w +X)}\bigg(1+\frac{O(1)}{(\mu w+X)^2}\bigg)  \bigg] dw
\end{split}
\end{equation}
We now bound the integral over the interval $(N\sqrt{\log N}, \infty)$.
\begin{equation}
\begin{split}
&\frac{1}{N\log N}\int_{N\sqrt{\log N}}^\infty  \frac{ f_v\big(V\pm\frac{w}{N\log N}\big)}{\pi(\mu w \mp X)}\bigg(1+\frac{O(1)}{(\mu w \mp X)^2} \bigg)dw \\
&\le \frac{1}{\pi(\mu N\sqrt{\log N} \mp X)}\bigg(1+\frac{O(1)}{(\mu N\sqrt{\log N}  \mp X)^2} \bigg)\cdot \frac{1}{N\log N}\int_{N\sqrt{\log N}}^\infty  f_v\big(V\pm\frac{w}{N\log N}\big)dw \\ 
&= \frac{1}{\pi(\mu N\sqrt{\log N} \mp X)}\bigg(1+\frac{O(1)}{(\mu N\sqrt{\log N}  \mp X)^2} \bigg) \bigg(\pm  F_v(\infty) \mp F_v\big(V\pm \frac{1}{\sqrt{\log N}}\big)\bigg) \\
&= O\bigg(\frac{1}{N\sqrt{\log N}}\bigg)
\end{split}
\end{equation}
To review, at this point we have shown that as $N$ tends to infinity
\begin{align*}
g(X,V,N) &= \bigg[1 - \frac{1}{N\log N}\int_{\sqrt{\log N}}^{N\sqrt{\log N}}  \frac{ f_v\big(V+\frac{w}{N\log N}\big)}{\pi(\mu w -X)}\bigg(1+\frac{C}{(\mu w -X)^2} \bigg)dw  \\
&\hspace{9mm} \jld{-} \frac{1}{N\log N}\int_{\sqrt{\log N}}^{N\sqrt{\log N}}\frac{f_v\big(V-\frac{w}{N \log N}\big)}{\pi(\mu w +X)}\bigg(1+\frac{C'}{(\mu w+X)^2}\bigg)  dw  + O\bigg(\frac{1}{N\sqrt{\log N}}\bigg)\bigg]^{N-1}
\end{align*}
The remaining integrals over the interval $(\sqrt{\log N},N\sqrt{\log N})$ account for $O(1/N)$ terms which give rise to the exponential, $e^{-\frac{2f_v(V)}{\pi\mu}}$ from the theorem. Note that the correction terms, $C/(\mu w \pm X)^2$, from the scaling in Eq. \eqref{eq:scalings} give rise to corrections of the order $C/\log N$ \seth{in the following display.}
\begin{equation}
\begin{split}
\int_{\sqrt{\log N}}^{N\sqrt{\log N}}  \frac{ f_v\big(V+\frac{w}{N\log N}\big)}{\pi(\mu w -X)}\bigg(1+\frac{O(1)}{(\mu w -X)^2} \bigg)dw  & = \bigg(1+\frac{O(1)}{\log N}\bigg)\int_{\sqrt{\log N}}^{N\sqrt{\log N}}  \frac{ f_v\big(V+\frac{w}{N\log N}\big)}{\pi(\mu w -X)}
\\
\int_{\sqrt{\log N}}^{N\sqrt{\log N}}\frac{f_v\big(V-\frac{w}{N \log N}\big)}{\pi(\mu w +X)}\bigg(1+\frac{O(1)}{(\mu w+X)^2}\bigg)  dw & =\bigg(1+\frac{O(1)}{\log N}\bigg) \int_{\sqrt{\log N}}^{N\sqrt{\log N}}\frac{f_v\big(V-\frac{w}{N \log N}\big)}{\pi(\mu w +X)} dw.
\end{split}
\label{EQ:g_remain}
\end{equation}
\jld{ The correction in Eq. \eqref{EQ:g_remain} therefore will be of order $O(1/\log N)$.} \seth{Now} focus on the integrals
$$\frac{1}{N\log N} \int_{\sqrt{\log N}}^{N\sqrt{\log N}}\frac{f_v\big(V\pm\frac{w}{N \log N}\big)}{\pi(\mu w \mp X)}dw$$
\seth{and deduce a limit.} \seth{We} apply the change of variable $w\mapsto \frac{1}{N\sqrt{\log N}} w$.
\begin{equation}
\frac{1}{N\log N} \int_{\sqrt{\log N}}^{N\sqrt{\log N}}\frac{f_v\big(V\pm\frac{w}{N \log N}\big)}{\pi(\mu w \mp X)}dw = \frac{1}{\sqrt{\log N}} \int_{\frac{1}{N}}^1 \frac{f_v\big(V\pm \frac{w}{\sqrt{\log N}}\big)}{\pi(\jld{\mu} N\sqrt{\log N} w \mp X)} dw 
\end{equation}
By the assumption that $f_v$ is continuous, it is uniformly continuous on the interval $[V,V+1]$. Thus $f_v\big(V\pm \frac{w}{\sqrt{\log N}}\big) = f_v(V) +o(1)$ as $N\to \infty.$ Then,
\begin{equation}
\begin{split}
\frac{1}{\sqrt{\log N}} \int_{\frac{1}{N}}^1 \frac{f_v\big(V\pm \frac{w}{\sqrt{\log N}}\big)}{\pi(\mu N\sqrt{\log N} w \mp X)} dw &= \frac{1}{\sqrt{\log N}} \int_{\frac{1}{N}}^1 \frac{f_v(V)+o(1)}{\pi (\jld{\mu} N\sqrt{\log N} w \mp X)} dw\\
&=\frac{f_v(V)+o(1)}{\pi \mu N \log N}  \log |\pi \mu N \sqrt{\log N} w \mp X|\bigg|_{w=1/N}^{w=1} \\
&=\frac{f_v(V)}{\pi \mu N} + o(1/N)
\end{split} 
\end{equation}
Thus, as $N\to\infty$
\begin{equation}
g(X,V,N)= \bigg[1-\frac{2f_v(V)}{\pi \mu N} + o\bigg(\frac{1}{N}\bigg)\bigg]^{N-1} \to \exp \bigg(-\frac{2f_v(V)}{\pi \mu} \bigg) \jld{\qquad \text{as } N \to \infty,}
\end{equation}
which concludes the proof.

\section{Proof of Theorem 6}
In this case, we take $z_N(t) = t N^2 \log N $.  We need only show the following two requirements of the limit law are satisfied:
\begin{align*}
&\binom{N}{2} P(\tau_{1,2} > tN^2 \log N  ) \to \frac{2}{\pi t} \int_\mathbb{R} f_v^2 (V) dV \\
&N^3 P(\tau_{1,2} > tN^2 \log N , \tau_{2,3} >t N^2 \log N  ) \to 0
\end{align*}
The previous statements follow naturally from the asymptotic results derived in the proof of Thm. \ref{FTC_particle3}.  Again, from Eq. \eqref{Eq:ptau>z} with $z_N(t) = t N^2 \log N.$
\begin{equation}
\begin{split}
\binom{N}{2} P(\tau_{1,2} > tN^2 \log N  )&=\binom{N}{2} \bigg(\int_{\mathbb{R}^2}f_x(X)f_v(V) \int_0^\infty  f_v(w+V)F_x(X-t wN^2 \log N) dwdXdV \\
& \hspace{1 cm} +\int_{\mathbb{R}^2}f_x(X)f_v(V) \int_{-\infty}^0 f_v(w+V) (1-F_x(X-t w N^2 \log N  ))dw dX dV  \bigg) 
\end{split}
\end{equation}
In the proof of Thm. \ref{FTC_particle3}, it was shown that for large $N$, 
\begin{equation}
\begin{split}
\int_0^\infty  f_v(w+V)F_x(X-tN\log N w) dw &= \frac{f_v(V)}{\pi t \jld{N}} + o(N^{-1}) \\
 \int_{-\infty}^0 f_v(w+V) (1-F_x(X-t N\log N w))dw &= \frac{f_v(V)}{\pi t N}+ o(N^{-1})
\label{eq:cauchy_asym}
\end{split}
\end{equation}
Replacing $N \log N$ with $\jld{2} N^2\log N$ \jld{and $t$ with $t/2$} in the previous statement, we have the following relationship for large $N$.
\begin{equation}
\begin{split}
\binom{N}{2} P(\tau_{1,2} > tN^2 \log N  ) & = \binom{N}{2} \int_{\mathbb{R}^2} f_x(X)f_v(V) \bigg(\frac{\jld{4} f_v(V)}{\pi t N^2} + o(N^{-2})\bigg) dX dV \\
& \to \frac{\jld{2}}{\pi t} \int_\mathbb{R} f_v^2(V) dv \jld{\qquad \text{as } N \to \infty.}
\end{split}
\end{equation}
For the second requirement, we make use of the same asymptotic \jld{expansion as in} Eq. \eqref{eq:cauchy_asym}.
\begin{equation}
\begin{split}
&N^3P(\tau_{1,2} >N^2\log N t , \tau_{2,3} >N^2\log N t ) \\
&=N^3 \int_{\mathbb{R}^2}f_x(X)f_v(V) \bigg(\int_0^\infty  f_v(w+V)F_x(X-tw N^2\log N ) dw \\
&\hspace{6 cm}+  \int_{-\infty}^0 f_v(w+V) (1-F_x(X-tw N^2\log N))dw\bigg)^2 dX dV  \\
&= N^3 \int_{\mathbb{R}^2}f_x(X)f_v(V)  \bigg(\frac{\jld{4}}{\pi t N^2} f_v(V) + o(N^{-2})\bigg)^2 dXdV \\
&= \frac{\jld{16}}{\pi\jld{^2} t\jld{^2} N} \int_\mathbb{R}f_v\jld{^3}(V)  dV +o(N^{-1}) \to 0  \jld{\qquad \text{as } N \to \infty,}
\end{split}
\end{equation}
which completes the proof.

\section{Counterexample to the proof of Thm. \ref{FTC_system} when $E|X|<\infty$, $E|X|^{3/2}=\infty$} 

In the proof of Thm. \ref{FTC_system}, \jld{we verified the} two requirements
\begin{align}
&\binom{N}{2}P\bigg(\tau_{1,2} > \binom{N}{2} t \bigg) \to \frac{1}{t} \bigg(\int_{\mathbb{R}}f_v\jld{^2}(V)  dV\bigg) \int_{\mathbb{R}^2} |X-y|f_x(X)f_x(y) dXdy \label{Eq:requirement1} \\
&N^3P\bigg(\tau_{1,2} > \jld{\binom{N}{2} t}, \tau_{2,3}>\binom{N}{2} t\bigg) \to 0
\label{Eq:requirement2}
\end{align}
as $N\to \infty$.  Recall that Eq. \eqref{Eq:requirement1} followed from the assumption that $E|X_1|<\infty$ and the continuity and boundedness of $f_v$.   However, Eq. \eqref{Eq:requirement2} required the additional assumption that $E|X_1|^{3/2}< \infty.$   

Suppose the initial positions of particles are distributed with a density $$f_x(x) = \frac{C_\alpha}{1+|x|^{1+\alpha}}$$ for $\alpha \in (1,3/2]$.  In this case, $E|X_1|<\infty$, but $E|X_1|^{3/2}$ does not exist.  We now proceed to show Eq. \eqref{Eq:requirement2} is not satisfied for this choice of density for the initial position.  From the proof of Thm. \ref{FTC_system}, recall
\begin{align*}
&N^3P\bigg(\tau_{1,2} > z_N(t), \tau_{2,3}>\binom{N}{2} t\bigg) \\
&=N^3\int_{\mathbb{R}^2} f_x(X)f_v(V) \bigg( \int_0^\infty f_v(w+V)F_x\bigg( X-\binom{N}{2}tw\bigg) dw \\
&\hspace{6 cm}+ \int_0^\infty f_v(V-w)\bigg[1-F_x\bigg(X + \binom{N}{2}tw\bigg)\bigg] dw \bigg)^2dX dV
\end{align*}
which is bounded below \jld{by}
\begin{align*}
&N^3\int_{\mathbb{R}^2} f_x(X)f_v(V)\bigg(\int_0^\infty f_v(V-w)\bigg[1-F_x\bigg(X + \binom{N}{2}tw\bigg)\bigg] dw   \bigg)^2dXdV \\
&=N^3\int_{\mathbb{R}} f_v(V)\int_0^\infty f_v(V-w_1)\int_0^\infty f_v(V-w_2) \\
&\hspace{3 cm}\times
\int_{\mathbb{R}}f_x(X)\bigg[1-F_x\bigg(X+\large{\binom{N}{2}}tw_1\bigg)\bigg]\bigg[1-F_x\bigg(X+\binom{N}{2}tw_2\bigg)\bigg]dXdw_1dw_2dV
\end{align*}
Furthermore, we can restrict the region of integration to $w_1,w_2\in (1,2)$ and $X< -N(N-1) t$.  In this case, $X+\binom{N}{2}tw\le 0$. Thus, by the symmetry of $f_x$, it follows that $1-F_x(X+\binom{N}{2}tw) \ge 1/2$. Then,
\begin{align*}
&N^3P\bigg(\tau_{1,2} > z_N(t), \tau_{2,3}>\binom{N}{2} t\bigg) \\
&\ge N^3\int_{\mathbb{R}}f_v(V)\int_0^\infty f_v(V-w_1)\int_0^\infty f_v(V-w_2) \\
&\hspace{3 cm}\times
\int_{\mathbb{R}}f_x(X)\bigg[1-F_x\bigg(X+\large{\binom{N}{2}}tw_1\bigg)\bigg]\bigg[1-F_x\bigg(X+\binom{N}{2}tw_2\bigg)\bigg]dXdw_1dw_2dV \\
&\ge N^3 \int_{-\infty}^{-N(N-1)t} f_x(X) \cdot \frac{1}{2}\cdot\frac{1}{2} dX \cdot \int_\mathbb{R}f_v(V)\int_1^2 f_v(V-w_1) dw_1 \int_1^2f_v(V-w_2) dw_2 dV \\
& = CN^3\int_{-\infty}^{-N(N-1)t}f_x(X)dX
\end{align*}
For $z \ll -1$, $\int_{-\infty}^{z}f_x(z) dz \approx \frac{1}{|z|^\alpha}.$  In this case, we have assumed $\alpha\in (1,3/2]$ so that as $N\to\infty$, 
$$CN^3 \int_{-\infty}^{-N(N-1)t}f_x(X)dX \to\frac{C}{t^\alpha} \frac{N^3}{N^{2\alpha}}> 0.$$
Thus, Eq. \eqref{Eq:requirement2} is not satisfied. As such, the proof of Thm. \ref{FTC_system} does not apply for this choice of density for the initial position of particles.  


\renewcommand*{\bibname}{\section*{References}}
\bibliographystyle{plain}

\begin{thebibliography}{99}

\bibitem{angel} Amir, G., Omer Angel, O. and Valko, B., { The TASEP Speed Process}, \emph{Annals of Probability}, {\bf 39}, 4,  2011

\bibitem{Angel2012} Angel, O., Gorin, V. and Holroyd, A.E., {A pattern theorem for random sorting networks}, \emph{Electron. J. Probab}, {\bf 17}, 99, 2012

\bibitem{Angel2007} Angel, O., Holroyd, A. E., Romik, D., and Vir{\'{a}}g, B{\'{a}}lint, {Random sorting networks}, \emph{Adv. Math. (N. Y.)}, {\bf 215}, 2, 2007

\bibitem{Angel2009} Angel, O., Holroyd, A.E. and Romik, D.,
{The oriented swap process}, \emph{Ann. Probab.}, {\bf 37}, 5, 2009

\bibitem{Barbour_Eagleson} Barbour, A.D. and Eagleson, G.K., {Poisson approximation for some statistics based on exchangeable trials}, \emph{Advances in applied probability}, 1983

\bibitem{levermore} Bardos, C., Gamba, I.M., Golse, F., and Levermore, C.D., {Global solutions of the Boltzmann Equation Over $\mathbb{R}^d$ near Global Maxwellians with Small Mass}, \emph{Commun. Math. Phys.}, {\bf 346}, 2, 435--467, {2016}

\bibitem{berman} Berman, S.M., {Limiting distributions of the maximum term in sequences of dependent random variables}, \emph{Ann. Math. Stat.}, {\bf 33}, 3, { 1962}

\bibitem{Chu} Chu, J. T., {On the distribution of the sample median}, \emph{Ann. Math. Statist.}, {\bf 26}, 1, 112-116, 1955

\bibitem{Czirok99} Czir\'ok, A.,  Barab\'asi,  A.-L., and Vicsek, T., {Collective Motion of Self-Propelled Particles: Kinetic Phase Transition in One Dimension}, \emph{Phys. Rev. Lett.}, {\bf 82}, 209-212, 1999.



\bibitem{Helbing96} Helbing, D., {Gas-kinetic derivation of Navier-Stokes-like traffic equations}, \emph{Phys. Rev. E}, {\bf 53}, 2366-2381, 1996



\bibitem{lao} Lao, W. and Mayer, M., {U-max-statistics}, \emph{Journal of Multivariate Analysis}, {\bf 99}, 2008

\bibitem{Ligget1985} Ligget, T., \emph{Interacting Particle Systems: The Exclusion Process}, Springer, Berlin, 1985


\bibitem{Roedenbeck98} R\"odenbeck, C., K\"arger J., and Hahn, K., {Calculating exact propagators in single-file systems via the reflection principle}, \emph{Phys. Rev. E}, {\bf 57}, 4382-4397, 1998

\bibitem{silverman} Silverman, B. and Brown, T., {Short Distances, Flat Triangles, and Poisson Limits},  \emph{Journal of Applied Probability}, {\bf 15}, 4, 1978





\end{thebibliography}

\end{document}